\newcommand{\appendixformat}{%
	\titleformat{\section}
	{\normalfont\Large\bfseries}
	{Appendix \thesection}
	{1em}{}
}
\theoremstyle{plain}
\newtheorem{theorem}{Theorem}[section]
\newtheorem{lemma}[theorem]{Lemma}
\newtheorem{proposition}[theorem]{Proposition}
\newtheorem{corollary}[theorem]{Corollary} 
\theoremstyle{definition}
\theoremstyle{remark}
\newtheorem{remark}[theorem]{Remark}
\definecolor{blueColor}{rgb}{0,0.2,0.6}
\def\vx{{\bf{x}}}  
\def\vX{{\bf{X}}}
\def\vy{{\bf{y}}} 
\def\vv{{\bf{v}}} 
\def\vB{{\bf{B}}} 
\def\bR{\mathbb{R}}
\def\bE{\mathbb{E}}
\def\cU{\mathcal{U}}
\def\cO{\mathcal{O}}
\def\freq{\mathsf{f}}
\def\sfE{\mathsf{E}}
\def\cF{\mathcal{F}}
\def\Re{\mathrm{Re}}
\title{Compartmental-reaction diffusion framework for microscale dynamics of extracellular serotonin in brain tissue}
\author{Merlin Pelz\thanks{Department of Mathematics, University of Minnesota, Minneapolis, MN, USA
		(mpelz@umn.edu, ghandy@umn.edu).}
	\and Skirmantas Janu\v{s}onis\thanks{Department of Psychological \& Brain Sciences, University of California Santa Barbara, Santa Barbara, CA, USA (janusonis@ucsb.edu).}
	\and Gregory Handy\footnotemark[1]}
\begin{document}

\maketitle

\begin{abstract}
Serotonin (5-hydroxytryptamine) is a major neurotransmitter whose release from densely distributed serotonergic varicosities shapes plasticity and network integration throughout the brain, yet its extracellular dynamics remain poorly understood due to the sub-micrometer and millisecond scales involved. We develop a mathematical framework that captures the coupled reaction-diffusion processes governing serotonin signaling in realistic tissue microenvironments. Formulating a two-dimensional compartmental-reaction diffusion system, we use strong localized perturbation theory to derive an asymptotically equivalent set of nonlinear integro-ODEs that preserve diffusive coupling while enabling efficient computation. We analyze period-averaged steady states, establish bounds using Jensen’s inequality, obtain closed-form spike maxima and minima, and implement a fast marching-scheme solver based on sum-of-exponentials kernels. These mathematical results provide quantitative insight into how firing frequency, varicosity geometry, and uptake kinetics shape extracellular serotonin. The model reveals that varicosities form diffusively coupled microdomains capable of generating spatial “serotonin reservoirs,” clarifies aspects of local versus volume transmission, and yields predictions relevant to interpreting high-resolution serotonin imaging and the actions of selective serotonin-reuptake inhibitors.
\end{abstract}

\section{Introduction}

Serotonin (5-hydroxytryptamine, 5-HT) is a major neurotransmitter in the central nervous systems of all vertebrates, from ancient fish lineages to mammals ~\cite{jacobs1992structure,awasthi_comprehensive_2021,moroz_neural_2021,Janusonis2025}. It is delivered to the entire brain by axons, thin and long structures that can extend from serotonergic cell bodies to any brain region. Within these regions, serotonergic axons typically travel in unique, random-walk-like trajectories~\cite{Janusonis2020,mays2023experimental} and can aggregate at extremely high densities~\cite{Linley2013,awasthi_comprehensive_2021,mays2023experimental} (Figure 1A). Because of these properties, serotonergic axons are often referred to as serotonergic fibers. Geometrically, they often resemble fibers in a cotton ball, where each fiber has an unpredictable, unique, and continuously changing orientation. Brain networks that allow topological, graph-like descriptions (“neuron A connects to neuron B”)~\cite{Lynn2019} are physically embedded in these serotonergic fiber meshworks. Both systems are axon-based, but the latter one cannot be described topologically, without a reference to the physical space. The time-scale of morphological fiber growth is much longer than that of serotonin signaling, and hence the fiber arrangement can be treated as being in quasi-equilibrium throughout this work.

The fundamental role of serotonin signaling remains elusive, and serotonergic fibers have no analogs in current artificial neural networks (ANNs)~\cite{Lee2022b}. An early review has noted that ``although serotonin has been implicated in a vast array of physiological and behavioral processes in vertebrates, it appears to be essential for none of them''~\cite{jacobs1992structure}. However, recent research suggests that serotonin signaling fundamentally supports neuroplasticity~\cite{Lesch2012,Teissier2017}, which may explain its involvement in virtually all neuroprocesses. Furthermore, dysfunctional serotonin signaling has been associated with complex mental disorders such as depression, anxiety, obsessive-compulsive disorder, and post-traumatic stress disorder~\cite{pourhamzeh_roles_2022,Page2024,Faraone2025}; all of them have behavioral aspects that can be viewed as maladaptive persistence. In particular, selective serotonin-reuptake inhibitors (SSRIs), widely prescribed antidepressants, directly alter serotonergic signaling, likely enhancing neuroplasticity~\cite{Page2024}.

Serotonergic fiber densities vary across brain regions~\cite{Awasthi2021}, and serotonergic neurons can express a wide range of transcriptional programs~\cite{Okaty2019, Ren2019}. However, serotonergic fibers share many similarities. A typical serotonergic fiber is thread-like, with a diameter that may approach the limit of optical resolution (around \SI{200}{\nano\meter}), but it also contains short dilated segments, known as varicosities~\cite{Maddaloni2017,Gianni2023a}. Varicosities can have a much larger diameter, often producing a ``beads on a string'' impression in light microscopy (Figure~\ref{fig:serotoninsystem}A-C)~\cite{Maddaloni2017,hingorani2022high,haiman2023qualitative}. They can measure up to \SI{5}{\micro\meter} across in cell culture preparations~\cite{Benzekhroufa2009a,hingorani2022high} but are typically smaller in natural brain tissue, around \SI{1}{\micro\meter} or less in diameter~\cite{Maddaloni2017,haiman2023qualitative}. Varicosities have been primarily studied in fixed tissue; in the living brain, they are likely to be dynamic and change in size during brain development~\cite{Maddaloni2017} and in response to external perturbations, such as antidepressants~\cite{Nazzi2019}. 

Serotonergic varicosities are considered to be the primary sites of serotonin release into the extracellular space, a pulse-like process triggered by action potentials that get initiated at the cell body and travel down the fiber~\cite{Benzekhroufa2009a}. Varicosities are also thought to be the primary sites of serotonin reuptake, whereby extracellular serotonin is continuously ``pumped'' back into the fiber by the serotonin transporter (SERT), a specialized transmembrane protein~\cite{Henke2018}. Acting together, serotonin release and reuptake (Figure~\ref{fig:serotoninsystem}D, left) create locally-specific serotonin concentrations in the extracellular space (ECS), which are thought to be in the nanomolar range~\cite{Adell2002a,Zhao2023}. These concentrations can fluctuate in time~\cite{Cooper2025}, probably on several temporal scales.  
\begin{figure}[!htbp]
    \centering
    \includegraphics[width=0.75\textwidth]{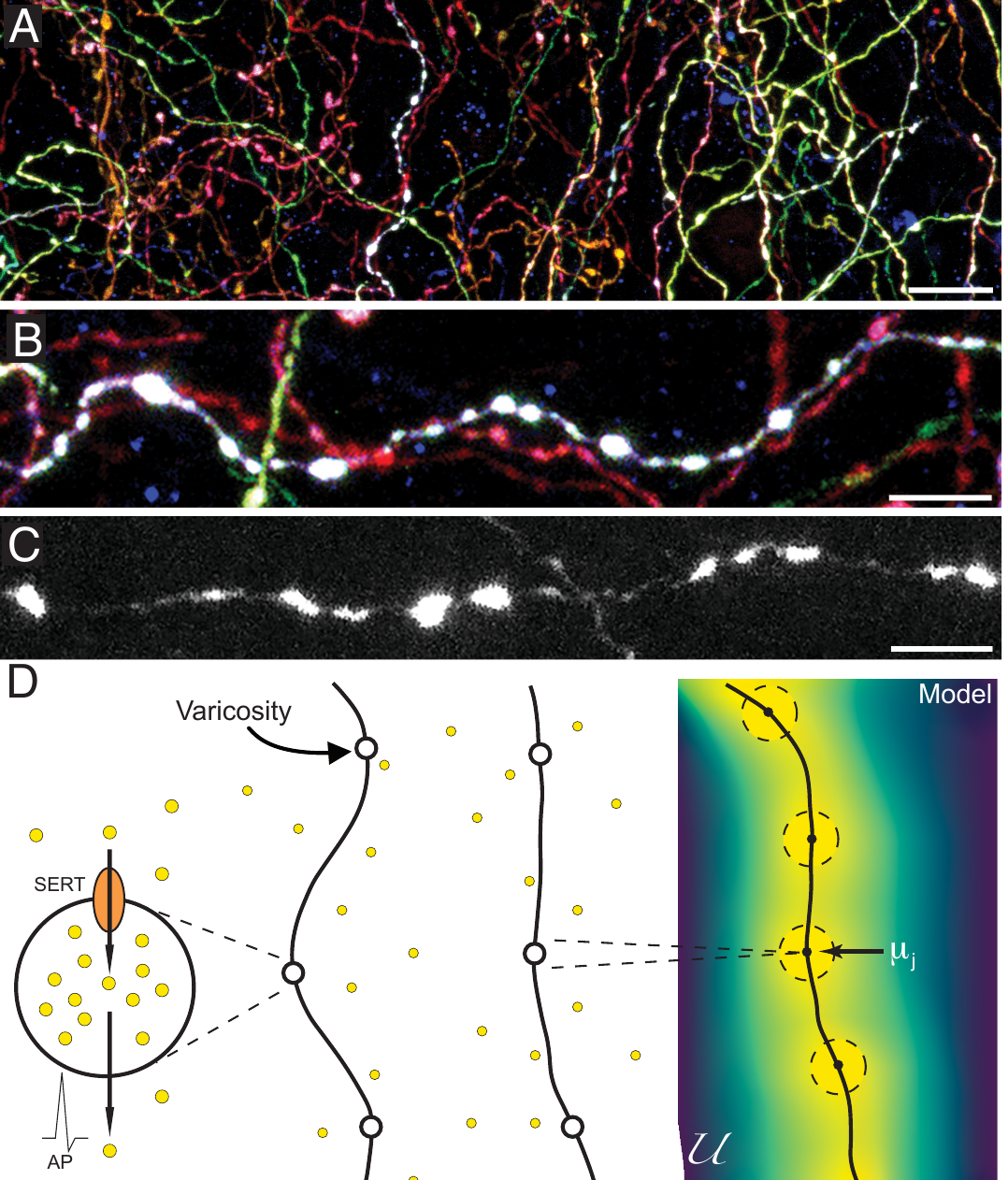}
    \caption{A: A typical meshwork of serotonergic fibers in the mouse forebrain (the hippocampus region). Individual fibers are labeled with different colors with the Brainbow transgenic technique and are shown in a maximum-intensity projection of a \SI{40}{\micro\meter}-thick section (scale bar = \SI{10}{\micro\meter}). B: A serotonergic fiber in the mouse forebrain (scale bar = \SI{2}{\micro\meter}). C: A serotonergic fiber in the forebrain of the Pacific angelshark (scale bar = \SI{2}{\micro\meter}). In A-C, the varicosities (dilated fiber segments) are clearly visible. The fibers were visualized with serotonin immunohistochemistry and imaged with confocal microscopy. These previously unpublished images were acquired in studies described in~\cite{mays2023experimental,haiman2023qualitative,Janusonis2025}. D: Schematic of the system at various scales. Left shows a typical varicosity that releases serotonin in response to action potentials (AP) and continuously takes up extracellular serotonin via SERT. Middle shows the diffusion of individual molecules (yellow circles) in the ECS between fibers and varicosities (white circles). Right shows the continuum limit of this setup, with well-mixed yellow varicosity neighborhoods and concentration gradients in the ECS shortly after the fiber has fired (blue: low concentration, yellow: high concentration).}
    \label{fig:serotoninsystem}
\end{figure}

The spatiotemporal dynamics of serotonin in the ECS may be a key mechanism that supports the brain’s capability to continuously learn and adapt while preserving global circuit integration. This balance remains a major challenge in current ANN architectures~\cite{Dohare2024}. The ECS is a geometrically complex but continuous system of narrow spaces that connects the entire brain~\cite{Hrabetova2018}. A long-standing question in neuroscience is whether the serotonin released by a varicosity diffuses away through the ECS to affect multiple cells (in a process similar to ``public broadcasting'') or whether it remains physically confined to one synaptic cleft (akin to a ``private line''). Traditionally, serotonin signaling has been considered to be of the first kind (called ``volume neurotransmission''), implying that serotonergic varicosities do not form classical synapses. However, evidence suggests that serotonin may employ both signaling modes~\cite{Papadopoulos1987,Gianni2023a}. Furthermore, a recent study has shown that the two transmission modes may be only the extremes of a gradient that is controlled by the release-reuptake balance, an inherently dynamic process~\cite{Zhang2025b}. A local serotonin concentration can be detected by all brain cells in the area (neurons, glia, endothelium) that express various subsets of serotonin receptors~\cite{Bockaert2006}. Serotonin can also enter cells and their membrane-bound organelles (e.g., mitochondria) by facilitated diffusion, but this process is rarely considered in neurotransmission~\cite{Andrews2022a}.      

A comprehensive understanding of serotonin dynamics in brain tissue requires experimental methods that support spatial analyses with sub-micrometer resolution and temporal analyses with sub-second temporal resolution. In addition, they should selectively detect serotonin in the nanomolar range (with other molecules present in the tissue). This set of requirements is very challenging and cannot be met by any of the currently available techniques, including microdialysis and voltammetry. Recently, several genetically-engineered probes (e.g., GRAB$_\text{5HT}$, iSeroSnFR) have been produced that have superior spatial and temporal resolution, with the potential to revolutionize the field~\cite{Unger2020,Wan2021a,Zhang2025b}. However, data obtained with these methods remain incomplete and require mathematical models for interpretations and quantitative predictions. In addition to theoretical rigor, these models should be sufficiently rich to capture the irreducible biological complexity.   

This present study builds on the prior reaction–diffusion (RD) theory to develop a new framework for serotonin dynamics, addressing this key gap in understanding. RD systems have been extensively used in neuroscience where the self-organization and function of systems often depend on their spatial properties~\cite{ermentrout2010mathematical}. An RD equation comprises both kinetic terms for reaction dynamics (usually of chemical species), as well as a diffusion term through which Brownian diffusion (of the same chemical species) is incorporated. RD systems consist of two or more RD equations in which the kinetic terms capture the interactions among the species at each point in the domain, typically with one or more nonlinear interactions. 

Turing's 1952 work on tissue morphogenesis provided an initial explanation to how spatial patterns and symmetry breaking can arise in biological systems and sparked a wave of research on the interplay of reactions and diffusion as a mechanism for self-organization \cite{Turing1952}. A key advance came from Gierer and Meinhardt who showed the importance of coupling a slowly diffusing activator and a rapidly diffusing inhibitor, where the activator stimulates the production of both molecules while the inhibitor suppresses their production~\cite{gierer1972theory, gierer1981generation}. When these species interact across a domain while diffusing on different timescales, their interactions can give rise to complex spatiotemporal patterns of alternating high and low concentrations, as in the model from~\cite{Granero1977}, where the inhibitor must diffuse faster to produce such effects. In mathematical neuroscience, continuous RD systems (e.g., the systems of differential equations in \cite{kopell1973plane, bard1981waves, maginu1985geometrical}) and their discretized counterparts (e.g., the model in \cite{carpio2000wave}) have been used to describe the brain as an excitable medium, revealing characteristic patterns such as traveling waves. Recent work has derived a two-component RD model for the striatum, with cholinergic interneurons as the activator and dopamine as the inhibitor, showing that dopamine traveling waves can be captured by a continuous-space framework despite the compartmentalized structure of this brain region~\cite{matityahu2023acetylcholine}.

The RD framework has recently been extended to account for diffusion between dynamically active but spatially fixed boundaries, where nonlinear reactions occur directly on domain interfaces \cite{gomez2007self, levine2005membrane}. This work has led to the development of compartmental-reaction diffusion (CRD) systems \cite{rauch2004role, iyaniwura2021synchronous, gomez2021pattern}, culminating in theories of quasi–steady-state pattern formation in one and two dimensions \cite{pelz2023emergence, pelz2023symmetry} and diffusively coupled synchronized oscillators in two dimensions \cite{pelzward2025synchronized}. CRD systems capture the natural compartmentalization of biological tissues, where dynamically active compartments with nonlinear reaction kinetics (often activator–inhibitor) are coupled through extracellular diffusion and boundaries with species-specific permeabilities. Patterns emerge when compartments settle into distinct quasi–steady-states (spatial patterns) or oscillatory states that synchronize within subgroups (spatiotemporal patterns). Importantly, CRD systems generate patterns under less restrictive conditions than classical RD models, which typically require unrealistically large diffusive timescale differences between species. While alternative approaches such as adding immobile chemical species \cite{pearson1992pattern, korvasova2015investigating} or fine-tuning kinetics \cite{pearson1989turing} have been explored, their biological generality remains uncertain.

In the present article, we consider a 2-D CRD system, where each compartment represents the neighborhood of a serotonergic varicosity and the coupling between compartments occurs through bulk diffusion in the extracellular space. In an important theoretical modification of this framework, serotonin is released into a subset of these compartments, in discrete temporal (smoothed) kicks with certain frequencies, and is absorbed by all compartments continuously in time. Hence, a mathematical limit is used to make the compartment boundaries fully permeable, so that the CRD system becomes a system of discrete diffusively coupled nonlinear ordinary differential equations for all varicosities with their neighborhoods. In this way, highly accurate approximations of the extracellular and local serotonin concentrations can be mathematically obtained for any placement of fibers and their varicosities by setting all parameters to biologically realistic values. Our goal is to provide a foundational description of this system, while developing an exhaustive toolset, through which, in the longer term, spatiotemporal serotonin pattern formation can be found when additional chemical species that locally react with serotonin are introduced. Our 2-D CRD system can be solved numerically and with low computational cost.

This article is organized as follows. Section \ref{sec:model} introduces a two-dimensional CRD system modeling a serotonergic fiber network, first in dimensional form with biologically realistic parameters and then in non-dimensional form. Section \ref{sec:main} refines this system by taking the limit of infinite permeabilities of the varicosity neighborhoods and averaging over variable firing periods, which allows us to compute long-term steady-state serotonin concentrations, when averaged over each firing period, both locally and across space, as well as to estimate the peak level of serotonin within the neighborhood of a firing varicosity. In Section \ref{sec:arrangements}, we illustrate these results using systems of five diffusively coupled fibers, each with one to three varicosities, where subsets of varicosities fire at biologically relevant frequencies. Section \ref{sec:comparison_linear} compares Michaelis–Menten uptake kinetics \cite{handy2021revising} with linear uptake. Finally, we end with a discussion of our developed framework, results, and biological implications in Section \ref{sec:discussion}.

\section{Model setup}
\label{sec:model}
We start by considering a thin slice of neural tissue that contains fibers that have discrete varicosity sites that both release and continuously take up serotonin (Figure~\ref{fig:serotoninsystem}D). Specifically, we consider $k\in\{1, ..., M\}$ 1-D fibers placed in $\mathbb{R}^2$ with a minimum distance of $L_\cF$ between them. Along these fibers, $N$ discrete varicosity sites are placed at points $\vX_j\in \mathbb{R}^2$, $j\in\{1, ..., N\}$, with a minimum distance of $L$ between them. We let $\cF_k$ denote the set of varicosity sites that are located along fiber $k$. For example, if $\vX_1$, $\vX_6$, and $\vX_{11}$ are the varicosity sites along the $k^\text{th}$ fiber, then $\cF_k = \{\vX_1, \vX_6, \vX_{11}\}$. 

Around each discrete varicosity site, we put disks of radius $R_0$ in $\mathbb{R}^2$, denoted as $\Omega_j$. These disk regions, when assumed to be well-mixed, allow for mathematically incorporating various kinds of local behavior close to the varicosity sites. As a result, we can describe the total amount of serotonin at time $\tau$ inside disk $\Omega_j$, denoted as $\mu_j(\tau)$, by the following ordinary differential equation (ODE)
\begin{eqnarray} \label{eq:mujODE}
    && \frac{d\mu_j}{d\tau} = F_\text{uptake}(\mu_j) + F_{\text{release},j}(\tau) + C(\mu_j,\mathcal{U}),
\end{eqnarray}
where the first two terms describe the uptake and release dynamics occurring within each $\Omega_j$, respectively, and the last term, $C(\mu_j, \mathcal{U})$, connects the disk $\Omega_j$ to the diffusing serotonin concentration $\mathcal{U}(\tau, \vX)$ in the surrounding extracellular space. 

Serotonin uptake is mediated by membrane-bound serotonin transporters that require the formation of a transporter-ligand complex and therefore can saturate. We follow the result of~\cite{handy2021revising,Yousofvand2025} and model this with a Michaelis-Menten reaction term 
\begin{eqnarray*}
    F_\text{uptake}(\mu_j) := -\frac{\tilde{V}\mu_j}{\tilde{K} + \mu_j},
\end{eqnarray*}
where $\tilde{V}$ is the maximal uptake rate and $\tilde{K}$ is the Michaelis-Menten half-saturation constant. When fiber $\cF_k$ fires at time $T_l^{(k)}$ with $l$ indexing the firing event, serotonin is quickly released from membrane-fused synaptic vesicles, which we model as
\begin{eqnarray*}
    F_{\text{release},j}(\tau) := r\mu_c \sum_l k_R\delta_\theta\left(k_R\tau - k_R \,T_l^{(k(j))}\right),
\end{eqnarray*}
with $k_R$ denoting a representative dimensional reaction rate, $r\mu_c$ expressing the total amount of serotonin released into the system with each pulse, $\mu_c$ representing the typical magnitude of serotonin in a varicosity neighborhood, and $k(j)$ selecting the fiber number to varicosity $j$. To incorporate the quick release of serotonin into the compartment, we choose $\delta_\theta$ to be a Gaussian probability density function with a small standard deviation $\theta$, which scales the duration of a single firing event,
\begin{equation*}
    \delta_\theta(k_R\tau) := \frac{1}{\sqrt{2\pi\theta^2}} \exp\left(-\frac{k_R^2}{2\theta^2} \tau^2\right),
\end{equation*}
though, importantly, the results of this work hold as long as the function $\delta_\theta$ is sufficiently smooth. Here, we choose the Gaussian profile for simplicity and analytical convenience.

Assuming linear exchange at the boundary of the disk with \emph{influx rate times disk area per disk boundary length} $\beta_1$ and \emph{efflux rate per disk boundary length} $\beta_2$, the last term on the right-hand side of \eqref{eq:mujODE} becomes
\begin{eqnarray*}
    C(\mu_j, \mathcal{U}) = \int_{\partial \Omega_j} (\beta_1 \cU(\tau, \vX) - \beta_2 \mu_j(\tau)) \; dS_\vX,
\end{eqnarray*}
where we integrate over all exchange across the boundary of $\Omega_j$ denoted by $\partial \Omega_j$.

To complete our mathematical formulation, we need to specify the dynamics in the extracellular space between the disks. Here, we assume that serotonin simply diffuses and degrades, which will diffusively couple the local serotonin behavior in all $\Omega_j$, $j \in \{1, ..., N\}$. The natural partial differential equation (PDE) and boundary condition for this is
\begin{eqnarray*}
    && \partial_\tau \mathcal{U} = D_B \Delta \mathcal{U} - k_B \mathcal{U}, \quad \vX \in \bR^2\backslash \bigcup_{j=1}^N \Omega_j, \\
    && D_B \partial_{n_\vX} \mathcal{U} = \beta_1 \mathcal{U} - \beta_2 \mu_j, \quad \vX \in \partial \Omega_j,
\end{eqnarray*}
where $D_B$ denotes the diffusion coefficient, $k_B > 0$ the degradation rate, and $n_\vX$ is the outward normal vector to $\Omega_j$.

In total, the corresponding compartmental-reaction diffusion system described above, assuming there is no extracellular serotonin concentration initially, is 
\begin{subequations} \label{eq:dimPDEsystem}
\begin{eqnarray}
    \partial_\tau \mathcal{U} &=& D_B \Delta \mathcal{U} - k_B \mathcal{U}, \quad \tau > 0, \quad  \vX \in \bR^2\backslash \bigcup_{j=1}^N \Omega_j, \\
    \mathcal{U}(0, \vX) &=& 0, \quad \mathcal{U}(\tau, \vX) \to 0 \text{ as } |\vX| \to \infty, \\ [4pt]
    D_B \partial_{n_\vX} \mathcal{U} &=& \beta_1 \mathcal{U} - \beta_2 \mu_j, \qquad \vX \in \partial \Omega_j, \\ [4pt]
    \frac{d\mu_j}{d\tau} &=& F(\mu_j,\tau) +\int_{\partial \Omega_j} (\beta_1 \mathcal{U}(\tau, \vX) - \beta_2 \mu_j(\tau)) \; dS_\vX, \quad j\in\{1, ..., N\},
\end{eqnarray}
with intra-disk (intra-compartmental) kinetics 
\begin{equation}\label{eq:intrakin_simple_dim}
    F(\mu_j,\tau) := -\frac{\tilde{V}\mu_j}{\tilde{K} + \mu_j}+r \mu_c \sum_l k_R\,\delta_\theta(k_R\tau - k_R\, T_l^{(k(j))}) \quad \text{for} \quad \vX_j \in {\cF_k}, k \in \{1, ..., M\},
\end{equation}
and release function
\begin{equation} \label{eq:delta_release}
    \delta_\theta(k_R \tau) = \frac{1}{\sqrt{2\pi\theta^2}} \exp\left({-\frac{k_R^2}{2\theta^2}\tau^2}\right).
\end{equation}
\end{subequations}
\subsection{Biological parameters} \label{subsec:bparams}
The biological parameter values are based on available mouse and human data. These values continue to be refined with novel experimental techniques, and some estimates are more accurate than others (as discussed below). However, the available quantitative information is sufficient for capturing the key aspects of the microscale dynamics of the brain serotonergic system. We note that the properties of the brain serotonergic system have been highly conserved in vertebrate evolution ~\cite{Janusonis2025, moroz_neural_2021}.

The density of serotonergic fibers (including varicosities) varies in different brain regions~\cite{Awasthi2021}. The diameter of a typical varicosity is around 1 µm and varicosities are often spaced by 2-\SI{4}{\micro\meter} along the fiber~\cite{haiman2023qualitative,mays2023experimental}. The spacing of serotonergic fibers in a 2-D projection of a \SI{40}{\micro\meter}-section can be approximated by a regular grid with lines separated by around \SI{5}{\micro\meter}. This estimate is based on the mouse amygdala, a high fiber-density region~\cite{mays2023experimental}, and produces an estimate of around $2.5$-$5.0\cdot10^6$ varicosities in \SI{1}{\milli\meter}$^3$ of neural tissue. It is in good agreement with an early estimate obtained in the rat cerebral cortex (around $6\cdot10^6$ varicosities/\SI{}{\milli\meter}$^3$)~\cite{jacobs1992structure} and with a recent estimate in the subthalamic nucleus of the cynomolgus macaque \textit{(Macaca fascicularis)~\cite{delmas2025serotonin}.} 

In the model, we assume that each varicosity is a point surrounded by a well-mixed neighborhood of the radius $R_0=$ \SI{0.3}{\micro\meter} (Figure~\ref{fig:serotoninsystem}D, right), and that any two varicosity neighborhoods (on the same or different fibers) are spaced at least by $L=$ \SI{4}{\micro\meter}. Since the model is two-dimensional, the third dimension (``tissue thickness") is considered to be negligible with respect to the relevant dynamics. In physical terms, we assume it to be approximately \SI{1}{\micro\meter}. In 2-D-projections at this thickness, serotonergic fibers become much more separated compared to 2-D-projections of \SI{40}{\micro\meter}-thick sections. Based on estimates in digital \SI{1}{\micro\meter}-thick substacks of confocal images of the mouse amygdala~\cite{mays2023experimental}, we set this value to $L_\cF =$ \SI{16}{\micro\meter}.

Serotonergic neurons often fire at around 1-3 Hz and can maintain a relatively regular interspike interval~\cite{jacobs1992structure}. However, they can produce considerably higher firing rates that have been associated with alertness, sensory stimuli, and positive reinforcement~\cite{Zhang2025b}.

When a serotonergic neuron fires, a total amount of $r \cdot \mu_c$ of serotonin is pulse-released into a small ball (neighborhood) surrounding each varicosity. Inside a varicosity, serotonin molecules are packed into many synaptic vesicles, and each synaptic vesicle may contain around 5,000-60,000 molecules~\cite{Redpath2022}. However, only 0 to 2 vesicles are released by a typical presynaptic bouton (or a varicosity) in response to an action potential~\cite{Maschi2020,Zhang2025b}. In the model, we assume that each varicosity releases around 15,000 serotonin molecules, which corresponds to $250 \cdot 10^{-22}$ moles. The timescale of this release is in the millisecond range~\cite{Sudhof2013,Chanaday2018}; in the model, we set it at $4\theta/k_R =0.0064$ s. 

The reuptake of serotonin from the ECS into varicosities is driven by SERT and depends on two parameters, the Michaelis-Menten constant and the maximal reuptake velocity. The Michaelis-Menten constant is the serotonin amount at which SERT molecules are half-saturated. Because it reflects the inherent affinity of serotonin to SERT, it can be accurately estimated in blood platelets that are easily accessible and express the same SERT. Based on previous studies, this values falls in the range of 0.6-0.9 \SI{}{\micro M}~\cite{anderson2002serotonin,brown2023palmitoylation}. In our model, these molecules are located only in the small neighborhoods around the varicosities, and must have the same units as $\mu_j$. As a result, we must account for the volume of these neighborhoods when setting the value of this half-saturation constant. Hence, we choose $\tilde{K}\sim 7.5\cdot 10^{-7}$ M$\cdot Vol \approx 0.85\cdot 10^{-22}$ mol, where the volume of the 3-D sphere of our 2-D projected disk about a varicosity is $Vol= \frac{4}{3}\pi(0.3^3\cdot 10^{-15})$ L. 

The second reuptake parameter, the maximal reuptake velocity $\tilde{V}$, is more difficult to estimate from platelet data. This quantity depends on the number of available SERT molecules in the varicosity membrane and can in theory vary not only across tissues but also across developmental ages, brain regions, and exposures to pharmacological agents. However, serotonergic varicosities and blood platelets do share remarkable similarities~\cite{Janusonis2014}. In human platelets, $\tilde{V}$ has been estimated to be around 1.8 nmol/min per $10^9$ platelets, or around $3 \cdot 10^{-20}$ mol/s in one platelet~\cite{franke2010serotonin}. Another indirect estimate can be obtained from a study that has expressed the human SERT in an in vitro system~\cite{brown2023palmitoylation}. This study has reported a $\tilde{V}$ value of 27 pmol/(min$\cdot$mg [protein]). Assuming that brain tissue has 10\% protein (by weight)~\cite{Naber1979} and that the specific weight of brain tissue is around 1 g/cm$^3$~\cite{Barber1970}, this value corresponds to $4.5\cdot 10^{-14}$ mol/s in 1 mm$^3$ of brain tissue. Further assuming that this volume contains around $6 \cdot 10^6$ serotonergic varicosities, this estimate becomes around $0.75 \cdot 10^{-20}$ mol/s in one varicosity. The two values, obtained indirectly in very different systems, are of the same order and quite similar. In this work, we set $\tilde{V} = 2 \cdot 10^{-20}$ mol/s.

Lastly, we estimate two parameters governing serotonin dynamics in the ECS, the serotonin diffusivity ($D_B$) and the serotonin degradation rate ($k_B$). We base this former value on published estimates that have factored in the ECS tortuosity \cite{rice1985diffusion, hentall2006spatial}, thus yielding an ``effective'' diffusivity, and set it to $0.3 \cdot 10^{-5}$ cm$^2$/s ($300$~\SI{}{\micro\meter}$^2$/s). 

In addition to SERT, serotonin can be removed from the ECS by non-serotonergic cells (e.g., astrocytes)~\cite{Zhao2023} and/or broken down outside serotonergic neurons. This removal rate is difficult to estimate experimentally. Since this is the only ``free" parameter in the model, we set it to a small value ($k_B$ = 0.75 s$^{-1}$) relative to the varicosity reuptake rate. With this value, our model yields theoretical concentrations that closely approximate experimentally observed serotonin concentrations in the ECS (0.2-70 nM)~\cite{Zhao2023}. We note that these degraded serotonin molecules are continuously replaced by synthesized new molecules. Brain serotonin synthesis rates have been estimated by experimental studies~\cite{Nishizawa1997}, but they also include serotonin that is broken down inside serotonergic neurons by monoamine oxidases~\cite{Zhao2023}, as part of the normal recycling process. We effectively capture this undepletable serotonin pool by assuming a constant serotonin amount released per action potential.           

All parameter choices are summarized in Table \ref{tab:parameters}.

\begin{table}[h!]
\footnotesize
\caption{The default parameter values for the dimensional system (\ref{eq:dimPDEsystem}).} \label{tab:parameters}
\centering
\begin{tabular}{||l l l l||} 
 \hline
 Parameter & Description & Value & Unit \\ [0.5ex] 
 \hline\hline
 $R_0$ & radius of varicosity neighborhood & 0.3 & \text{µm}  \\ 
 \hline
 $L$ & Minimum varicosity neighborhood separation & 4 & \text{µm} \\
 \hline
  $L_\cF$ & Parallel fiber separation & 16 & \text{µm} \\
 \hline
 $D_B$ & Effective serotonin diffusivity (incl. tortuosity) & 300 & $\frac{\text{µm}^2}{\text{s}}$ \\
 \hline
 $k_B$ & Extracellular serotonin degradation rate & 0.75 & $\text{s}^{-1}$ \\
\hline
 $\tilde{K}$ & Michaelis-Menten half-saturation constant & 0.85$\cdot 10^{-22}$ & mol \\
 \hline
 $\tilde{V}$ & Michaelis-Menten maximal velocity parameter & 2$\cdot 10^{-20}$ & $\frac{\text{mol}}{\text{s}}$ \\ 
 \hline
 $r\cdot \mu_c$ & Amount of locally released serotonin  & $2.5 \cdot 10^{-20}$ & mol \\
 \hline
 $\mu_c$ & Typical magnitude of serotonin & $10^{-22}$ & mol \\
 \hline
 $4\theta/k_R$ & Duration of release & 6.4 $\cdot 10^{-3}$ &  s \\ 
 \hline
$k_R$ & Representative dimensional reaction rate & 18.75 & s$^{-1}$  \\ 
 \hline
\end{tabular}
\end{table}

\subsection{Non-dimensionalization} \label{subsec:nondim}
In this work, we will use strong localized perturbation theory (see Section \ref{sec:main}), which requires a small spatial parameter that highlights the scale separation between the disk size of the neighborhoods and the distance between these neighborhoods. As a result, we perform a thorough non-dimensionalization of our system to clearly identify this scaling, along with other potential key parameters of our system. If we denote the dimension operator by $\left[\cdot\right]$, the dimensions of the variables and parameters in \eqref{eq:dimPDEsystem} are
\begin{eqnarray*}
    && [\tau] = \text{s}, \quad [\vX] = \text{µm}, \quad [\cU] = \frac{\text{mol}}{\text{µm}^2}, \quad [D_B] = \frac{\text{µm}^2}{\text{s}}, \quad [k_R] = \frac{1}{\text{s}} = [k_B], \\
    && \left[\beta_1\right] = \frac{\text{µm}}{\text{s}}, \quad [\beta_2] = \frac{1}{\text{µm}\cdot \text{s}}, \quad [\mu_j] = \text{mol} = [\mu_c], \quad [\tilde{V}] = \frac{\text{mol}}{\text{s}}, \quad [\tilde{K}] = \text{mol}.
\end{eqnarray*}

Following the non-dimensionalization as in \cite{pelzward2025synchronized}, we scale space by our minimum inter-compartment distance ($L$), time by an intra-compartmental reaction kinetics ($k_R$, selected to be 18.75 s$^{-1}$), and the typical magnitude of serotonin in a varicosity neighborhood ($\mu_c$), yielding our non-dimensional parameters
\begin{eqnarray*}
    &&t := k_R \,\tau, \quad \vx := \frac{\vX}{L}, \quad U := L^2\frac{\cU}{\mu_c}, \quad u_j := \frac{\mu_j}{\mu_c}, \quad \varepsilon := \frac{R_0}{L} = 0.075, \\
    && D := \frac{D_B}{k_R L^2} = 1, \quad d_{1} := \varepsilon \frac{\beta_{1}}{k_RL} = \cO(1), \quad d_{2} := \varepsilon \frac{\beta_{2}L}{k_R} = \cO(1), \quad \sigma := \frac{k_B}{k_R},
\end{eqnarray*}
where we highlight that $\varepsilon = R_0/L=0.075$ is the small spatial parameter critical to our theory. The non-dimensional intra-compartmental reaction kinetic function is then
\begin{equation}
\begin{split}
    f(t, u_j) := -\frac{V u_j}{K + u_j} + r \sum_l \delta_\theta(t - k_R \, T_l^{(k(j))}), \quad \text{with} \quad V = \frac{\tilde{V}}{k_R\mu_c} \quad \text{and} \quad K= \frac{\tilde{K}}{\mu_c}.
\end{split}
\end{equation}
It also follows that the ratio of the permeabilities is $c = d_{2}/d_{1} = L^2 \cdot \beta_{2}/\beta_{1}$, where $d_{2}/d_{1} = \mathcal{O}(1)$. In the simple case that the influx rate and the efflux rate at each point on the surface of the full 3-dimensional sphere with volume $Vol$ corresponding to the 2-dimensional projected disk $\Omega_j$ are constant and equal, say $\beta > 0$ with unit $[\beta]=\frac{1}{\text{s}}$, the ratio of the rates $\beta_{1}$ and $\beta_{2}$ is
\begin{equation*}
    \beta_{2} /\beta_{1} = \frac{\beta}{2\pi R_0} \, / \,\frac{\beta\pi R_0^2}{2\pi R_0} = \frac{1}{\pi R_0^2}.
\end{equation*}
To see this, note that if the concentration inside the projected disk $\mu_j/(\pi R_0^2)$ is equal to the surrounding concentration $\mathcal{U}$, $\beta_1 \mathcal{U} - \beta_2 \mu_j = 0$ close to the disk boundary if the influx and efflux rates are the same. In this way, the non-dimensional ratio $c$ is
\begin{equation*}
    c = \frac{L^2}{\pi R_0^2} \approx 56.59.
\end{equation*}

In total, the final non-dimensional system reads
\begin{subequations} \label{eq:PDEsystem}
\begin{eqnarray}
    \partial_t U &=& D \Delta U - \sigma U, \quad t > 0, \quad  \vx \in \bR^2\backslash \bigcup_{j=1}^N B_\varepsilon(\vx_j), \\
    U(0, \vx) &=& 0, \quad U(t, \vx) \to 0 \text{ as } |\vx| \to \infty, \\ [4pt]
    \varepsilon D \partial_{n_\vx} U &=& d_1 U - d_2 u_j, \qquad \vx \in \partial B_\varepsilon(\vx_j), \\ [4pt]
    \frac{d u_j}{dt} &=& f(t, u_j) +\frac{1}{\varepsilon}\int_{\partial B_\varepsilon(\vx_j)} (d_1 U(t, \vx) - d_2 u_j(t)) \; dS_\vx, \quad j\in\{1, ..., N\}, \nonumber
\end{eqnarray}
with intra-disk (intra-compartmental) kinetics 
\begin{equation}\label{eq:intrakin_simple}
    f(t, u_j) := -\frac{V u_j}{K + u_j} + r \sum_l \delta_\theta(t - k_R T_l^{(k)}) \quad \text{for} \quad \vx_j \in \cF_k, k \in \{1, ..., M\},
\end{equation}
and release function
\begin{equation} \label{eq:delta_release_nondim}
    \delta_\theta(t) = \frac{1}{\sqrt{2\pi\theta^2}} \exp\left(-\frac{t^2}{2\theta^2}\right).
\end{equation}
\end{subequations}
All non-dimensional parameter values for this model are summarized in Table~\ref{tab:parameters_cont}.

\begin{table}[h!]
\footnotesize
\caption{The non-dimensional parameter values for the system \eqref{eq:PDEsystem}.} \label{tab:parameters_cont}
\centering
\begin{tabular}{||l l l||} 
 \hline
 Parameter & Description & Value \\ [0.5ex] 
 \hline\hline
 $D$ & Non-dim. serotonin diffusivity & 1 \\
 \hline
 $\sigma$ & Non-dim. degradation rate & 0.04 \\
 \hline
 $\varepsilon$ & Non-dim. local disk radius & 0.075 \\
 \hline
 $K$ & Non-dim. Michaelis-Menten half-saturation constant & 0.85 \\
 \hline
 $V$ & Non-dim. Michaelis-Menten maximal velocity parameter & 10.67\\
 \hline
 $c$ & Concentration conversion constant & 56.59 \\
 \hline
 $r$ & Non-dim. released serotonin & 250 \\
 \hline
 $\theta$ & Standard deviation of Gaussian release function & 0.03 \\  [1ex] 
\hline
\end{tabular}
\end{table}

\section{Main results} \label{sec:main}
Using strong localized perturbation theory, an asymptotically equivalent integro-ODE system for all intra-ball amounts $u_j$ has been derived in \cite{pelzward2025synchronized}. In this way, precise coupling functions $B_j(t)$ have been derived that capture the diffusive coupling of the projected 2-D disks. This integro-ODE system is extremely helpful for computing an accurate numerical solution, because numerically solving the above PDE-ODE system \eqref{eq:PDEsystem} is difficult and time-consuming on the order of 16 h for only two diffusively coupled disks\footnote{Run with FlexPDE 7 © by PDE Solutions Inc on a virtual machine (VM) server with Ubuntu 16.04.7 LTS, VM CPU Intel\textcopyright Core\texttrademark (Haswell) 2299.996 MHz with 4 cores, 16GB VM RAM, and real CPU of the host server Intel\textregistered Xeon\textregistered CPU E5-2698 v3 @ 2.30 GHz}. The corresponding integro-ODE system including all information about the extracellular serotonin dynamics $U(t, \vx)$ is summarized in the following proposition, adapted from \cite{pelzward2025synchronized} to the case of a single intra-compartmental and diffusing species.

\begin{proposition}[Adapted from Pelz and Ward, 2025, \cite{pelzward2025synchronized}] \label{prop:2ode} For $\varepsilon\to 0$, and with the initial condition $U(0, \vx)=0$, the solution $U(t, \vx)$ and $u_j(t)$ of \eqref{eq:PDEsystem}, for $j\in\lbrace{1,\ldots,N\rbrace}$, is approximated for $t\gg \cO (\varepsilon^2)$
  by 
  \begin{subequations}\label{2:reduced}
  \begin{align}
    \frac{du_j}{dt} &= f(t, u_j) + B_j(t) \,, 
    \label{2:reduced_1} \\
    \int_{0}^{t} B_j^{\prime}(s) E_1(\sigma(t-s))\, ds &= \eta B_{j}(t) + \gamma u_{j}(t) \nonumber \\
          & \qquad + \sum_{\stackrel{k=1}{k\neq j}}^{N} \int_{0}^{t} \frac{B_{k}(s) e^{-\sigma(t-s)}}{t-s} e^{-|\vx_j-\vx_k|^2/(4D(t-s))} \, ds\,,\label{2:reduced_2}
  \end{align}
  for $j\in\lbrace{1,\ldots,N\rbrace}$, where $E_1(z) = \int_z^\infty \frac{exp(-s)}{s} \; ds$ is the exponential integral. Further, in this integro-differential
  system $\eta$ and $\gamma$ are defined by
  \begin{equation}\label{2:reduced_3}
    \eta := 2 \left( \frac{1}{\psi} + \frac{D}{d_{1}} +
      \log\left(2 \sqrt{\frac{D}{\sigma} }\right) - \gamma_e \right)=
    -\log(\varepsilon^2\kappa_{0}\sigma) \,, \qquad
    \gamma := \frac{4\pi D d_{2}}{d_{1}} \,,
  \end{equation}
\end{subequations}
where $\kappa_{0} := \frac{1}{4D} e^{2\left(\gamma_e - {D/d_{1}}\right)}$ with Euler's constant $\gamma_e$ and $\psi:=-1/\log(\varepsilon)$.  In terms of
$B_{j}(t)$, the approximate solution in the bulk region is
\begin{equation}\label{2:reduced_bulk}
  U(t, \vx) \sim -\frac{1}{4\pi D} \sum_{j=1}^{N}\int_{0}^{t}
  \frac{B_{j}(s) e^{-\sigma(t-s)}}{t-s}
  e^{-|\vx-\vx_j|^2/(4D(t-s))} \, ds \,,
\end{equation}
while in the vicinity of the $j^{\mbox{th}}$ cell we have for
$\rho=\varepsilon^{-1}|\vx-\vx_j|={\mathcal O}(1)$ that
\begin{equation}\label{2:reduced_local}
  U \sim\frac{B_{j}(t)}{2\pi D} \log\rho +
  \frac{B_{j}(t)}{2\pi d_{1}} + \frac{d_{2}}{d_{1}} u(t) \,.
\end{equation}
\end{proposition}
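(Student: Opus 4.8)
The plan is to establish the reduction by matched asymptotic expansions in the small radius $\varepsilon$, the standard machinery of strong localized perturbation theory. Away from the disks (the outer region), $U$ solves the linear problem $\partial_t U = D\Delta U - \sigma U$ with $U(0,\vx)=0$, so I would represent it as a time-convolution of the free-space decaying heat kernel $\frac{e^{-\sigma s}}{4\pi D s}\exp(-|\vx-\vx_j|^2/(4Ds))$ against as-yet-unknown source strengths $-B_j(\tau)$ centered at each $\vx_j$; this superposition is exactly the bulk representation \eqref{2:reduced_bulk}. The task then reduces to pinning down the $B_j$ by matching to the inner regions and to closing the $u_j$ dynamics. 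The key scale separation is that each disk has radius $\cO(\varepsilon)$ while the interdisk distances are $\cO(1)$, so the outer kernels are smooth at $\vx_k$ for $k\neq j$ but singular as $\vx\to\vx_j$.

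Next I would resolve each inner region. Setting $\rho=\varepsilon^{-1}|\vx-\vx_j|=\cO(1)$, the diffusion term dominates at $\cO(\varepsilon^{-2})$, so the leading inner field is quasi-static and harmonic; its radially symmetric form is $U\sim a_j(t)\log\rho + b_j(t)$. Imposing the Robin condition $\varepsilon D\partial_{n_\vx}U = d_1 U - d_2 u_j$ at $\rho=1$ fixes $b_j = (Da_j + d_2 u_j)/d_1$, which is precisely \eqref{2:reduced_local} once $a_j = B_j/(2\pi D)$ is identified. With this inner field, the boundary term in the $u_j$ equation is evaluated by rewriting $d_1 U - d_2 u_j = \varepsilon D\partial_{n_\vx}U$ on $\partial B_\varepsilon(\vx_j)$ and using $\partial_{n_\vx}U = \varepsilon^{-1}a_j/\rho$, so that $\frac{1}{\varepsilon}\int_{\partial B_\varepsilon(\vx_j)}(d_1 U - d_2 u_j)\,dS_\vx = 2\pi D a_j = B_j(t)$, yielding \eqref{2:reduced_1}.

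The crux is the matching that produces \eqref{2:reduced_2}. Expanding \eqref{2:reduced_bulk} as $\vx\to\vx_j$, the terms $k\neq j$ are regular and evaluate to the last sum in \eqref{2:reduced_2}, while the self-term $k=j$ carries the singularity. For it I would set $s=t-\tau$, split $B_j(t-s)=B_j(t)+[B_j(t-s)-B_j(t)]$, and extract the singular part of the constant piece via $\int_0^\infty \frac{e^{-\sigma s}}{s}e^{-r^2/(4Ds)}\,ds = 2K_0(r\sqrt{\sigma/D})$ together with the small-argument law $2K_0(z)\sim -2\log z + 2\log 2 - 2\gamma_e$, where $r=|\vx-\vx_j|=\varepsilon\rho$; truncating $\int_0^\infty$ to $\int_0^t$ contributes a clean $E_1(\sigma t)$ correction. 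The remainder piece is regular (taking $\varepsilon\to 0$ inside is legitimate because $B_j(t-s)-B_j(t)=\cO(s)$ cancels the $1/s$), and integrating it by parts through $\frac{d}{d\tau}E_1(\sigma(t-\tau)) = e^{-\sigma(t-\tau)}/(t-\tau)$ turns it into $-\int_0^t E_1(\sigma(t-\tau))B_j'(\tau)\,d\tau$ plus a boundary term $B_j(t)E_1(\sigma t)$ (using $B_j(0)=0$) that cancels the $E_1(\sigma t)$ produced above. Matching the constant-in-$\rho$ part of the outer field to the inner far field $-\frac{B_j}{2\pi D}\log\varepsilon + \frac{B_j}{2\pi d_1} + \frac{d_2}{d_1}u_j$ then collects all remaining constants into the coefficient of $B_j$, giving $\eta = 2(-\log\varepsilon + D/d_1 + \log(2\sqrt{D/\sigma}) - \gamma_e)$ and the coefficient of $u_j$ as $\gamma = 4\pi D d_2/d_1$, in agreement with \eqref{2:reduced_3} (and one checks $\eta=-\log(\varepsilon^2\kappa_0\sigma)$ directly from the definition of $\kappa_0$).

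I expect the main obstacle to be the logarithmic gauge intrinsic to two dimensions: the effective expansion parameter is $\nu := -1/\log\varepsilon$ rather than a power of $\varepsilon$, so the strength of the $\log\rho$ singularity and the additive constant are coupled, and the constant in $\eta$ must be assembled correctly from three separate sources, namely the Bessel constant $\log(2\sqrt{D/\sigma})-\gamma_e$, the permeability term $D/d_1$ coming from the Robin condition, and $-\log\varepsilon$ from the inner-outer length rescaling. The secondary technical points are justifying the $\varepsilon\to 0$ limit and the $\int_0^\infty$-to-$\int_0^t$ reduction by dominated convergence under smoothness of the $B_j$, restricting validity to $t\gg\cO(\varepsilon^2)$ to avoid the initial layer, and verifying that the boundary contributions vanish both at $\tau=t$ (where $B_j(\tau)-B_j(t)\to 0$ beats the logarithmic blow-up of $E_1$) and at $\tau=0$ (where $B_j(0)=0$).
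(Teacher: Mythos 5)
Your proposal is correct: the inner analysis (quasi-steady harmonic field $a_j(t)\log\rho+b_j(t)$, the Robin condition fixing $b_j=(Da_j+d_2u_j)/d_1$, and the flux identity $\frac{1}{\varepsilon}\int_{\partial B_\varepsilon}(d_1U-d_2u_j)\,dS=2\pi Da_j=B_j$) reproduces \eqref{2:reduced_1} and \eqref{2:reduced_local} exactly, and your bookkeeping of the constants in $\eta$ and $\gamma$ checks out, including the cancellation of the two $E_1(\sigma t)$ terms and the identity $\eta=-\log(\varepsilon^2\kappa_0\sigma)$. Note, however, that the paper does not prove this proposition itself; it imports it from the cited reference, and the only visible blueprint is the proof of the subsequent corollary, which performs the same inner analysis but then constructs the outer solution ``using Laplace transformation and a Green function'': the bulk equation is transformed in time to a modified Helmholtz problem, matched in transform space via the $K_0$ Green function and its small-argument logarithm, and then inverted to produce the $E_1$ and heat-kernel convolutions. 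You instead stay entirely in the time domain, positing the heat-kernel superposition \eqref{2:reduced_bulk} directly, isolating the self-interaction singularity by the splitting $B_j(t-s)=B_j(t)+[B_j(t-s)-B_j(t)]$, using $\int_0^\infty s^{-1}e^{-\sigma s}e^{-r^2/(4Ds)}\,ds=2K_0(r\sqrt{\sigma/D})$ for the constant piece, and integrating the remainder by parts against $\frac{d}{d\tau}E_1(\sigma(t-\tau))$. The two routes are equivalent; yours makes the provenance of each constant in $\eta$ and the memory structure more transparent and avoids transform inversion, while the Laplace route is what additionally delivers the $t\to 0$ asymptotics of $B_j$ (via large-$s$ expansion) that the paper's numerical marching scheme needs for initialization — your time-domain splitting does not yield that by itself. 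One small caveat: your boundary term at $\tau=0$ relies on $B_j(0)=0$, which is not a stated hypothesis but is consistent with the quasi-steady reduction (the short-time law $B_j\sim -u_j(0)\gamma/\log(t/\cdot)$ vanishes logarithmically), and is precisely why the validity is restricted to $t\gg\cO(\varepsilon^2)$, as you note.
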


Additionally, in \cite{pelzward2025synchronized} a marching scheme is developed which allows for numerically solving the system \eqref{2:reduced} efficiently for up to 100 varicosities ($\sim 1$ h runtime for the full-time solution for 50 sites). Once solved, a video of extracellular serotonin $U(t, \vx)$ can be created efficiently for any grid in $\vx$ with any given fineness, to get an intuition for the global serotonin concentration dynamics. \\

\subsection{Instant exchange between compartments} Since the real system does not consist of any bounded ball centered around each varicosity, we eliminate the ball boundaries to obtain instant exchange of local serotonin $u_j$ with the extracellular serotonin concentration $U$. It can be seen in \eqref{2:reduced} that the coupled limit of the boundary permeabilities $d_1 \to \infty$, $d_2 \to \infty$, $d_2/d_1 = c$, for some ratio $c$, is attainable. This ratio $c$ depends on the area of the projected disks, as $u_j$ corresponds to total serotonin \emph{amount} in the projected disk $B_\varepsilon(\vx_j)$ and $U(t, \vx)$ corresponds to the serotonin \emph{concentration} at $\vx$ at time $t$. This heuristic statement is more rigorously derived in the following Corollary.

\begin{corollary}[System with realistic exchange between local and global serotonin]\label{cor:2ode}
The coupled PDE-ODE system in the limit of infinite permeabilities, $d_1 \to \infty$ and $d_2 \to \infty$ such that $\frac{d_2}{d_1} = c$ is fixed,
\begin{subequations}\label{eq:PDEsystem_infty}
\begin{eqnarray}
    \partial_t U = D\Delta U - \sigma U, \quad t > 0, \quad \vx\in\bR^d\backslash\bigcup_{j=1}^N B_\varepsilon (\vx_j), \\
    \int_{\partial B_\varepsilon(\vx_j)} U(t, \vx)\; dS_\vx = 2\pi\varepsilon c\, u_j(t), \quad U(0, \vx) = 0, \quad U(t, \vx) \to 0 \text{ as } |\vx| \to \infty, &&\\
    \frac{du_j}{dt} = f(t, u_j) + \int_{\partial B_\varepsilon(\vx_j)} D\partial_n U(t, \vx)\; dS_\vx, \quad j\in\{1, ..., N\}  ,\label{eq:PDEsystem_infty_localkin}
\end{eqnarray}
\end{subequations}
for $\varepsilon \to 0$ and with initial condition $U(0, \vx) = 0$ is approximated for $t \gg \cO(\varepsilon^2)$ by 
\begin{subequations} \label{eq:reduced_infty}
\begin{align} 
    \frac{du_j}{dt} &= f(t, u_j) + B_j(t) \,, 
    \label{2:reduced_1_serot} \\
    \int_{0}^{t} B_j^{\prime}(s) E_1(\sigma(t-s))\, ds &= \eta_\infty B_{j}(t) + \gamma_\infty u_{j}(t) \nonumber \\
          & \;\;\;\; + \sum_{\stackrel{k=1}{k\neq j}}^{N} \int_{0}^{t} \frac{B_{k}(s) e^{-\sigma(t-s)}}{t-s} e^{-|\vx_j-\vx_k|^2/(4D(t-s))} \, ds\,,\label{2:reduced_2_serot}
  \end{align}
  for $j\in\lbrace{1,\ldots,N\rbrace}$. In this integro-differential
  system, $\eta_\infty$ and $\gamma_\infty$ are defined by
  \begin{equation}\label{2:reduced_3_serot}
    \eta_\infty := 2 \left( \frac{1}{\psi} +
      \log\left(2 \sqrt{\frac{D}{\sigma} }\right) - \gamma_e \right)=
    -\log(\varepsilon^2\kappa^\infty_{0}\sigma) \,, \qquad
    \gamma_\infty := 4\pi D c \,,
  \end{equation}
\end{subequations}
where $\kappa^\infty_{0} := \frac{1}{4D} e^{2\gamma_e}$ with Euler's constant $\gamma_e$ and $\psi:=-1/\log(\varepsilon)$.  In terms of
$B_{j}(t)$, the approximate solution in the bulk region is
\begin{equation}\label{2:reduced_bulk_serot}
  U(t, \vx) \sim -\frac{1}{4\pi D} \sum_{j=1}^{N}\int_{0}^{t}
  \frac{B_{j}(s) e^{-\sigma(t-s)}}{t-s}
  e^{-|\vx-\vx_j|^2/(4D(t-s))} \, ds \,,
\end{equation}
while in the vicinity of the $j^{\mbox{th}}$ cell we have for
$\rho=\varepsilon^{-1}|\vx-\vx_j|={\mathcal O}(1)$ that
\begin{equation}\label{2:reduced_local_serot}
  U \sim\frac{B_{j}(t)}{2\pi D} \log\rho +
  c\, u_j(t) \,.
\end{equation}
\end{corollary}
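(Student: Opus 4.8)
The plan is to obtain Corollary~\ref{cor:2ode} not by re-deriving the matched asymptotics from scratch, but by combining two observations: that the Dirichlet-type system \eqref{eq:PDEsystem_infty} is the genuine $d_1,d_2\to\infty$ limit of the Robin system \eqref{eq:PDEsystem}, and that the reduced system \eqref{2:reduced} of Proposition~\ref{prop:2ode} depends on $(d_1,d_2)$ only through a handful of quantities, each continuous in $1/d_1$ and in the fixed ratio $d_2/d_1$. I would then take the limit in those quantities and, as the main technical point, argue that the $\varepsilon\to0$ reduction and the $d_1,d_2\to\infty$ limit may be interchanged.

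First I would verify the boundary conditions. Dividing the Robin condition $\varepsilon D\partial_{n_\vx}U = d_1 U - d_2 u_j$ by $d_1$ gives $\tfrac{\varepsilon D}{d_1}\partial_{n_\vx}U = U - c\,u_j$ on $\partial B_\varepsilon(\vx_j)$; since the flux stays bounded, letting $d_1\to\infty$ forces $U = c\,u_j$ on the boundary, and integrating over $\partial B_\varepsilon(\vx_j)$ (of length $2\pi\varepsilon$) yields $\int_{\partial B_\varepsilon(\vx_j)} U\,dS_\vx = 2\pi\varepsilon c\,u_j$, exactly the constraint in \eqref{eq:PDEsystem_infty}. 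The flux term in the local kinetics is already limit-stable: using the Robin relation, $\tfrac{1}{\varepsilon}\int_{\partial B_\varepsilon(\vx_j)}(d_1 U - d_2 u_j)\,dS_\vx = \int_{\partial B_\varepsilon(\vx_j)} D\partial_{n_\vx} U\,dS_\vx$, which carries no explicit $d_1,d_2$ dependence and passes to \eqref{eq:PDEsystem_infty_localkin} unchanged. This confirms that \eqref{eq:PDEsystem_infty} is the correct infinite-permeability limit.

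Next I would take the limit inside the reduced equations. In \eqref{2:reduced_3} the only $d_1$ dependence of $\eta$ sits in the additive term $D/d_1$, so $\eta\to\eta_\infty$ as $d_1\to\infty$, and correspondingly $\kappa_0 = \tfrac{1}{4D}e^{2(\gamma_e - D/d_1)}\to \kappa_0^\infty = \tfrac{1}{4D}e^{2\gamma_e}$, preserving the identity $\eta_\infty = -\log(\varepsilon^2\kappa_0^\infty\sigma)$. Since $\gamma = 4\pi D\,(d_2/d_1) = 4\pi D c$ is already expressed through the fixed ratio $c$, it equals $\gamma_\infty$ with no limit needed. In the local expansion \eqref{2:reduced_local} the self-constant $\tfrac{B_j}{2\pi d_1}\to0$ while $\tfrac{d_2}{d_1}u_j\to c\,u_j$, giving \eqref{2:reduced_local_serot}; the bulk representation \eqref{2:reduced_bulk} and all off-diagonal coupling kernels in \eqref{2:reduced_2} are independent of the permeabilities and are therefore unchanged. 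Collecting these yields precisely \eqref{eq:reduced_infty}.

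The hard part will be justifying that these two limits commute, i.e.\ that the naive limit of the reduced system really is the $\varepsilon\to0$ reduction of \eqref{eq:PDEsystem_infty}. I would address this by re-examining the inner/outer matching of \cite{pelzward2025synchronized} and tracking exactly where $d_1,d_2$ enter. The outer (bulk) problem --- diffusion with degradation and point sources of strength $B_j(t)$ --- is independent of the permeabilities, so the degradation--diffusion Green's function and the constants $\log(2\sqrt{D/\sigma})-\gamma_e$ and the $E_1$ kernel it produces are untouched. The permeabilities enter only through the inner solution's additive constant, which in the Dirichlet limit is the harmonic profile $\tfrac{B_j}{2\pi D}\log\rho + c\,u_j$ obtained by imposing $U = c\,u_j$ at $\rho=1$; the $d_1^{-1}$ self-constant that fed the $D/d_1$ term of $\eta$ is simply absent. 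Because the inner and outer expansions are affine (hence analytic) in $1/d_1$ near $0$ and the matching constants depend on $1/d_1$ only through this single additive term, the $\varepsilon\to0$ expansion is uniform in $d_1$ for $d_1$ large, which legitimizes the interchange and shows that the directly-reduced limit system coincides with \eqref{eq:reduced_infty}.
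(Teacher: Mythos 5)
Your proposal is correct and reaches the stated reduced system, but it takes a genuinely different route from the paper. The paper does not pass to the limit $d_1,d_2\to\infty$ inside the already-reduced system of Proposition~\ref{prop:2ode}; it re-runs the matched-asymptotics construction directly on the limit problem \eqref{eq:PDEsystem_infty}: the inner function $V_j(t,\vy_j)=U(t,\vx_j+\varepsilon\vy_j)$ solves the quasi-steady problem $\Delta_{\vy_j}V_j=0$ subject to the integral condition $\int_{\partial B_1}V_j\,dS=2\pi c\,u_j$, whose radially symmetric solution $\tfrac{B_j}{2\pi D}\log|\vy_j|+c\,u_j$ is substituted into \eqref{eq:PDEsystem_infty_localkin} to give \eqref{2:reduced_1_serot}; the outer Laplace-transform/Green-function matching is then carried out verbatim as in \cite{pelzward2025synchronized}, yielding the matching condition without the $D/d_1$ self-interaction term, after which the constants collapse to $\eta_\infty$, $\gamma_\infty$, $\kappa_0^\infty$. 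Your route---noting that the reduced system of Proposition~\ref{prop:2ode} depends on the permeabilities only through the additive term $D/d_1$ (in $\eta$, $\kappa_0$, and the local constant) and through the fixed ratio $c$ (in $\gamma$ and the local constant), and then sending $d_1\to\infty$---is more economical because it reuses the \emph{conclusion} of the proposition rather than its proof, but it purchases this at the cost of the interchange-of-limits step, which you correctly identify as the crux. Your justification (the permeabilities enter the inner expansion only through a single additive constant that is affine in $1/d_1$, while the outer problem is permeability-independent) is exactly the structural fact the paper's direct derivation exploits implicitly, and at the formal level of rigor of strong localized perturbation theory it is adequate; the paper's re-derivation simply sidesteps having to assert uniformity in $d_1$. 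One small remark: the pointwise condition $U=c\,u_j$ you extract from the Robin condition is strictly stronger than the integrated constraint appearing in \eqref{eq:PDEsystem_infty}; the two coincide at leading order only because the inner solution is radially symmetric, which is the (weaker) form the paper actually imposes and solves.
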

\begin{proof}
    We extend the proof of Proposition~\ref{prop:2ode} provided in \cite{pelzward2025synchronized}, using the modified PDE-ODE system \eqref{eq:PDEsystem_infty} to account for this fast exchange between compartments and their surrounding space. For $t=\cO(1)$, we zoom into the local region of the $j^\text{th}$ ball via the coordinate change $\vy_j := \varepsilon^{-1}(\vx-\vx_j)$ with $\rho_j := |\vy|$, so that the local function $V_j(t, \vy_j) = U(t, \vx_j + \varepsilon \vy_j)$ satisfies the leading-order quasi-steady problem
    \begin{equation}\label{eq:local_sys_infty}
        \Delta_{\vy_j} V_j = 0 \;\; \text{for} \;\; \rho_j>1, \quad \int_{\partial B_1(\vy_j)} V_j(t, \vy_j)\; dS_{\vy_j} = 2\pi c u_j(t) \;\; \text{on} \;\; \rho_j=1.
    \end{equation}
    In terms of some $B_j(t)$ to be found, the radially symmetric solution to \eqref{eq:local_sys_infty} is
    \begin{equation}
        V_j(t, \vy_j) = \frac{B_j(t)}{2\pi D} \log|\vy_j| + c \, u_j(t),
    \end{equation}
    and, plugged into the local kinetics \eqref{eq:PDEsystem_infty_localkin}, we obtain
    \begin{equation}
        \frac{du_j}{dt} = f(t, u_j) + B_j(t), \quad j\in\{1, ..., N\}.
    \end{equation}
    In order to find the coupling function $B_j(t)$, we need to asymptotically match to the global solution $U$ in the original coordinates satisfying
    \begin{equation}
        \begin{split}
        &\partial_t U = D\Delta U - \sigma U,  t>0, \quad \vx\in\bR^2\backslash\{\vx_1, ..., \vx_N\}, \quad U(0, \vx) = 0, \\
        &U \sim \frac{B_j}{2\pi D} \log|\vx-\vx_j| + \frac{B_j}{2\pi D \psi} + c \, u_j \quad \text{as} \quad \vx \to \vx_j, \quad j\in\{1, ..., N\},
        \end{split}
    \end{equation}
    with $U(t, \vx)\to 0$ as $|\vx|\to \infty$ for all $t>0$. Using Laplace transformation and a Green function, the construction of the global solution can be done exactly as in \cite{pelzward2025synchronized} where the matching to the local behavior now yields
    \begin{align}
        \begin{split}
        \quad \frac{B_j(t)}{2\pi D \psi} + c u_j(t) =& \frac{B_j(t)}{2\pi D} \left(\gamma_e - \log\left(2\sqrt{\frac{D}{\sigma}}\right)\right) + \frac{1}{4\pi D} \int_0^t B_j'(s) E_1(\sigma(t-s))\; ds \\
        &- \sum_{\stackrel{k=1}{k\neq j}}^{N} \int_0^t \frac{B_k(s) e^{-\sigma(t-s)}}{4\pi D (t-s)} e^{-|\vx_j-\vx_k|^2/(4D(t-s))}\; ds \quad,
        \end{split}
    \end{align}
    instead of (3.35) in \cite{pelzward2025synchronized}, where $\gamma_e$ is Euler's constant coming from the small-argument asymptotics of the Green function. Combining constants into $\eta_\infty$, $\gamma_\infty$, and $\kappa_0^\infty$ completes the proof.
\end{proof}

The mathematical limit derived above establishes an effective description of instantaneous exchange between local varicosity serotonin levels and the extracellular field. A natural next question is which statistical distributions of fiber firing times can generate nontrivial spatial patterns of serotonin and how the parameters of these distributions influence the outcome. In particular, the strength of spatial coupling among fibers determines whether diffusion reinforces or smooths local heterogeneities in serotonin levels. For the reaction kinetics given in \eqref{eq:intrakin_simple}, the interplay between continuous local uptake and diffusive spread after each firing event produces transient fluctuations near active sites during low-frequency firing, while promoting more stable extracellular reservoirs between and around the sites at higher frequencies. This observation motivates the subsequent analysis of period-averaged dynamics, where the long-term behavior can be characterized more systematically.

\subsection{Period-averaged dynamics}
For periodic or random firing with respectively a fixed period or a fixed distribution, we can attempt to recover the local period-averaged or mean serotonin level at each varicosity the system converges to. This will allow us to compare such amounts between neighboring varicosities. In the following we focus on deterministic periodic firing, leaving stochastic firings for a future study (though see Appendix~\ref{poissonSec} and Section~\ref{poissonSM} for a cursory look). This repeated serotonergic firing rules out the existence of steady-states with their usual definition, which is a major difference of this current study compared to previous work~\cite{pelz2023symmetry,pelzward2025synchronized}. However, by applying the correct time-dependent operator on our system, we can make the existence of steady-states possible, which can be solved for analytically. This would provide additional checks of our numerical results and intuition. 

Specifically, we will consider the period-averaging operator $\sfE_t^\freq$ defined by  $\sfE^\freq_t[\, g(\cdot)\,] := \freq \int_t^{t+1/\freq} g(\tilde{t}) \; d\tilde{t}$ for an integrable function $g$ on $[t, t+1/\mathsf{f}]$, where $\freq$ denotes the firing frequency of the fiber. Applying this operator to the non-dimensional integro-differential system \eqref{eq:reduced_infty} for the coupled local serotonin amounts with Michaelis-Menten kinetics gives the period-averaged system
\begin{subequations} \label{eq:reduced_infty_averaged}
\begin{align}
    \frac{d\sfE^\freq_t[u_j]}{dt} &= -V\sfE^\freq_t\left[\frac{u_j}{K + u_j}\right] + r\freq(\mathbbm{1}_\freq)_j + \sfE^\freq_t[B_j] \,, 
    \label{2:reduced_1_ave} \\
    \sfE^\freq_t\left[\int_{0}^{\,\cdot} B_j^{\prime}(s) E_1(\sigma(\cdot-s))\, ds \right] &= \eta_\infty \sfE^\freq_t[B_{j}] + \gamma_\infty \sfE^\freq_t[u_{j}] \nonumber \\
          & \qquad + \sum_{\stackrel{k=1}{k\neq j}}^{N} \sfE^\freq_t\left[\int_{0}^{\,\cdot} \frac{B_{k}(s) e^{-\sigma(\cdot-s)}}{\cdot-s} e^{-|\vx_j-\vx_k|^2/(4D(\cdot-s))} \, ds\right]\,,\label{2:reduced_2_ave}
\end{align}
\end{subequations}
for $j\in\lbrace{1,\ldots,N\rbrace}$ and where $(\mathbbm{1}_\freq)_j$ is the $j$th entry in the indicator vector $\mathbbm{1}_\freq \in \{0,1\}^N$, which is 1 if varicosity $j$ is periodically firing and 0 otherwise. It is now central to derive whether $\sfE^\freq_t[u_j]$ settles to a finite steady-state and what bounds on the steady-state can be obtained. First we present a small helpful result that states that we can move the period-averaging operator into the convolution if we are interested in the long-term dynamics, before deriving a corresponding lower bound on $\sfE^\freq_t[u_j]$, and eventually showing that a unique steady-state exists.

\begin{lemma} \label{lem:t-ave_convolution_commute}
    For a bounded oscillatory function $g(t)$ with oscillation frequency $\freq$ and a steadily decaying function $h(t)$ with $\lim_{t\to\infty} h(t) = 0$, it holds that
    \begin{equation*}
        \sfE^\freq_t\left[\int_{0}^{\cdot} g(s) h(\cdot-s) \, ds\right] = \int_0^t \sfE^\freq_{t-s}[g] \, h(s) \, ds + R_t,
    \end{equation*}
    where the remainder $R_t$ satisfies $\lim_{t\to\infty} R_t = 0$ and is bounded by $ |R(t)| \leq \frac{1}{\freq}|h(t)| \,\sfE_0^\freq[|g|]$ for large enough $t$.
\end{lemma}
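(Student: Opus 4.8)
\section*{Proof proposal}

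The plan is to exploit the symmetry of the convolution and then interchange the order of the two integrations (the period-average integral and the convolution integral). First I would write the convolution with the decaying kernel carrying the outer variable, $\int_0^{s} g(\tau) h(s-\tau)\,d\tau = \int_0^{s} g(s-\tau) h(\tau)\,d\tau$, so that applying the averaging operator gives
\begin{equation*}
\sfE^\freq_t\!\left[\int_0^{\cdot} g(\tau)h(\cdot-\tau)\,d\tau\right] = \freq \int_t^{t+1/\freq}\!\int_0^{s} g(s-\tau)\,h(\tau)\,d\tau\,ds.
\end{equation*}
By Tonelli (justified by boundedness of $g$ and integrability of $h$ over the finite window) I would swap the order of integration over the triangular region $\{0\le \tau \le s,\ t\le s\le t+1/\freq\}$, splitting the $\tau$-range at $\tau=t$.

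For $\tau\in[0,t]$ the inner variable $s$ runs over the full window $[t,t+1/\freq]$; the substitution $u=s-\tau$ turns $\int_t^{t+1/\freq} g(s-\tau)\,ds$ into $\int_{t-\tau}^{t-\tau+1/\freq} g(u)\,du = \frac{1}{\freq}\sfE^\freq_{t-\tau}[g]$, which reproduces exactly the claimed leading term $\int_0^t \sfE^\freq_{t-\tau}[g]\,h(\tau)\,d\tau$. Everything left over is the contribution from $\tau\in[t,t+1/\freq]$, where $s$ only runs over $[\tau,t+1/\freq]$; this is the remainder
\begin{equation*}
R_t = \freq \int_t^{t+1/\freq} h(\tau)\int_\tau^{t+1/\freq} g(s-\tau)\,ds\,d\tau.
\end{equation*}

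To bound $R_t$, I would again substitute $u=s-\tau$ in the inner integral, whose range $[0,\,t+1/\freq-\tau]$ has width at most $1/\freq$, so $\big|\int_\tau^{t+1/\freq} g(s-\tau)\,ds\big|\le \int_0^{1/\freq}|g(u)|\,du = \frac{1}{\freq}\sfE_0^\freq[|g|]$. Using the decay hypothesis on $h$ in the sharpened form that $|h|$ is non-increasing on $[t,\infty)$ (so that $|h(\tau)|\le |h(t)|$ for $\tau\in[t,t+1/\freq]$), this yields $|R_t|\le \frac{1}{\freq}|h(t)|\,\sfE_0^\freq[|g|]$, and since $h(t)\to 0$ while $\sfE_0^\freq[|g|]$ is a fixed finite constant (boundedness of $g$), $R_t\to 0$.

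I expect the only delicate points to be bookkeeping rather than analysis: correctly partitioning the triangular integration domain at $\tau=t$ so that the $[0,t]$ piece regenerates the averaging operator on $g$ evaluated at the shifted time $t-\tau$, and making precise what ``steadily decaying'' must mean for the stated bound. The clean constant $\frac{1}{\freq}|h(t)|$ requires (eventual) monotonicity of $|h|$, whereas a general decaying $h$ would only give $|R_t|\le \sfE_0^\freq[|g|]\int_t^{t+1/\freq}|h(\tau)|\,d\tau$, which still tends to $0$; I would therefore state the monotonicity assumption explicitly to obtain the bound as written.
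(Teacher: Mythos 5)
Your proposal is correct and follows essentially the same route as the paper's proof: commute the convolution so $h$ carries the inner variable, interchange the order of integration over the triangular region, split at $\tau=t$ so the $[0,t]$ piece regenerates $\int_0^t \sfE^\freq_{t-\tau}[g]\,h(\tau)\,d\tau$, and bound the leftover tail by $\frac{1}{\freq}|h(t)|\,\sfE_0^\freq[|g|]$ using monotone decay of $|h|$. Your explicit remark that the stated bound requires (eventual) monotonicity of $|h|$ — which the paper leaves implicit in the phrase ``steadily decaying'' — is a fair and worthwhile clarification, but otherwise the two arguments are the same.
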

\begin{proof}
    We compute
        \begin{align*}
            \sfE^\freq_t\biggl[\int_{0}^{\,\cdot} &g(s) h(\cdot-s)ds \biggr] = \freq\int_t^{t+1/\freq} \int_{0}^{\tilde{t}} g(\tilde{t}-s) h(s) \, ds \, d\tilde{t} \\   
            &= \freq\int_0^{1/\freq} \int_{0}^{\vartheta + t} g(t-s + \vartheta) h(s) \, ds \, d\vartheta\\
            &= \freq\int_0^{1/\freq} \left(\int_{0}^{t} g(t-s + \vartheta) h(s)\, ds + \int_{t}^{\vartheta + t} g(t-s + \vartheta) h(s)\, ds\right)\, d\vartheta \\   
            &= \int_0^t \left(\freq\int_0^{1/\freq} g(t-s+\vartheta) \, d\vartheta\right) \, h(s) \, ds + \mathsf{f}\int_0^{1/\mathsf{f}} \int_0^\vartheta g(\vartheta-\theta) h(t+\theta) \; d\theta \; d\vartheta,   
        \end{align*}    
        where we used the substitution $\vartheta := \tilde{t} - t$ to arrive at the second line, split the inner integral to achieve the third line, and grouped/rearranged terms and made the substitution $\theta := s - t$ to arrive at the final equality. The first integral in the final equality is just $\int_0^t \sfE_{t-s}^\freq[g]\, h(s)\, ds$. The latter integral can be, with the substitution $\varphi:= \vartheta - \theta$, written as
        \begin{equation*}
            R(t) := \freq \int_0^{1/\freq} \int_0^{\vartheta} g(\varphi) \, h(t+\vartheta - \varphi)\, d\varphi \;d\vartheta.
        \end{equation*}
        This vanishes as $t\to \infty$ due to $h(t) \to 0$ and is, for large enough $t$, bounded by
        \begin{displaymath}
            |R(t)| \leq |h(t)| \int_0^{1/\freq} \freq \int_0^{\vartheta} g(\varphi) \, d\varphi \;d\vartheta \leq |h(t)| \int_0^{1/\mathsf{f}} |g(\varphi)| \, d\varphi \leq  \frac{1}{\freq}|h(t)| \,\sfE_0^\freq[|g|].
        \end{displaymath}
\end{proof}

\begin{proposition}[Period-averaged steady-state] \label{prop:period-ave}
    In the case that a subset of varicosities is firing with a single and steady frequency $\freq$, not necessarily all in-phase, while the remaining (could be none) varicosities do not fire, a finite non-negative steady-state of the period-averaged system \eqref{eq:reduced_infty_averaged} exists, is unique, and, as $t\to\infty$, the system \eqref{eq:reduced_infty_averaged} can be rewritten as
    \begin{subequations} \label{eq:reduced_infty_averaged_longt}
\begin{align}
    \frac{d\sfE^\freq_t[u_j]}{dt} &= -V\sfE^\freq_t\left[\frac{u_j}{K + u_j}\right] + r\freq(\mathbbm{1}_\freq)_j + \sfE^\freq_t[B_j] \,, 
    \label{2:reduced_1_longt} \\
    \int_{0}^{t} (B_j(t-s+1/\freq) - B_j(t-s)) \;E_1(\sigma s)\, ds &= \eta_\infty \sfE^\freq_t[B_{j}] + \gamma_\infty \sfE^\freq_t[u_{j}] \nonumber \\
          & \qquad + \sum_{\stackrel{k=1}{k\neq j}}^{N} \int_{0}^{t} \sfE^\freq_{t-s}[B_{k}] \;\frac{e^{-\sigma s}}{s} e^{-|\vx_j-\vx_k|^2/(4Ds)} \, ds\,,\label{2:reduced_2_longt}
\end{align}
\end{subequations}
for $j \in \{1, ..., N\}$.
    
Since $-\frac{u_j}{K+u_j}$ is strictly convex with $K>0$, a lower bound for $\sfE^\freq_t[u_j]$ can be obtained by solving for the steady-state of \eqref{eq:reduced_infty_averaged_longt} with $-V\sfE^\freq_t\left[\frac{u_j}{K + u_j}\right]$ replaced by $-V\frac{\sfE^\freq_t[u_j]}{K + \sfE^\freq_t[u_j]}$, where we rename $\sfE^\freq_t[u_j]$ to $\sfE^\freq_t[v_j]$ in the latter, i.e., $\sfE^\freq_t[v_j] < \sfE^\freq_t[u_j]$. The values of $\sfE^\freq_t[v_j]$ can readily be obtained fast by numerically finding the asymptotic steady-state of \eqref{eq:PDEsystem} with $f(t, u_j) = -\frac{Vu_j}{K+u_j} + r\freq$ via the corresponding nonlinear algebraic system
    \begin{equation*}
        -\frac{V\sfE^\freq_\infty[v_j]}{K + \sfE^\freq_\infty[v_j]} + r\freq(\mathbbm{1}_\freq)_j -\gamma_\infty{\bf{e}}_j^T\mathrm{M}^{-1}\sfE^\freq_\infty[{\bf{v}}] = 0,
    \end{equation*}
    where the vector $\mathbf{v}$ has entries $v_j$ and $\mathrm{M}$ is the matrix with entries
    \begin{equation*}
    \mathrm{M}_{jk} =
    \begin{cases}
        \eta_\infty, & j = k, \\[4pt]
        2K_0\!\left(\sqrt{\frac{\sigma}{D}}\,|\mathbf{x}_j - \mathbf{x}_k|\right), & j \ne k,
    \end{cases}
    \end{equation*}
    for $j,k \in \{1, ..., N\}$ and with $K_0$ denoting the modified Bessel function of the second kind. In this way, the non-negative steady-state $\sfE^\freq_\infty[{\bf{u}}]$ is bounded by 
    \begin{equation*}
        \sfE^\freq_\infty[{\bf{v}}] \leq \sfE^\freq_\infty[{\bf{u}}] \leq \frac{r}{\gamma_\infty} \freq \mathrm{M} \mathbbm{1}_\freq,
    \end{equation*}
    where ${\bf{u}}$ is the vector with entries $u_j$.
\end{proposition}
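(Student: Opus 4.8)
The plan is to collapse the non-local averaged system onto a finite algebraic one. First I would apply Lemma~\ref{lem:t-ave_convolution_commute} to both convolutions in \eqref{eq:reduced_infty_averaged}: in the coupling term with $g=B_k$ and $h(\tau)=\tau^{-1}e^{-\sigma\tau}e^{-|\vx_j-\vx_k|^2/(4D\tau)}$, and in the self term with $g=B_j'$, for which $\sfE^\freq_s[B_j']=\freq\,(B_j(s+1/\freq)-B_j(s))$ produces the difference $B_j(s+1/\freq)-B_j(s)$ appearing in \eqref{2:reduced_2_longt}. Each kernel $h$ decays, so the lemma's remainders vanish as $t\to\infty$ and \eqref{eq:reduced_infty_averaged} may be rewritten as \eqref{eq:reduced_infty_averaged_longt}. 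Writing $\bar u_j:=\sfE^\freq_\infty[u_j]$ and $\bar B_j:=\sfE^\freq_\infty[B_j]$, I would then send $t\to\infty$: once the solution has settled onto its $1/\freq$-periodic orbit the self-convolution tends to $0$ (the bracket $B_j(s+1/\freq)-B_j(s)\to0$, while $E_1$ is integrable with $\int_0^\infty E_1(\sigma u)\,du=1/\sigma$), and the coupling integral converges by dominated convergence to $2\bar B_k\,K_0(\sqrt{\sigma/D}\,|\vx_j-\vx_k|)$ via the standard representation $\int_0^\infty\tau^{-1}e^{-\sigma\tau-a^2/(4D\tau)}\,d\tau=2K_0(a\sqrt{\sigma/D})$. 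This supplies the off-diagonal entries of $\mathrm M$ and the steady relations $\mathrm M\bar{\bf B}=-\gamma_\infty\bar{\bf u}$ and $\bar B_j=V\sfE^\freq_\infty[\varphi(u_j)]-r\freq\,\mathbbm 1_\freq(j)$ with $\varphi(w):=w/(K+w)$; eliminating $\bar{\bf B}$ gives $\gamma_\infty\bar{\bf u}+V\mathrm M\,\sfE^\freq_\infty[\varphi({\bf u})]=r\freq\,\mathrm M\mathbbm 1_\freq$.

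Next I would treat the convexified surrogate. Since $(-\varphi)''(w)=2K/(K+w)^3>0$, the function $-\varphi$ is strictly convex for $w\ge0$, so Jensen's inequality gives $\sfE^\freq_\infty[\varphi(u_j)]\le\varphi(\bar u_j)$. Replacing $\sfE^\freq_\infty[\varphi({\bf u})]$ componentwise by $\varphi(\bar{\bf v})$ produces the closed system $\Psi({\bf v}):=\gamma_\infty{\bf v}+V\mathrm M\varphi({\bf v})=r\freq\,\mathrm M\mathbbm 1_\freq$, which after multiplication by $\mathrm M^{-1}$ is the stated nonlinear algebraic system. Existence and uniqueness of a non-negative $\bar{\bf v}$ I would obtain from the Hadamard global inverse function theorem: $\Psi$ is proper and coercive because $\varphi$ is bounded (so $V\mathrm M\varphi$ is bounded and $\gamma_\infty{\bf v}$ dominates), and its Jacobian $\gamma_\infty I+V\mathrm M\,\mathrm{diag}(\varphi'(v_j))$ is invertible everywhere: with $\mathrm M$ symmetric positive definite (a regularized screened-Laplacian Green's matrix for well-separated sites), $\mathrm M\,\mathrm{diag}(\varphi')$ has non-negative real spectrum, so the Jacobian has eigenvalues $\ge\gamma_\infty>0$. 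Non-negativity of $\bar{\bf v}$ then follows from $\Psi(0)=0\le r\freq\,\mathrm M\mathbbm 1_\freq$ and the monotonicity of $\Psi$.

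The bounds come next. The upper bound is immediate from $\gamma_\infty\bar{\bf u}+V\mathrm M\,\sfE^\freq_\infty[\varphi({\bf u})]=r\freq\,\mathrm M\mathbbm 1_\freq$, since $\mathrm M$ and $\sfE^\freq_\infty[\varphi({\bf u})]$ are entrywise non-negative and hence $\gamma_\infty\bar{\bf u}\le r\freq\,\mathrm M\mathbbm 1_\freq$. For the lower bound I would subtract the two steady systems, obtaining $\gamma_\infty(\bar{\bf u}-\bar{\bf v})=V\mathrm M(\varphi(\bar{\bf v})-\sfE^\freq_\infty[\varphi({\bf u})])$, and use Jensen to bound the right side below by $V\mathrm M(\varphi(\bar{\bf v})-\varphi(\bar{\bf u}))$; this exhibits $\bar{\bf u}$ as a supersolution of the surrogate fixed point, and since $\varphi$ is increasing and the coupling is cooperative ($\mathrm M\ge0$ entrywise), a sub/supersolution monotone-iteration comparison yields $\bar{\bf u}\ge\bar{\bf v}$.

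Finally, existence and uniqueness of the true period-averaged steady state I would obtain by sandwiching $\bar{\bf u}$ between $\bar{\bf v}$ and $(r/\gamma_\infty)\freq\,\mathrm M\mathbbm 1_\freq$ and invoking that the $1/\freq$-periodic forcing of \eqref{eq:reduced_infty} drives a unique globally attracting $1/\freq$-periodic orbit, so that every period-average converges to the claimed value. I expect the last two points to be the main obstacle. Because the diffusive coupling is cooperative rather than competitive, the reduced map is not inverse-monotone in the M-matrix sense, so the lower-bound comparison must be run as a monotone iteration rather than a one-line sign argument; and establishing convergence of the period-averages---equivalently, existence and uniqueness of the attracting periodic orbit for a nonlinear integro-differential system that carries temporal memory through $E_1$ and the $K_0$-kernels---is genuinely delicate, since the non-local-in-time dynamics lie outside the reach of standard ODE comparison principles.
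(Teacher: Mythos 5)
Your proposal follows essentially the same route as the paper's proof: apply Lemma~\ref{lem:t-ave_convolution_commute} to both convolutions, pass to the long-time algebraic system with the $K_0$ kernels, eliminate $\bf{\tilde B}$ through $\mathrm{M}$, invoke Jensen's inequality (concavity of $w\mapsto w/(K+w)$) for the lower bound, and use $0\le V\varphi\le V$ together with entrywise non-negativity of $\mathrm{M}$ for the upper bound and existence. If anything you are more careful than the paper at its two thinnest points: you justify invertibility of $\mathrm{M}$ via positive definiteness rather than the paper's (generally false for irregular placements) claim that $\mathrm{M}$ is circulant, and you correctly flag that the comparison $\sfE^\freq_\infty[{\bf{v}}]\le\sfE^\freq_\infty[{\bf{u}}]$ and the existence of the attracting periodic orbit are not one-line consequences of Jensen --- the paper simply asserts both.
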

\begin{proof}
    We begin by applying Lemma \eqref{lem:t-ave_convolution_commute} to both convolution integrals, which yields \eqref{eq:reduced_infty_averaged_longt}.
    
    The steady-state $\sfE^\freq_\infty[v_j]$ of
    \begin{align*}
    \frac{d\sfE^\freq_t[v_j]}{dt} &= -\frac{V\sfE^\freq_t[v_j]}{K + \sfE^\freq_t[v_j]} + r\freq(\mathbbm{1}_\freq)_j + \sfE^\freq_t[\tilde{B}_j] \,, 
     \\
    \int_{0}^{t} \sfE^\freq_{t-s}[\tilde{B}_j'] \;E_1(\sigma s)\, ds &= \eta_\infty \sfE^\freq_t[\tilde{B}_{j}] + \gamma_\infty \sfE^\freq_t[u_{j}] \nonumber \\
          & \qquad + \sum_{\stackrel{k=1}{k\neq j}}^{N} \int_{0}^{t} \sfE^\freq_{t-s}[\tilde{B}_{k}] \;\frac{e^{-\sigma s}}{s} e^{-|\vx_j-\vx_k|^2/(4Ds)} \, ds,
    \end{align*}
    if it exists, solves the nonlinear algebraic system
    \begin{align*}
        -\frac{V\sfE^\freq_\infty[v_j]}{K + \sfE^\freq_\infty[v_j]} + r\freq(\mathbbm{1}_\freq)_j + \sfE^\freq_\infty[\tilde{B}_j] = 0, \\
        \eta_\infty \sfE^\freq_\infty[\tilde{B}_j] +\gamma_\infty \sfE^\freq_\infty[v_j] + 2\sum_{\overset{k=1}{k\neq j}}^N \sfE^\freq_\infty[\tilde{B}_k] K_0\left(\sqrt{\frac{\sigma}{D}} |\vx_j-\vx_k|\right) = 0,
    \end{align*}
    as derived in \cite{pelzward2025synchronized}. The coupling function $\sfE^\freq_\infty[\tilde{B}_j]$ can be eliminated noting that
    \begin{equation*}
        \mathrm{M}\sfE^\freq_\infty[{\bf{\tilde{B}}}] = -\gamma_\infty\sfE^\freq_\infty[{\bf{v}}],
    \end{equation*}
    where $\mathrm{M}$ is the matrix with entries
    \begin{equation*}
        \mathrm{M}_{jj} = \eta_\infty, \quad \mathrm{M}_{jk} = 2K_0\left(\sqrt{\frac{\sigma}{D}} |\vx_j-\vx_k|\right),
    \end{equation*}
    $j \neq k \in \{1, ..., N\}$, and ${\bf{\tilde{B}}}$ and ${\bf{v}}$ are vectors with entries $\tilde{B}_j$ and $v_j$. Importantly, $\mathrm{M}$ is invertible for a generic configuration of varicosities and parameter values. To see this, note that since $K_0$ is real-analytic on $(0,\infty)$ and the varicosities are at distinct locations, $F(\vx)=\det\mathrm{M}(\vx)$ is a real-analytic function on the connected open set of $\mathcal{C}_N = \{\vx \in \mathbb{R}^{2N}: \vx_i \ne \vx_j \text{ for all } i\ne j\}$. Further, we see that $F(\vx)$ is not identically zero, since for sufficiently spaced out points, $\mathrm{M}$ is strictly diagonally dominant, because $K_0$ decays exponentially and $\eta_\infty\ne 0$. The latter point is true since for generic diffusivities $D$, cell radii $\varepsilon$, and degradations $\sigma$, it holds that $\eta_\infty = -\log(\varepsilon^2\sigma \frac{1}{4D}e^{2\gamma_e})$ does not equal 0. Since the zero set of a nontrivial real-analytic function on a connected open set is closed, nowhere dense, and has measure zero, the result that $\mathrm{M}$ is generically invertible follows. This result gives
    \begin{equation*}
        -\frac{V\sfE^\freq_\infty[v_j]}{K + \sfE^\freq_\infty[v_j]} + r\freq(\mathbbm{1}_\freq)_j -\gamma_\infty{\bf{e}}_j^T\mathrm{M}^{-1}\sfE^\freq_\infty[{\bf{v}}] = 0,
    \end{equation*}
    with standard unit normal vector ${\bf{e}}_j$.
    Jensen's inequality yields 
    \begin{equation*}
        -V\, \sfE^\freq_t\left[ \frac{u_j}{K+u_j}\right] > -V \frac{\sfE^\freq_t[u_j]}{K + \sfE^\freq_t[u_j]},
    \end{equation*}
    which implies that it needs to hold that $\sfE^\freq_t[u_j] > \sfE^\freq_t[v_j]$ for all $t>0$ for the equation to be satisfied. Thus, $\sfE^\freq_t[v_j]$ is truly a lower bound on $\sfE^\freq_t[u_j]$ due to the convexity of $u_j/(K+u_j)$ in $u_j$.
    
    It only remains to show that a unique non-negative steady-state $\sfE^\freq_\infty[u_j]$ of \eqref{eq:reduced_infty_averaged_longt} exists, because, from $0 < \sfE^\freq_\infty[v_j] < \sfE^\freq_\infty[u_j]$ and the non-oscillatory nature of $\sfE^\freq_t[v_j]$, the existence of $\sfE^\freq_\infty[v_j]$ would follow from the existence of $\sfE^\freq_\infty[u_j]$, $j\in\{1, ..., N\}$. Non-negativity of $\mathbf{u}(t)$ follows from system~\eqref{eq:PDEsystem}, where one can see that the initial data satisfies $\mathbf{u}(0) = \mathbf{0}$, the release term $r\sum_l \delta_\theta(\cdot)\ge0$ can increase these amounts, while the reuptake term $-Vu_j/(K+u_j)$ vanishes at $u_j=0$, preventing the solution from crossing zero. Second, similarly as for $\sfE^\freq_\infty[v_j]$, a steady-state $\sfE^\freq_\infty[{\bf{u}}]$ solves
    \begin{align*}
        -V\sfE^\freq_\infty\left[\frac{u_j}{K + u_j}\right] + r\freq(\mathbbm{1}_\freq)_j -\gamma_\infty{\bf{e}}_j^T\mathrm{M}^{-1}\sfE^\freq_\infty[{\bf{u}}] = 0,
    \end{align*}
    which can be rewritten as
    \begin{align*}
        \sfE^\freq_\infty[\mathbf{u}] = \frac{\mathrm{M}}{\gamma_\infty}\left(r\freq\mathbbm{1}_\freq-V\sfE_\infty^\freq\left[\mathbf{u}\oslash\left(K\mathbbm{1}+\mathbf{u}\right)\right]\right),
    \end{align*}
    where $\oslash$ denotes element-wise division. Since the right-hand side is monotone in $\mathbf{u}$, a unique steady-state of $\sfE^\freq_\infty[\mathbf{u}]$ exists.
\end{proof}

\subsection{Spike max-min estimates for large times} \label{subsec:ampl_est}
For steady firing of serotonin with a certain frequency $\freq$, it is of interest what the firing-amplitude of serotonin amount is around each varicosity, as well as the nearby non-firing varicosities. We will give an approximate answer to this question in the case of long-term firing (i.e., after a few tens of seconds). In experimental and clinical neuroscience, knowing the approximate long-term period-averages $\sfE^\freq_\infty[v_j]$ along with approximate long-term spiking maximal values provides a key characterization of the long-term dynamics. The following lemma provides an estimate for long-term spiking maximal serotonin values of the firing varicosities.

\begin{lemma}[Estimate for long-term spiking maxima] \label{lem:longterm-spikemax}
    For the integro-ODE system \eqref{eq:reduced_infty} to the PDE-ODE system \eqref{eq:PDEsystem_infty} for the serotonin dynamics with a subset of varicosities steady firing at a frequency $\freq$, an estimate for the spiking maxima in the varicosity neighborhoods is provided as $t\to \infty$ by
    \begin{subequations}
        \begin{equation}
        \begin{aligned}
        \vv_{\max} =  \sfE_{\infty}^\freq[\vv] &+ \left(I - \frac{4\Delta T}{5} \gamma_\infty \tilde{M}^{-1}\right)^{-1}\left(-V\sfE^\freq_\infty[\vv]\oslash\left(K\mathbbm{1} + \sfE^\freq_\infty[\vv]\right) \Delta T \right.\\
        &\left.+ (2\Phi(2) -1 + \freq \Delta T)\,r\mathbbm{1}_\freq + \sfE_\infty^\freq[\tilde{\vB}]\Delta T\right),
        \end{aligned}
        \end{equation}
    for $j\in\{1, ..., N\}$, where the lower bound vector $\vv$ together with vector $\tilde{\vB}$ have respective entries $v_j$ and $\tilde{B}_j$ from Proposition \ref{prop:period-ave} and $\oslash$ denoting element-wise division. The duration $\Delta T = 4\theta$ is the approximate duration of a serotonin spike $\delta_\theta$, $\Phi$ is the cumulative distribution function of the standard Gaussian distribution, and $\gamma_\infty$ and $\eta_\infty$ are given in \eqref{2:reduced_3}. The matrix $\tilde{M}$ has entries 
    \begin{equation}
        \tilde{M}_{jj} = E_1(\sigma\Delta T) + \frac{1 - e^{-\sigma\Delta T}}{\sigma\Delta T} - \eta_\infty \quad \text{and} \quad \tilde{M}_{jk} = -\frac{1}{2} \mathcal{I}_{jk}(\Delta T), \quad j\neq k,
    \end{equation}
    where 
    \begin{equation}
        \mathcal{I}_{jk}(\Delta T) :=\left(1+\frac{|\vx_j-\vx_k|^2}{4D/\sigma} \right) E_1\left(\frac{|\vx_j-\vx_k|^2}{4D\Delta T}\right) - e^{-|\vx_j-\vx_k|^2/(4D\Delta T)}\sigma\Delta T.
    \end{equation}
    \end{subequations}
\end{lemma}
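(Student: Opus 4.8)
The plan is to localize the problem to a single firing period in the long-time regime and to perturb off the period-averaged steady state of Proposition~\ref{prop:period-ave}. As $t\to\infty$, the averaged quantities have settled to the baseline values $\sfE^\freq_\infty[v_j]$ and $\sfE^\freq_\infty[\tilde B_j]$. Within one period, $u_j$ rests near this baseline, is kicked upward by the Gaussian release $r\,\delta_\theta$, and attains its maximum $v_{\max,j}$ after a transient of order $\Delta T=4\theta$. The leading estimate then comes from integrating the local ODE~\eqref{2:reduced_1_serot} across the spike window, starting from the baseline:
\begin{equation*}
  v_{\max,j} - \sfE^\freq_\infty[v_j] \;\approx\; \int_{\mathrm{spike}} \Bigl(-V\,\tfrac{u_j}{K+u_j} + r\,\delta_\theta + B_j\Bigr)\, dt .
\end{equation*}

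Next I would evaluate the three contributions on the right. The release integral is $r$ times the fraction of Gaussian mass accumulated up to the time of the maximum, which I express through the standard normal CDF $\Phi$, together with the period-averaged release rate $r\freq$ acting over the window; this produces the source $(\Phi(2\theta)+\Delta T\,\freq)\,r\,\mathbbm{1}_\freq$, nonzero only at firing sites. Because the window has length $\cO(\theta)$ and $u_j$ stays within $\cO(1)$ of its baseline, I would freeze the slowly-varying uptake and the far-field coupling at their baseline values, giving the contribution $\Delta T\bigl(\sfE^\freq_\infty[\tilde B_j] - V\,\sfE^\freq_\infty[v_j]/(K+\sfE^\freq_\infty[v_j])\bigr)$. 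The one term that cannot be frozen is the self-coupling increment of $B_j$, since $B_j$ responds to the very spike in $u_j$ whose size we are after; this feedback is exactly what the matrix factor in the statement encodes.

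To close the system I would return to the integro-equation~\eqref{2:reduced_2_serot} and extract the spike-induced increment. Writing $B_j=\sfE^\freq_\infty[\tilde B_j]+\delta B_j$ and $u_j=\sfE^\freq_\infty[v_j]+\delta v_j$, I would approximate the nonlocal convolutions over the recent window $s=t-\tau\in(0,\Delta T]$, retaining the rapidly-varying increments while sending the slowly-varying history to the period-averaged balance $\mathrm{M}\,\sfE^\freq_\infty[\tilde{\vB}]=-\gamma_\infty\sfE^\freq_\infty[\vv]$ of Proposition~\ref{prop:period-ave}. The windowed evaluation of the left-hand convolution against $E_1(\sigma(\cdot-\tau))$, combined with the $\eta_\infty B_j$ term, yields the diagonal entries $E_1(\sigma\Delta T)+\frac{1-e^{-\sigma\Delta T}}{\sigma\Delta T}-\eta_\infty$, while the partial integral of the coupling kernel $s^{-1}e^{-\sigma s}e^{-|\vx_j-\vx_k|^2/(4Ds)}$ over the window gives the off-diagonal $\mathcal{I}_{jk}(\Delta T)$ (its full integral is the $2K_0$ entry of $\mathrm{M}$). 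Assembling these produces the windowed operator $\tilde M$ and a linear relation $\delta\vB=\gamma_\infty\tilde M^{-1}\,\delta\vv$ at the time of the maximum, the windowed analogue of the period-averaged balance with $\tilde M$ carrying the opposite sign of $\eta_\infty$ to $\mathrm{M}$. Feeding the time-integral of $\delta\vB$ over the window (with quadrature weight $\tfrac{4\Delta T}{5}$ for the rising increment profile) back into the ODE balance and solving for $\delta\vv=\vv_{\max}-\sfE^\freq_\infty[\vv]$ gives the stated inverse $\bigl(I-\tfrac{4\Delta T}{5}\gamma_\infty\tilde M^{-1}\bigr)^{-1}$ acting on the frozen source.

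The hard part will be the treatment of the nonlocal, weakly-singular convolution kernels over the short spike window: one must simultaneously justify freezing the $k\neq j$ history at the period average while keeping the self-increment, control the $E_1$ singularity at $s=0$ when reducing the left-hand convolution, and pin down the effective quadrature weight $\tfrac45$ governing how the increment $\delta B_j$ accumulates over $[0,\Delta T]$. These approximations are precisely what make the result an estimate rather than an identity; I would control the resulting errors through the smallness of $\Delta T=4\theta$ and of the deviation $\delta v_j$ relative to the baseline.
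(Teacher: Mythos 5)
Your proposal follows essentially the same route as the paper's derivation in Appendix~\ref{spikemaxDeriv}: integrate the local ODE over a spike window of length $\Delta T=4\theta$ about the period-averaged steady state, freeze the slowly varying uptake and source terms at their baseline values (yielding the $\Phi(2\theta)+\Delta T\,\freq$ release contribution and the $\Delta T$-weighted baseline terms), and close the system by extracting the spike-induced increment $\delta\vB=\gamma_\infty\tilde M^{-1}\delta\vv$ from the windowed convolutions, including the heuristic $\tfrac{4}{5}$ quadrature weight. The decomposition, the matrix $\tilde M$, and the final inversion all match the paper's argument.
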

For the derivation of the estimate, see Appendix \ref{spikemaxDeriv}.
\begin{remark}
    It remains to address the tightness of the estimate of Lemma \ref{lem:longterm-spikemax} to the true long-term spike maxima. This will be examined in Section~\ref{sec:oneRow}.
\end{remark}
From this estimate for the spike maxima, an estimate for spike minima is easily obtained. Since the period-averaged steady-state $\sfE_\infty^\freq[\vv]$ is known from Proposition \ref{prop:period-ave}, a simple, but not tight, estimate for the long-term spike minima is given by $\max\{2\sfE_\infty^\freq[\vv] - \vv_{\max}, 0\}$.

\section{Numerical experiments} 
\label{sec:arrangements}

We now apply our theory from the previous section to example fiber arrangements. We start by considering a single line of varicosities spread across different fibers (Section \ref{sec:oneRow}), and then build upon this intuition by layering rows of varicosities (Section \ref{sec:multipleRows}). We then compare our nonlinear Michaelis-Menten condition with a linear counterpart (Section~\ref{sec:comparison_linear}). Throughout this section, we solve the integro-ODE system \eqref{eq:reduced_infty} with an accurate and computationally efficient marching scheme similar to the one developed in \cite{pelzward2025synchronized} (see Appendix \ref{app:numerics}). And unless otherwise noted, we also refer to and plot dimensional quantities of all variables, namely $\mathcal{U} = U\cdot\mu_c/L^2$,  $\mu_j=u_j\cdot\mu_c$ and $\nu_j = v_j\cdot\mu_c$.

\subsection{Single row of varicosities}\label{sec:oneRow}
Despite a simplification, examining one row of varicosities spread out across multiple fibers can be justified as follows. First, consider a set of vertical fibers (Figure \ref{fig:example_fibers}, left). Viewing the deviation from a straight line as a perturbation, we can straighten them out (Figure \ref{fig:example_fibers}, middle) and then focus only on a 1-D section through the 2-D setup with equidistantly placed varicosities (Figure \ref{fig:example_fibers}, right), viewing contributions of the other varicosities as minor and non-equidistantly separated varicosities again as perturbations. This simplified setup allows us to analyze key contributions of firing neighboring fibers to the serotonin distributions around varicosities of other fibers all orthogonal to the section. 

\begin{figure}[H]
    \centering
    \includegraphics[width=0.85\textwidth]{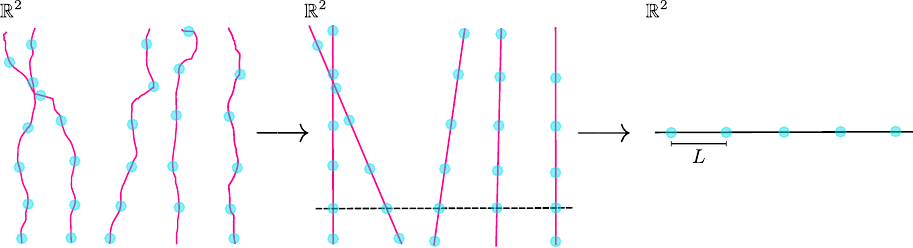}
    \caption{A system in $\bR^2$ with five serotonergic fibers (axons) in magenta and varicosity neighborhoods in cyan (left) gets straightened (middle), viewing deviations from a straight line as perturbations. We then focus only on a 1-D section (black dashed in middle; black on right) orthogonal to the fibers with equidistantly placed varicosities, in order to solely analyze the contribution of neighboring firing varicosities of different fibers on all other varicosities along the section. The minimal distance between varicosities is defined to be $L$, while the radius of their neighborhoods is $R_0$. See Table~\ref{tab:parameters} for default parameter values used throughout this work.}
    \label{fig:example_fibers}
\end{figure}

With this geometry in mind, we start by only letting the fibers furthest left ($j = 1$) and furthest right ($j = 5$) of the five fibers fire at 2 Hz, which amounts in our reduced system with the 1-D section in $\bR^2$ to the left and right varicosity firing. We will begin by considering three snapshots of the serotonin concentration in time: 1) during the first firing event, 2) in the middle of the time interval until the second firing event, and 3) just before the start of the second firing event). First, we find that when the serotonin is initially released, the concentration at and around varicosities on the firing fibers ($j = 1$ and $5$) transiently increase (Figure \ref{fig:snapshots_lineboundaryfiring2Hz}, top-left panel). Serotonin then diffuses away and gets removed locally at all varicosities via Michaelis-Menten reuptake (i.e., SERT), so much as that, by the next time points, the system shows little serotonin remaining in the 1-D slice (Figure \ref{fig:snapshots_lineboundaryfiring2Hz}, top-middle/right panels). 
\begin{figure}[!hbtp] 
    \centering
    \includegraphics[width=0.9\linewidth]{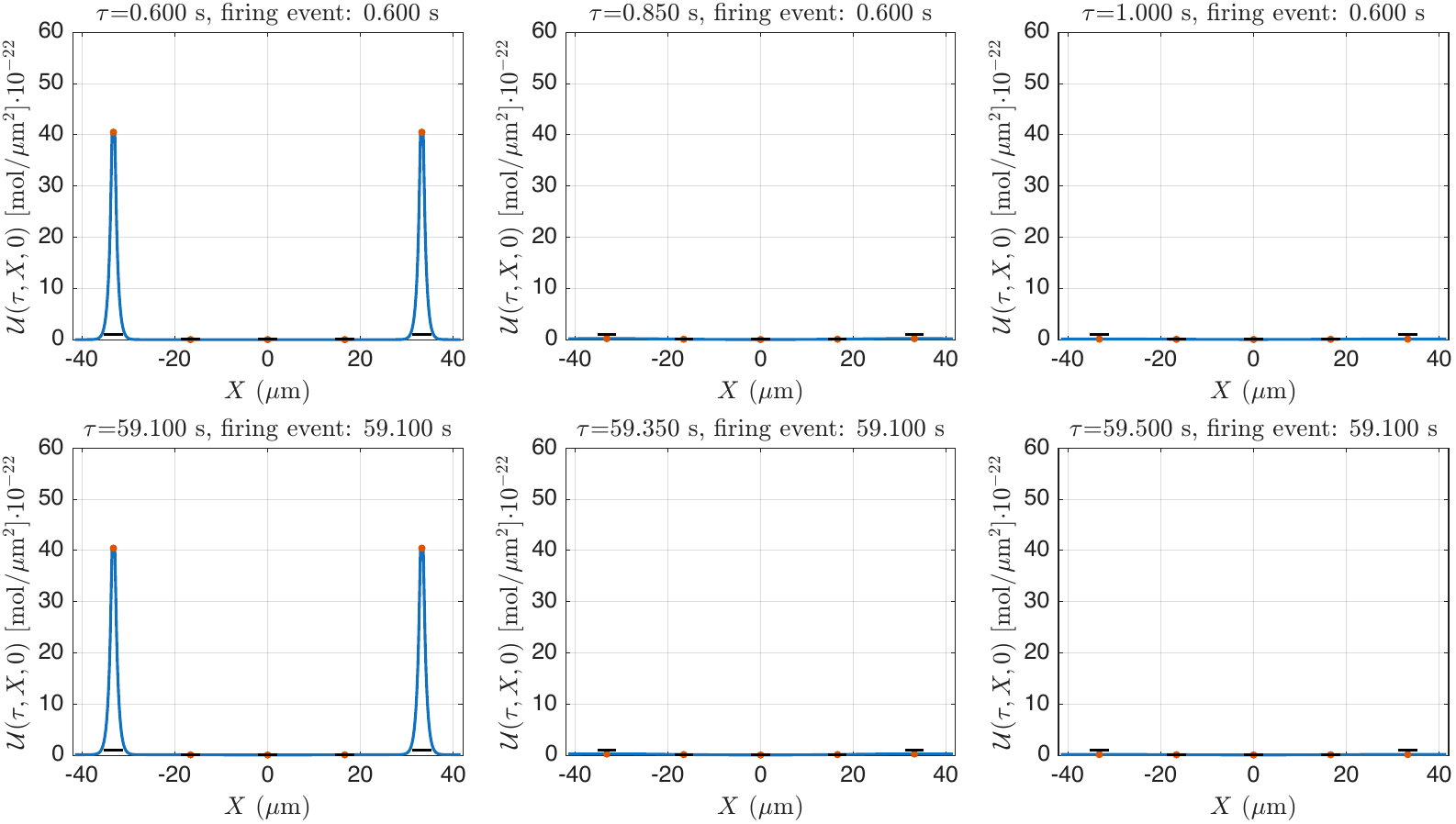}
    \caption{Top row: three snapshots of the dynamic serotonin distribution at the first firing event with \SI{2}{\Hz}) (left; firing midpoint $+0 \cdot 1/\freq$), in the middle of the time interval until the second firing event (middle; firing midpoint $+1/2 \cdot 1/\freq$), and right at the start of the next firing event (right; firing midpoint $+8/10 \cdot 1/\freq$). Bottom row: similar snapshots but after a firing event at which the period-averaged dynamics are close to their steady-state. The lower bound of the period-averaged steady-state, $\sfE^\freq_\infty[\nu_j]\cdot c/L^2$, computed as in the proof of Proposition \ref{prop:period-ave} is added into the snapshot with horizontal black bars. All plotted solutions were computed rewriting the 2-D heat kernel in \eqref{2:reduced_bulk_serot} for the extracellular concentration as a sum of exponentials and using the marching scheme outlined in Appendix \ref{app:numerics}. The plotted solution is the global solution matched with the local solution (red dots). All other parameters are set to their default values (see Table~\ref{tab:parameters})}
    \label{fig:snapshots_lineboundaryfiring2Hz}
\end{figure}

Considering similar snapshots during a firing event at a later point in time (i.e., after the system has been allowed to equilibrate), we find similar dynamics as before (Figure \ref{fig:snapshots_lineboundaryfiring2Hz}, bottom row). This indicates that, under this geometry and firing rates, the serotonin concentration is able to return close to baseline, without spreading significantly to nearby varicosities. This is further demonstrated by considering the lower-bound steady-states, converted to be in the appropriate units, $\sfE_\infty^\freq[\nu_j]\cdot c/L^2$ for $j\in\{1, ..., 5\}$, of Proposition \ref{prop:period-ave}, which remain close (though slightly above) to baseline for $j\in\{2, 3, 4\}$ (Figure \ref{fig:snapshots_lineboundaryfiring2Hz}, small horizontal black bars).

This investigation can be repeated for the same geometry but under different firing frequencies. We will specifically consider the fibers firing at 16 Hz firing and 64 Hz (Figures \ref{fig:snapshots_lineboundaryfiring16Hz} and \ref{fig:snapshots_lineboundaryfiring64Hz}, respectively). In these cases, we see that the firing rate is high enough such that the serotonin reaches the middle three varicosities, leading to a significant serotonin `reservoir' that lasts between the firing events.   
\begin{figure}[!hbtp] 
    \centering
    \includegraphics[width=0.85\linewidth]{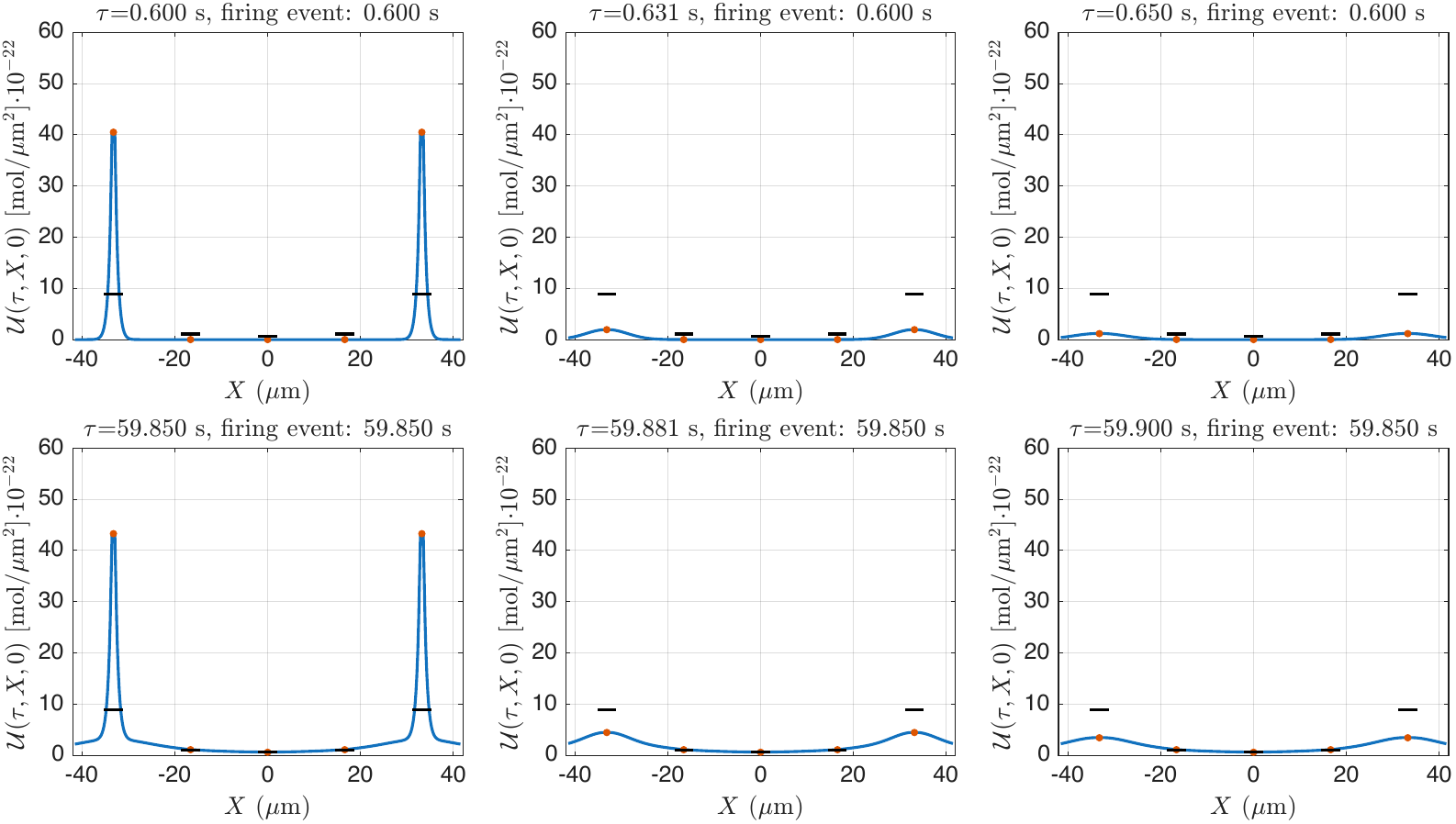}
    \caption{Same caption as Figure~\ref{fig:snapshots_lineboundaryfiring2Hz}, but with firing at \SI{16}{\Hz}. An emerging serotonin reservoir can be observed between the firing varicosities.}
    \label{fig:snapshots_lineboundaryfiring16Hz}
\end{figure}
\begin{figure}[!hbtp] 
    \centering
    \includegraphics[width=0.85\linewidth]{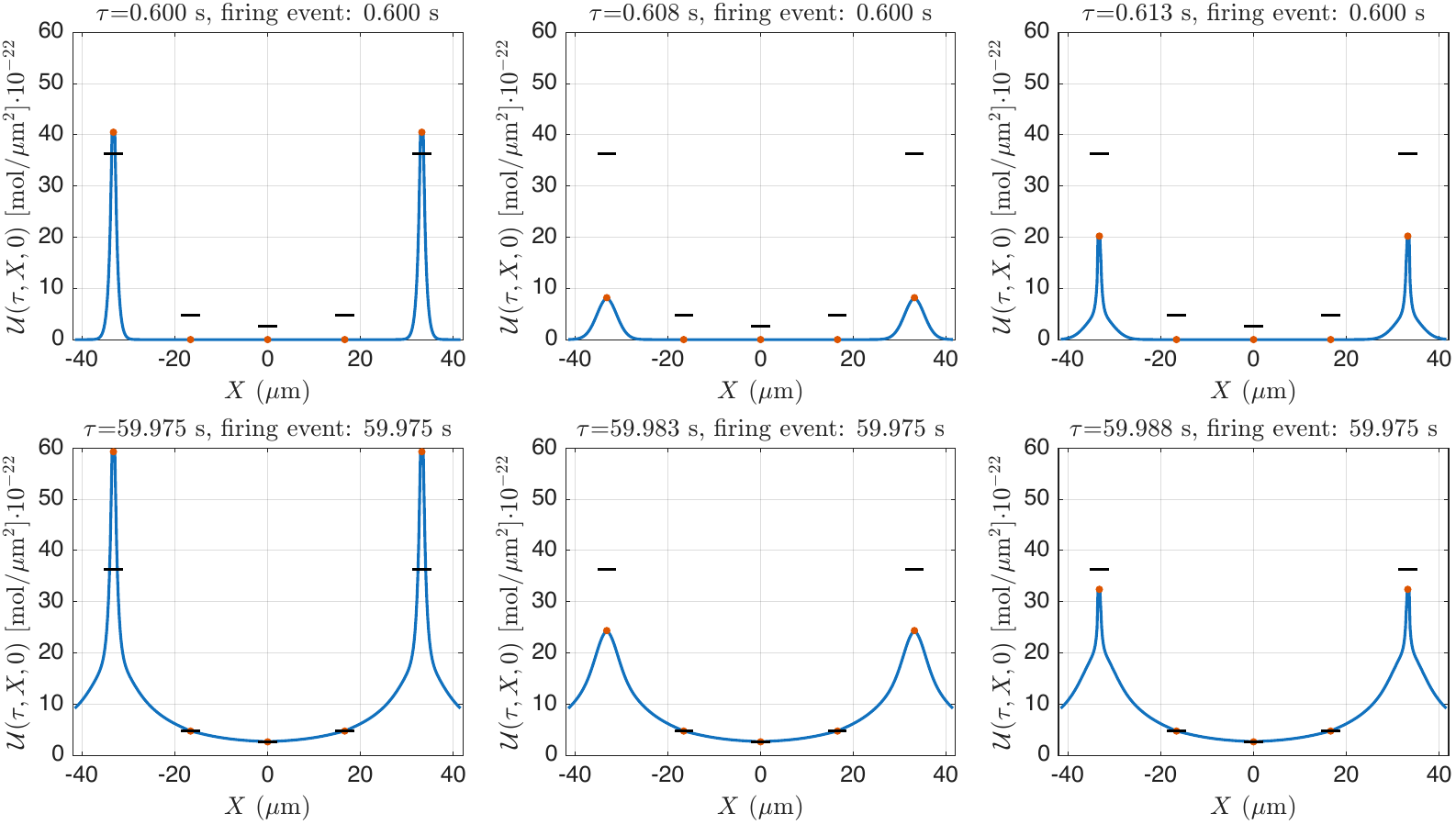}
    \caption{Same caption as Figure~\ref{fig:snapshots_lineboundaryfiring2Hz}, but with firing at \SI{64}{\Hz}. The serotonin reservoir between the firing varicosities reaches higher levels.}
    \label{fig:snapshots_lineboundaryfiring64Hz}
\end{figure}

Continuing with the same geometry and firing configuration (i.e., varicosities at $j=1$ and $j=5$ fire at 2 Hz), we now examine the dynamics across time \textit{within} each of the five varicosity neighborhoods. These correspond to the height of the magenta dots in Figure~\ref{fig:snapshots_lineboundaryfiring2Hz}, scaled by $L^2/c$ to convert concentrations to total amounts (Figure~\ref{fig:2Hz}). In both the initial transient (left column) and several periods later, once the system has settled into its periodic steady-state (right column), $\mu_j(\tau)$ is dominated by sharp spikes at the firing sites $j=1$ and $j=5$, with smaller, delayed responses at the non-firing sites hidden beneath them on the shared vertical scale. The symmetry of the arrangement implies that the dynamics at $j = 1$ and $j = 2$ should exactly match those of $j = 5$ and $j = 4$, respectively, and our numerical simulations confirm this.
\begin{figure}[!hbtp]
    \centering
	\includegraphics[width=.8\linewidth]{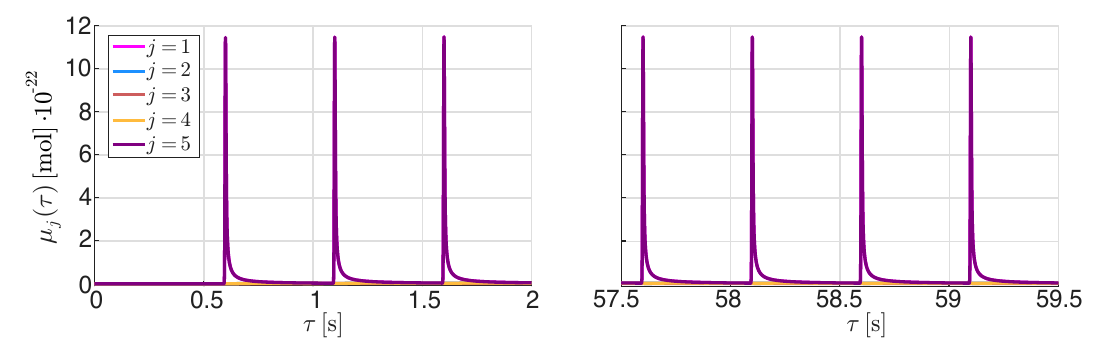}
    \caption{The dynamics of total concentrations in the five varicosity neighborhoods along the 1-D section of Figure \ref{fig:example_fibers} when the varicosity sites at $j = 1$ and $j = 5$ fire in-phase at \SI{2}{\Hz} for the initial transient (left) and several periods later (right). Due to the symmetry of the arrangement, the curves for $j=1$ and $j=2$ are precisely on the ones for $j=5$ and $j=4$, respectively. All other parameters are set to their default values (see Table~\ref{tab:parameters}).}
    \label{fig:2Hz}
\end{figure}

Figure~\ref{fig:2HzCompare} provides a closer look at these dynamics on a per-site basis, alongside two reference quantities from Proposition~\ref{prop:period-ave}: the lower bound $\sfE^\freq_\infty[\nu_j]$ (dashed lines) and the numerically estimated period-average $\sfE^\freq_\tau[\mu_j]$ (dash-dotted lines). The latter is obtained by first solving numerically for $\mu_j$ and then approximating the integral $\int_\tau^{\tau+1/\mathrm{f}}\mu_j(s) \; ds$ by quadrature to yield $\sfE^\freq_\tau[\mu_j]$. At this resolution, the subtle increases at non-firing sites and the time lag in $\mu_j(\tau)$ across $j$ become clearly visible. Because the system starts from rest, the numerically computed period-average $\sfE^\freq_\tau[\mu_j]$ initially lies below $\sfE^\freq_\infty[\nu_j]$ (left column). It then ramps up toward the bound as the system approaches its periodic steady-state. By the time window shown in the right column, the two quantities more closely agree. Note that the y-axis is zoomed in for clarity, and the actual gap between the lower bound and the numerical estimate is quite small.
\begin{figure}[!hbtp]
    \centering
    \includegraphics[width=0.9\linewidth]{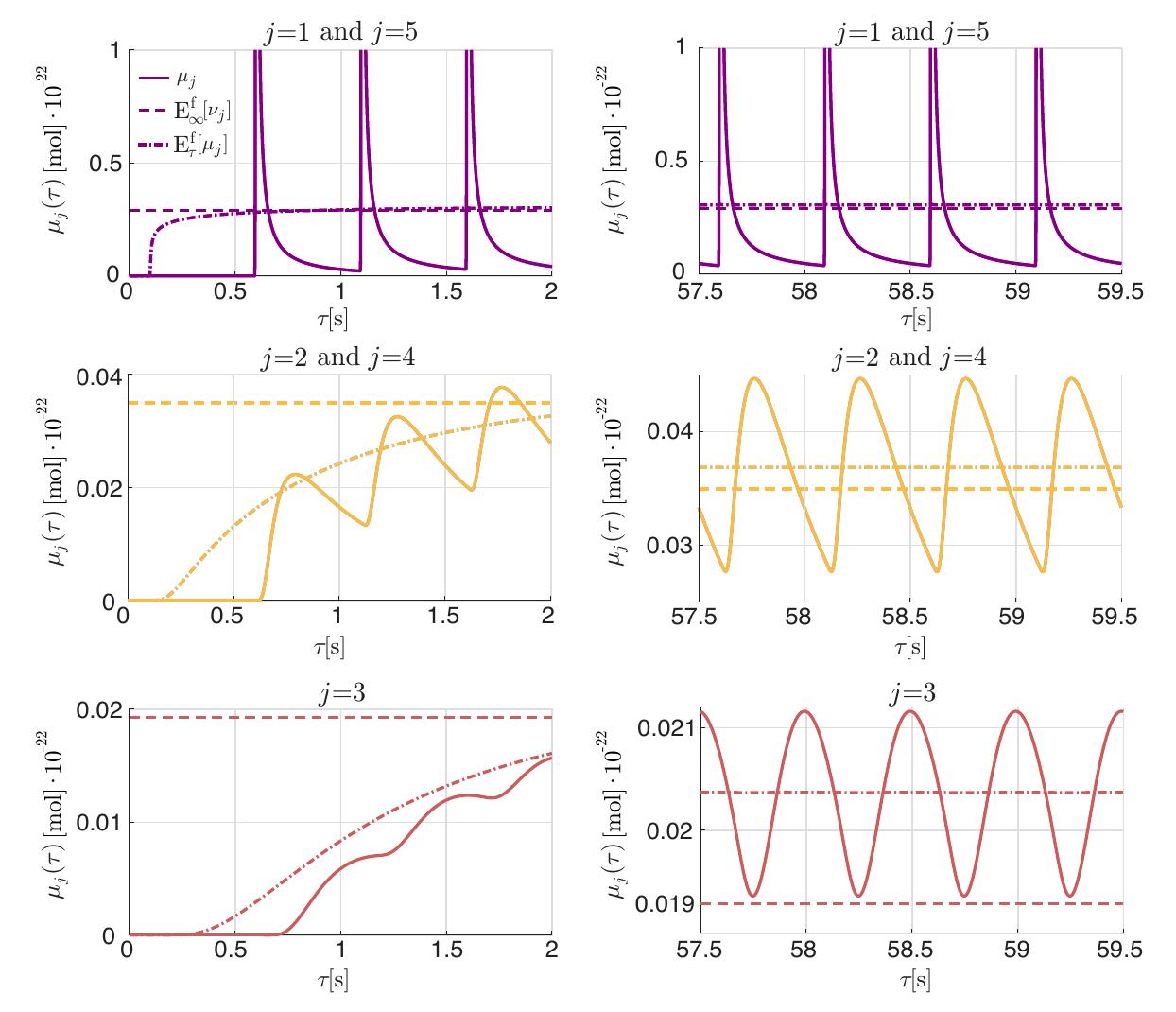}
    \caption{The dynamics of $\mu_j(\tau)$ (solid) and $\sfE^\tau_\tau[\mu_j]$(dash-dotted) in the five varicosity neighborhoods along the 1-D section of Figure \ref{fig:example_fibers} when the varicosity sites at $j = 1$ and $j = 5$ fire in-phase at 2 Hz for the initial transient (left) and several periods later (right). The dashed lines are $\sfE^\freq_\infty[\nu_j]$, the lower bound of the period-averaged steady-state. Due to the symmetry of the arrangement, the curves for $j=1$ and $j=2$ are precisely on the ones for $j=5$ and $j=4$, respectively. All other parameters are set to their default values (see Table~\ref{tab:parameters}).}
    \label{fig:2HzCompare}
\end{figure}

The difference of the true, numerically computed, period-averaged dynamics $\sfE^\freq_\tau[\mu_j]$ with the lower-bound period-averaged dynamics $\sfE^\freq_\tau[\nu_j]$ for in-phase 2 Hz firing of the left varicosity and right varicosity can be found in Figure \ref{fig:tave_diff}. We notice that, the lower the average serotonin amount around the particular varicosity, the tighter the lower-bound period-averaged dynamics are to $\sfE^\freq_\tau[\mu_j]$. Recall that this lower-bound was derived via Jensen's inequality (Proposition \ref{prop:period-ave}) to deal with the Michaelis-Menten nonlinearity. But when the serotonin amount is low, this nonlinearity can be well approximated by a linear function, which leads to a tighter match with Jensen's inequality as observed.

\begin{figure}[!hbtp]
    \centering
    \includegraphics[width=0.8\linewidth]{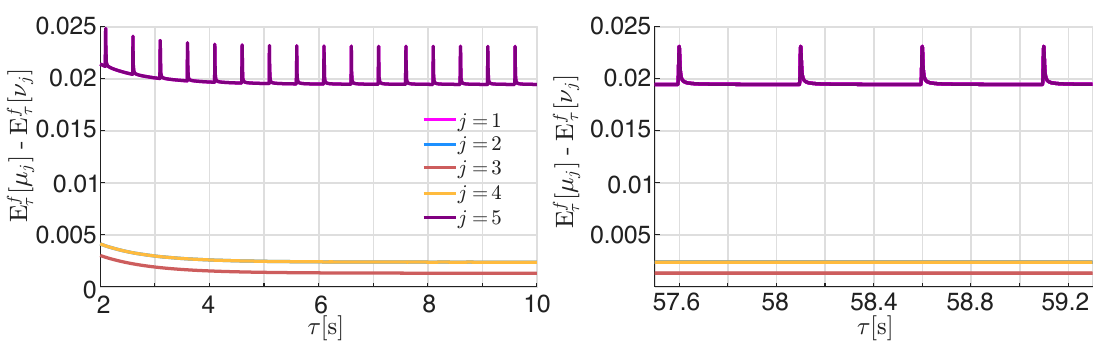}
    \caption{The difference of the true, though numerically computed, $\sfE^\freq_\tau[\mu_j]$ with the lower-bound period-averaged $\sfE^\freq_\tau[\nu_j]$ of the dynamics of total concentrations in the five varicosity neighborhoods along the 1-D section in 2-D space of Figure \ref{fig:example_fibers} when the left varicosity and the right varicosity fire in-phase at \SI{2}{\Hz}. Note that the oscillatory artifacts, caused by averaging over one period while using a discrete time-step of $\Delta t=0.03$ for $\sfE^\freq_\tau[\mu_j]$, have the same period as the firing. Due to the symmetry of the arrangement, the curves for $j=1$ and $j=2$ are precisely on the ones for $j=5$ and $j=4$, respectively. All other parameters are set to their default values (see Table~\ref{tab:parameters}).}
    \label{fig:tave_diff}
\end{figure}

This lower bound can be easily calculated from our system of equations. Due to the relatively tight agreement, we can then use it to quickly estimate the serotonin reservoirs that are built up over time for different firing frequencies. To report these reservoirs in physically meaningful units that can be compared with experimental measurements, we first convert the non-dimensional amounts $\sfE^\freq_t[v_j]$ produced by our 2-D model into three-dimensional concentrations. To do so, we envision our 2-D geometry as corresponding to the imaging of a brain slice of thickness $\Delta z= \,\SI{1}{\micro\meter}$, so that each disc of radius $R_0$ is interpreted as the cross-section of a cylinder of height $\Delta z$ and volume $\pi R_0^2\Delta z$. Assuming our estimate provides the projected count in this volume, the nM concentration is then obtained by multiplying $\sfE^\freq_t[v_j]$ by the conversion constant $c_c=\mu_c/(\pi R_0^2h)\approx 353.67$ nM. Table \ref{tab:lower-bound_concentrations} summarizes the resulting serotonin concentrations, rounded to the second decimal place, for various neurologically relevant frequencies mentioned in \cite{Zhang2025b}.

\begin{table}[!hbtp]
\footnotesize
\caption{Long-term lower bounds of the serotonin concentrations in {\rm{nM}} (i.e., $\sfE^\freq_\tau[v_j]\cdot c_c$ with dimensional concentration conversion constant $c_c:= \mu_c/(\pi R_0^2\Delta z)\approx 353.67$ \rm{nM}, where $\Delta z = \SI{1}{\micro\meter}$) after $\tau = \SI{60}{s}$ rounded to the second decimal place in the five varicosity neighborhoods depicted on the right in Figure \ref{fig:example_fibers}, with all other parameter values as reported in Table~\ref{tab:parameters}. The potential brain states associated with each frequency are based on \cite{Zhang2025b}. At the rates exceeding \SI{20}{\Hz}, the neurochemical profiles of neurons have not been definitively characterized.} \label{tab:lower-bound_concentrations}
\centering
\begin{tabular}{||l l l l l l||} 
 \hline
 Frequency [Hz] & $c_c\sfE^\freq_{\SI{60}{s}} [v_1]$ [nM] & $c_c\sfE^\freq_{\SI{60}{s}} [v_2]$ & $c_c\sfE^\freq_{\SI{60}{s}} [v_3]$ & $c_c\sfE^\freq_{\SI{60}{s}} [v_4]$ & $c_c\sfE^\freq_{\SI{60}{s}} [v_5]$ \\ [0.5ex] 
 \hline\hline
 $2$ (wakefulness or alertness) &103.04  &12.35  &6.82 &12.35 &103.04\\
 \hline
 $8$ (reward expectations) &432.15  &52.62  &29.01  &52.62  &432.15 \\
 \hline
 $16$ &884.49  &109.54  &60.39  &109.54  &884.49 \\
 \hline
 $25$ (sensory stimuli) &1398.18  &175.76  &96.99  &175.76  &1398.18 \\
 \hline
 $32$ (unexpected rewards) &1799.03  &228.27  &126.11  &228.27  &1799.03 \\
 \hline
 $64$ (sexual or social rewards) &3636.31  &474.31  &263.71  &474.31  &3636.31 \\  [1ex] 
\hline
\end{tabular}
\end{table}

We now turn our attention to the 
long-time estimate of the spike maxima outlined in Lemma \ref{lem:longterm-spikemax}, as it remains to address the tightness of this estimate to the true long-term spike maxima. Considering the same geometry and firing configuration used thus far, we can find that this estimate is reasonable for firing frequencies of 2 Hz, 16 Hz, and 64 Hz, but it performs best for 16 Hz (Figure \ref{fig:spikemax}). Specifically, one observes that after a short transient, this estimate appears to be especially accurate for the non-firing varicosity neighborhood at all frequencies, however it noticeably  underestimates the spiking in varicosity neighborhoods with 2 Hz firing, while overestimating it at 64 Hz firing. The values of the maxima after 60 s and their estimates are provided in Table~\ref{tab:spike-max_concentrations}.

\begin{figure}[!hbtp] 
    \centering
    \includegraphics[width=0.95\linewidth]{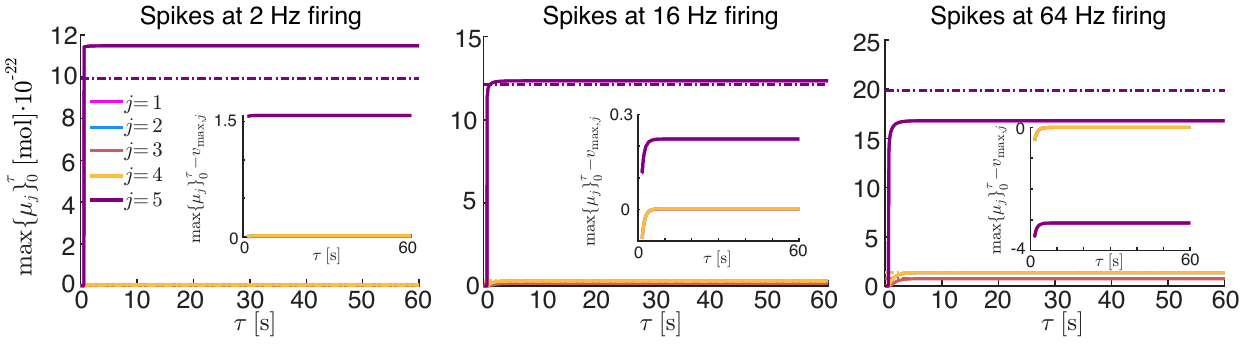}
    \caption{The running maxima of the serotonin amounts at all five varicosities along the section in Figure \ref{fig:example_fibers} (right) as solid lines and the corresponding spike maxima estimates in the same color as dash-dotted lines for spikes at \SI{2}{\Hz} (left), \SI{16}{\Hz} (center), and \SI{64}{\Hz} (right) respectively. The insets show the difference of the solid with the dashdotted lines of the left column, which is the absolute estimation error of the provided expression in Lemma \ref{lem:longterm-spikemax}. Due to the symmetry of the arrangement, the curves for $j=1$ and $j=2$ are precisely on the ones for $j=5$ and $j=4$, respectively. All other parameters are set to their default values (see Table~\ref{tab:parameters}).}    \label{fig:spikemax}
\end{figure}

\begin{table}[!hbtp]
\footnotesize
\caption{Spike maxima of the serotonin concentrations in {\rm{nM}} after \SI{60}{\s} (i.e., $\max_{0\leq \tau \leq 60 \,[s]}\{\mu_j\} \cdot c_c$, with dimensional concentration conversion constant $c_c:= \mu_c/(\pi R_0^2\Delta z)\approx 353.67$ \rm{nM}, where $\Delta z = \SI{1}{\micro\meter}$), along with their estimates from Lemma \ref{lem:longterm-spikemax} and estimation errors rounded to the second decimal place in the five varicosity neighborhoods depicted on the right in Figure \ref{fig:example_fibers}, with all other parameter values as reported in Table~\ref{tab:parameters}}. \label{tab:spike-max_concentrations}
\centering
\begin{tabular}{||l l l l l l||} 
 \hline
 Frequency [Hz]: variable [nM] & $j=1$ & $j=2$ & $j=3$ & $j=4$ & $j=5$ \\ [0.5ex] 
 \hline\hline
 2: $\max_{0\leq \tau \leq 60 \,[s]}\{u_j\} \cdot c_c$ &4062.86  &15.81  &7.48 &15.81 &4062.86 \\
 \hline
 2: $v_{\max,j}\cdot c_c$ & 3504.53  &12.35  &6.82 &12.35 & 3504.53 \\
 \hline
 2: $(\max_{0\leq \tau \leq 60 \,[s]}\{u_j\} - v_{\max,j})\cdot c_c$ & 558.33 &3.45  &0.66 &3.45 & 558.33 \\
\hline \hline
 16: $\max_{0\leq \tau \leq 60 \,[s]}\{u_j\} \cdot c_c$ &4364.73  &110.46  &60.89 &110.46 &4364.73 \\
 \hline
 16: $v_{\max,j}\cdot c_c$ & 4285.98  &109.54  &60.39 &109.54 & 4285.98  \\
 \hline
 16: $(\max_{0\leq \tau \leq 60 \,[s]}\{u_j\} - v_{\max,j})\cdot c_c$ & 78.75  &0.92  &0.50 &0.92 & 78.75 \\ 
\hline \hline
64: $\max_{0\leq \tau \leq 60 \,[s]}\{u_j\} \cdot c_c$ &5939.75  &474.51  &263.97 &474.51 &5939.75 \\
 \hline
 64: $v_{\max,j}\cdot c_c$ & 7037.80 &474.31  &263.71 &474.31 & 7037.80 \\
 \hline
 64: $(\max_{0\leq \tau \leq 60 \,[s]}\{u_j\} - v_{\max,j})\cdot c_c$ & -1098.04  &0.19  &0.26 &0.19 & -1098.04\\  [1ex]
 \hline
\end{tabular}
\end{table}

Since neurons naturally switch from one firing frequency to another when different stimuli are present, it is of interest to incorporate such switching into our investigation. Here, we show the dynamics of our system for the case where the firing occurs at initially 16 Hz, then switches to 64 Hz firing, and lastly to 2 Hz firing (Figure \ref{fig:switching16Hz64Hz2Hz}, solid line), along with the corresponding dynamics of the lower bound on the period-averaged dynamics, $\sfE^\freq_\tau[\nu_j]$ (dashed line). We note that care must be taken when computing the lower-bound period-averaged dynamics $\sfE^\freq_\tau[\nu_j]$ at the time steps around the switching event from one firing frequency to another. We observe that while the firing varicosities ($j = 1$ and 5) react immediately to the change in firing rates (as expected), there is a noticeable lag at the non-firing varicosities, with the serotonin ``reservoirs" taking a few seconds to accumulate.   
\begin{figure}[!hbtp]
    \centering
    \includegraphics[width=0.8\textwidth]{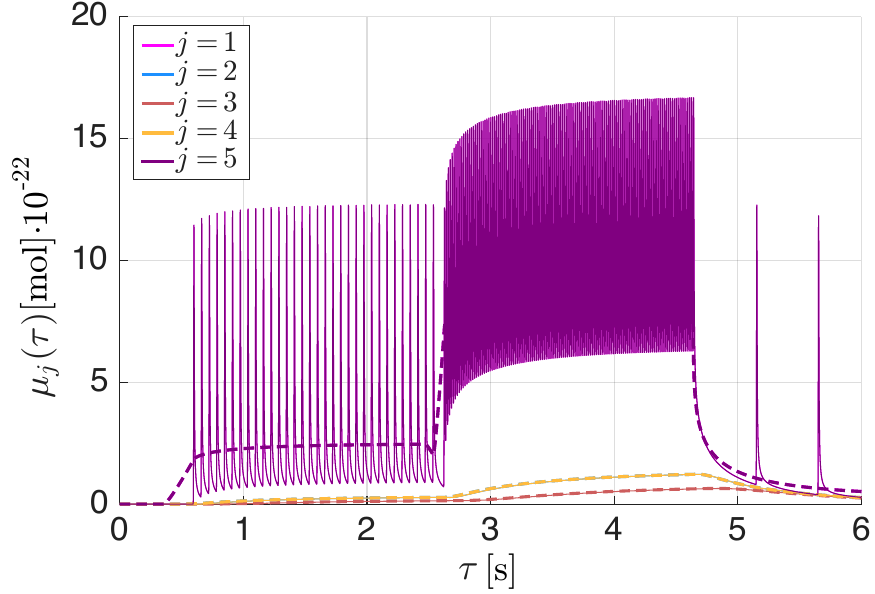}
    \caption{The dynamics of total concentrations in the five varicosity neighborhoods along the 1-D section in 2-D space of Figure \ref{fig:example_fibers} when the left varicosity ($\mu_1$) and the right varicosity ($\mu_5$) fire in-phase with initially \SI{16}{\Hz}, then with \SI{64}{\Hz}, and lastly with \SI{2}{\Hz}. One can observe that a steady-state of the period-averaged dynamics must exist due to the decay of the asymptote for \SI{64}{\Hz} firing when firing with \SI{2}{\Hz} after. The non-spiking dashed lines are the dynamics of the lower bound on the period-averaged steady-state, $\sfE^\freq_\tau[\nu_j]$. Due to the symmetry of the arrangement, the curves for $j=1$ and $j=2$ are precisely on the ones for $j=5$ and $j=4$, respectively. All other parameters are set to their default values (see Table~\ref{tab:parameters}).}
    \label{fig:switching16Hz64Hz2Hz}
\end{figure}

\subsection{Multiple rows of varicosities}\label{sec:multipleRows}
We now focus on changing the geometry of our problem by allowing for neighboring varicosities on each straightened fiber of Figure \ref{fig:example_fibers} (middle), which are now put in a parallel fiber setup. By iteratively adding a line of varicosities above, and then below, our single line, we inductively illustrate what quantitative changes of serotonin concentration are brought by additional varicosities in closer proximity.

We begin with a system of five fibers, each with two varicosities, with the leftmost and rightmost fibers firing periodically. In this situation, we have two lines of varicosities (Figure \ref{fig:2Hztwolines}, top row), and examine the dynamics of $\mu_j^{(2)}$ for a set of varicosities (the index $^{(2)}$ is added to denote the presence of two lines of varicosities). We also examine the steady-state of the lower-bound dynamics of Proposition \ref{prop:period-ave} to the extracellular serotonin (Figure \ref{fig:2Hztwolines}, top right). We find that the local dynamics in the neighborhood of each varicosity closely resemble those of the single-varicosity case at comparable points in time, with the main differences being a higher serotonin buildup and a distinct profile of extracellular serotonin concentration produced by the two rows of varicosities. This similarity carries through as well to the numerically estimated period-averaged dynamics $\sfE_\tau^\freq[\mu_j^{(2)}]$ (dash-dotted), and the steady-state of the lower-bound period-averaged dynamics $\sfE_\infty^\freq[\nu_j^{(2)}]$ (dashed), as shown in Figure~\ref{fig:twolines_Compare}.

\begin{figure}[!hbtp]
    \centering  
    \includegraphics[width=0.9\textwidth]{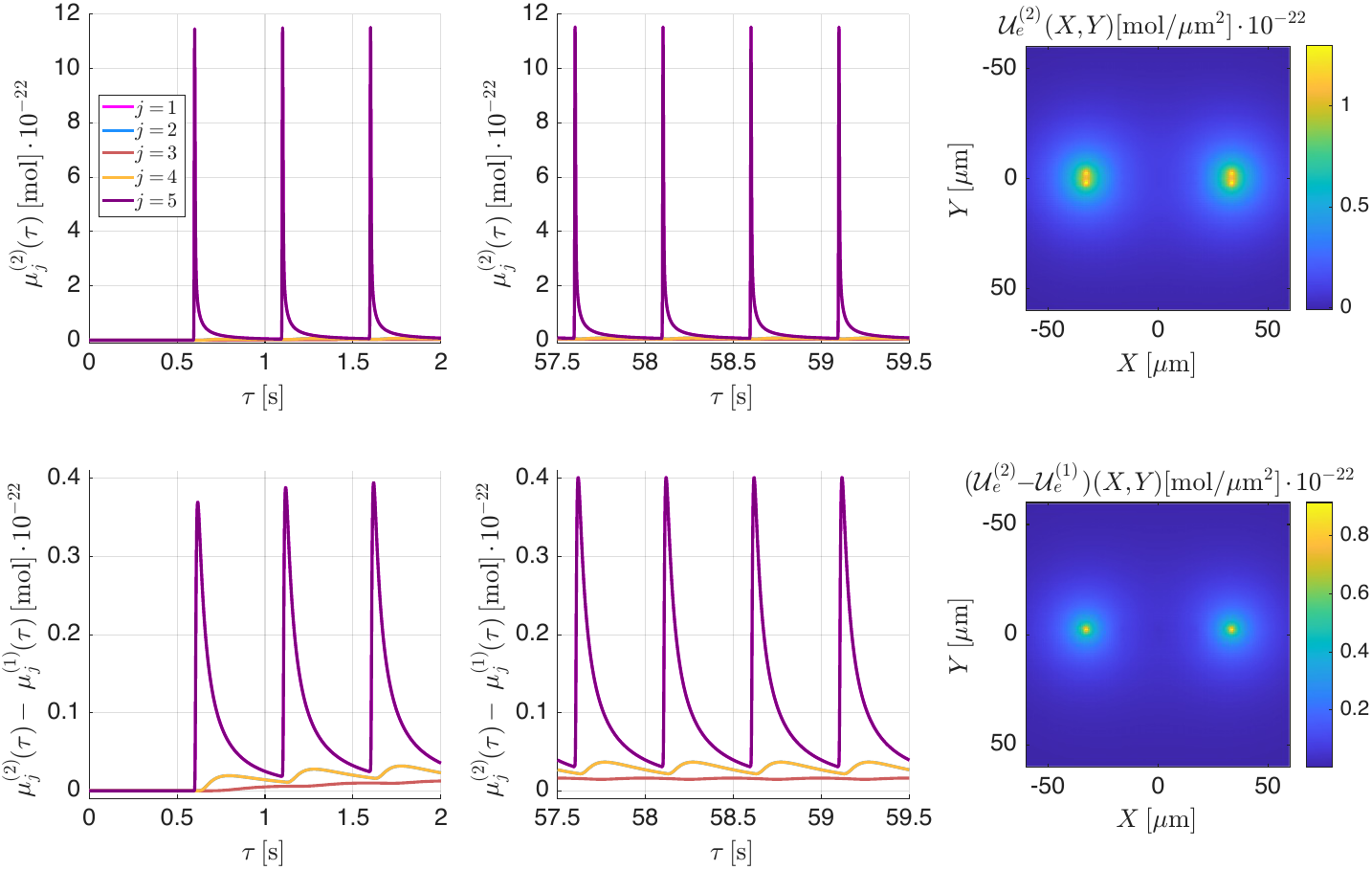}
    \caption{Top: The dimensional serotonin amount $\mu_j^{(2)}(\tau)$ for early (left column) and late (center column) time points. The right column shows the steady-state of the lower-bound period-averaged dynamics of Proposition \ref{prop:period-ave} to the extracellular serotonin concentration, $\mathcal{U}_e^{(2)}$. Bottom: Same as a the top row, except the difference between the single-varicosity, $\mu_j^{(1)}(\tau)$,  and two-varicosities, $\mu_j^{(2)}(\tau)$, setup along the bottom varicosity line of the five fibers. Firing frequency: \SI{2}{\Hz}. All other parameters are set to their default values (see Table~\ref{tab:parameters}). Due to the symmetry of the arrangement, the curves for $j=1$ and $j=2$ are precisely on the ones for $j=5$ and $j=4$, respectively.}
    \label{fig:2Hztwolines}
\end{figure}

To better highlight the difference between this setup of varicosities and our previous configuration, we next find the quantitative difference between the one-varicosity and two-varicosities setup on the five fibers (Figure \ref{fig:2Hztwolines}, bottom). In addition to the bigger buildup of serotonin, the spikes per firing event are also higher in the two-varicosity setup (Figure \ref{fig:2Hztwolines}, bottom, left and center panels). Further, we find significant spatial differences in the steady-state of the lower bound solution that originate, but spread beyond, at the location of the newly firing varicosity sites.

We iterate our exploration by adding an additional line of varicosities to consider three-varicosities setup on each of the five fibers (Figure \ref{fig:2Hzthreelines}). Again, we find the dynamics of $\mu_j^{(3)}$ to be quite similar to before, though the three firing varicosities are now quite visible in the steady-state of the lower-bound dynamics (Figure \ref{fig:2Hzthreelines}, top row). The same holds true for $\sfE_\tau^\freq[\mu_j^{(3)}]$ and $\sfE_\infty^\freq[\nu_j^{(3)}]$ (Figure \ref{fig:threelines_Compare}). Taking the difference of these time traces with the single-varicosity setup, along with a similar calculation for the steady-state plot, we find significant increase in transient peaks and long-time serotonin reservoirs across the domain, though most notable near the varicosities that are firing (Figure \ref{fig:2Hzthreelines}, bottom row).
\begin{figure}[!hbtp]
    \centering
    \includegraphics[width=0.9\textwidth]{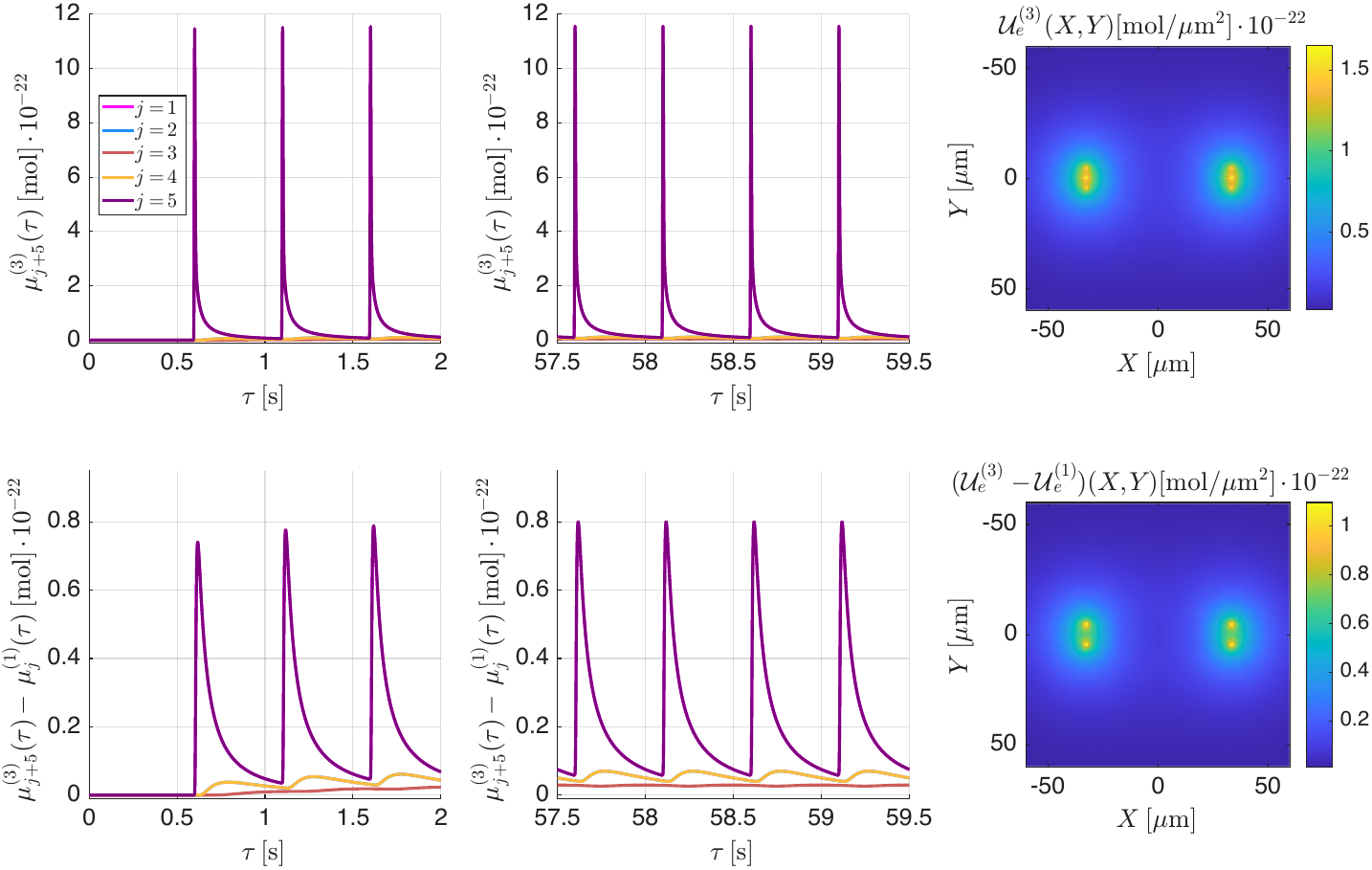}
    \caption{Same caption as Figure~\ref{fig:2Hztwolines}, but now comparing the middle horizontal line through varicosities of the setup with three varicosities per fiber with the single varicosity per fiber case.}
    \label{fig:2Hzthreelines} 
\end{figure}

\subsection{Comparison to linear local uptake} \label{sec:comparison_linear}
Thus far, we have used Michaelis–Menten uptake kinetics \cite{handy2021revising} to model serotonin uptake close to each varicosity. In this section we compare numerical results of the full time solutions and period-averaged steady-states for Michaelis-Menten uptake of the form 
\begin{eqnarray*}
    f_\text{uptake}(u_j) = -\frac{Vu_j}{K + u_j},
\end{eqnarray*}
with the ones for linear uptake of the form
\begin{eqnarray*}
    f_\text{L,uptake}(u_j) = -\frac{V}{K} u_j.
\end{eqnarray*}
Due to the loss of this nonlinearity, we note that our previously derived lower bound for the period-averaged dynamics, derived in Proposition \ref{prop:period-ave}, becomes exact. Further, we can solve for $\sfE^\freq_\infty[\vv_\text{L}] = \sfE^\freq_\infty[{\bf u}_\text{L}]$ analytically, yielding
\begin{align*}
    \sfE^\freq_\infty[{\bf u}_\text{L}] = \left( \frac{V}{K}\mathrm{M}+\gamma_\infty I\right)^{-1}r\freq\mathrm{M}\mathbbm{1}_\freq.
\end{align*}

For this comparison, we consider the setup with three varicosities per fiber. 
As can be seen from Figure \ref{fig:2Hzthreelines_linear} (top row), the period-averaged steady-states using $f_\text{L,uptake}$ are unsurprisingly similar to the ones when using $f_\text{uptake}$ (see Figure~\ref{fig:2Hzthreelines} (top row)). However, we clearly observe that the spikes are lower when using $f_\text{L,uptake}$, along with the fact that the period-averaged steady-state in the extracellular space and at the varicosities is always lower when using $f_\text{L,uptake}$ instead of the more realistic $f_\text{uptake}$ (Figure \ref{fig:2Hzthreelines_linear}, bottom row). 
\begin{figure}[!hbtp]
    \centering 
    \includegraphics[width=0.9\textwidth]{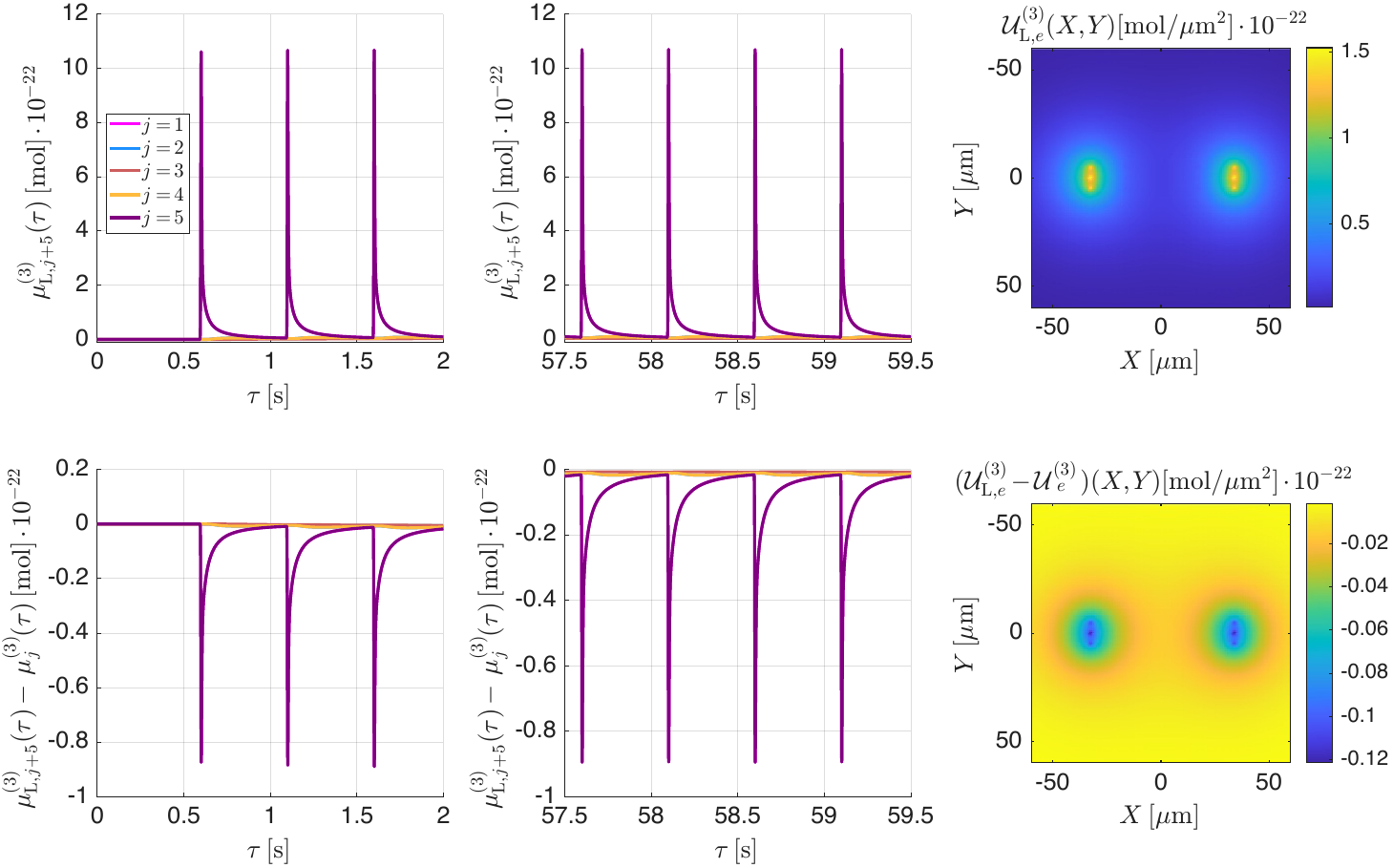}
    \caption{Top: The dimensional serotonin amount $\mu_{L,j}^{(3)}$ at \SI{2}{\Hz} for the linear local uptake model, (i.e., $f_{\text{L,uptake}}$), for early (left) and late (center) time points.. The right column shows the steady-state of the lower-bound period-averaged dynamics of Proposition \ref{prop:period-ave} to the extracellular serotonin concentration, $\mathcal{U}_e^{(3)}$, with the linear model, $\mathcal{U}_{L,e}^{(3)}$. Bottom: Same as a the top row, except the difference between the linear and nonlinear models along the middle varicosity line of the five fibers. Due to the symmetry of the arrangement, the curves for $j=1$ and $j=2$ are precisely on the ones for $j=5$ and $j=4$, respectively. All other parameters are set to their default values (see Table~\ref{tab:parameters}).}
    \label{fig:2Hzthreelines_linear}
\end{figure}

These results suggest that even at the low firing rate of 2 Hz, there is significant spillover of serotonin caused by accounting for the saturation of SERT. Indeed, even at low firing rates, the concentration at the varicosities during the spiking events is high enough to reliably saturate SERT leading to serotonin molecules escaping the nearby neighborhoods and adding to the serotonin reservoirs. This result is consistent with~\cite{handy2018}, which investigated this type of `spillover' at traditional neuronal synapses.

\section{Discussion} 
\label{sec:discussion}

This work introduces a compartmental-reaction diffusion framework that resolves serotonin dynamics around serotonergic varicosities while retaining analytical traction and computational efficiency. By capturing the solution to the full PDE–ODE formulation with an integro-ODE system with memory kernels, we reveal how spatial coupling through the extracellular space shapes local serotonin profiles across a network of release and uptake sites. The resulting theory explains how firing rate, varicosity density, and saturable uptake jointly generate extracellular reservoirs that depend on geometry and timing.

In this study, we further develop and extend results from~\cite{pelzward2025synchronized}, building upon one of their key results, adapted here in Proposition~\ref{prop:2ode}. Specifically, throughout this work we consider the limit of infinite permeabilities of the compartment boundaries in a well-defined way, so that, in contrast to the previous result, the compartments are boundary-less. This enables us to incorporate them as neighborhoods around the varicosities. Those neighborhoods then allow us to incorporate the dynamics near serotonin varicosities that are known biologically to date, without needing to specify precise permeabilities, which are also not yet fully known. Our theory additionally incorporates non-autonomous dynamics through the periodic (or stochastic) firing kicks, which necessitates the introduction of a period-averaging operator to investigate steady-state levels. This operator, not considered previously, required the development of new key results (Proposition~\ref{prop:period-ave} and Lemma~\ref{lem:longterm-spikemax}) to establish the existence and bounds of the system at steady-state firing. With the period-averaged system, we are able to leverage a similarly derived steady-state system and the efficient numerical time-marching scheme in~\cite{pelzward2025synchronized}, though in our specific context, namely infinite permeabilities. Hence, we have further developed the theory of~\cite{pelzward2025synchronized} to more realistically model and analyze the serotonin system at the level of individual varicosities.

The flexible mathematical framework developed in this study can support recent breakthroughs in experimental methods, some of which employ genetically-engineered probes to measure extracellular serotonin concentrations with high spatial precision~\cite{Zhang2025b}. It facilitates data interpretation and can predictively model the serotonin system’s spatial and temporal characteristics that cannot be directly or simultaneously captured by these experimental measurements. The need for advanced computational approaches is exemplified by the limited reliability of simplified calculations, despite the cutting-edge accuracy of raw data (e.g., arriving at only one neurotransmitter molecule being released from a synaptic vesicle~\cite{Zhang2025b}). The presented CRD-based framework, adapted specifically for microscale neurotransmitter dynamics, can bridge this gap.

The results obtained in this study support the hypothesis that serotonin neurotransmission can operate in the ``wired" and ``volume" modes \cite{Gianni2023a}. These modes are distinguished by the extent of serotonin diffusion from the release site. In ``wired'' neurotransmission, serotonin molecules released by a presynaptic site (varicosity) act on a single postsynaptic target, with no effect on adjacent synapses. Such neurotransmission is employed by many other neurotransmitter systems and can be captured by topological models (``A connects to B"), with no regard to the physical distances between synapses or brain regions~\cite{Lynn2019}. In ``volume" neurotransmission, serotonin can ``spill over" and diffuse away from its release site; as a result, a single presynaptic site can affect multiple postsynaptic sites in a physical neighborhood. In theory, the two modes can induce different computational regimes. However, they do not have to be mutually exclusive; a recent experimental study has suggested that the same neuron can be biased toward one or the other mode, depending on its firing rate~\cite{Zhang2025b}. Our computational analyses, based on a theoretical extension into the entire spatiotemporal continuum, lend strong support for this hypothesis. Specifically, lower firing rates may bias serotonergic neurons toward ``wired" neurotransmission and higher firing rates toward ``volume" neurotransmission, due to the buildup of a steady reservoir of serotonin in the ECS between fibers. Importantly, this transition is a smooth crossover governed by continuous parameters such as firing frequency and varicosity density, rather than a sharp bifurcation. This is consistent with the neuroscience literature, where the boundary between wired and volume transmission is not rigorously defined but instead depends on subjective cutoffs for the diffusing concentration and proximity to the postsynaptic target.

The proposed mathematical framework can tighten the estimates of the extracellular serotonin concentration in various brain regions. Despite decades of experimental research, these values vary over two or more orders of magnitude (also in overlapping regions studied by different researchers), with the cumulative range of approximately 0.1-65 nM \cite{Zhao2023}. These inconsistencies arise partially because of the limitations of invasive techniques that unavoidably damage serotonergic fibers in the sampling region and may also release blood platelets from damaged capillaries (all of which store serotonin at much higher concentrations). It is interesting to note that serotonin estimates in the blood plasma suffer from similar problems~\cite{Brand2011}. Given specific values or ranges of the key parameters, our model predicts the resultant extracellular serotonin concentration, including its spatiotemporal dynamics.   

Our examples used a regular grid of fibers and varicosities to isolate core mechanisms and to facilitate comparison across configurations. However, the model does not require this symmetry. The coupling kernels only depend on inter-site distances and the bulk operator, so the results extend directly to irregular or random varicosity placements \cite{Janusonis2020}. In practice, disordered configurations simply alter the entries of the distance-dependent matrices that govern the memory terms. Hence, the same marching solver and bounds carry over, and the framework can be used to explore disorder-induced variability in local concentrations and reservoir formation. While investigating these other configurations falls outside the scope of this study, they will likely lead to new insights. For example, the trajectories of serotonergic neurons can be modeled as paths of fractional Brownian motion, a stochastic process with memory~\cite{Janusonis2023,Janusonis2025}. In particular, this predicts an accumulation of serotonergic fibers at tissue borders, an observation consistent with experimental data. It may result in a correspondingly inhomogeneous distribution of extracellular serotonin, which can in turn affect the ingrowth of other serotonin fibers that may  sense local serotonin levels, in a (likely nonlinear) feedback loop~\cite{House2025}.

In this work, we considered a biologically realistic range of firing frequencies (2-64 Hz). In mouse brainstem slice preparations, serotonergic neurons in the dorsal raphe nucleus (DRN) produce a range of spontaneous firing rates, most of which fall within 0-4 Hz and show a near-normal statistical distribution at the population level~\cite{Mlinar2016a}. However, some neurons reach much higher firing rates (up to 8 Hz). This information is important for understanding locally sustained firing rates, but in brain slices serotonergic neurons no longer receive inputs from other brain regions (e.g., the major forebrain projections from the medial prefrontal cortex and the lateral habenula \cite{Soiza-Reilly2011}). 

Serotonergic firing patterns in the intact brain are likely to be more complex and depend on the interplay between individual neurons’ transcriptomes~\cite{Okaty2019,Ren2019} and the spike trains received from other neurons. Some serotonergic neurons show drastic changes in their firing rate in the wake-sleep transition and may fall silent in REM sleep~\cite{Jacobs1992,Brown2012}. Some neurons in the DRN of the rhesus macaque show reward-dependent modulation, and some DRN neurons in the cynomolgus macaque encode appetitive and aversive information, with firing rates reaching up to 40 Hz~\cite{Nakamura2008,Hayashi2015}. However, these studies have not identified the recorded neurons neurochemically (the DRN contains many non-serotonergic neurons). In the rat and mouse brains, identified serotonergic neurons have been shown to fire in the range of 0-20 Hz, often with higher rates achieved in phasic responses~\cite{Allers2003,Kocsis2006,Miyazaki2011,Liu2014,Cohen2015,Li2016,Okaty2019,Hashimoto2021}. The upper limit of the firing rate of serotonergic neurons remains poorly understood because the electrophysiological signatures of many serotonergic neuron subtypes described in single-cell RNA-seq studies remain uncharacterized~\cite{Okaty2019,Ren2019}. In this study, we used an extended firing range up to 64 Hz suggested by~\cite{Zhang2025b}. Even if firing rates at the high end of this range are not achieved in the normal brain, they might appear in abnormal or pharmacologically-induced states. With the selected parameter values, our simulations already show an emerging reservoir of extracellular serotonin at the 16 Hz-firing frequency.  

The asymptotics employed in this work rely on a small geometric parameter $\varepsilon$ that captures the ratio of varicosity neighborhood size to inter-varicosity spacing. The concentration conversion factor $c$ scales like $\mathcal{O}(1/\varepsilon)$ for our parameter values, while the theory requires it to be $\mathcal{O}(1)$. In our parameter regime $\varepsilon$ is small but not vanishing, which keeps $c$ numerically modest and the asymptotic approximations accurate in practice. This choice reflects a balance between biological realism and the need for controlled analytical reductions.

It is important to note that a number of advanced software platforms simulate neural dynamics at multiple scales, including NEURON \cite{hines1997neuron}, tools from the Open Brain Institute~\cite{OBI}, and The Virtual Brain \cite{sanz2015mathematical}. These environments increasingly support coarse-grained reaction–diffusion processes. However, the compartmental-reaction diffusion framework developed here targets a distinct regime. It resolves fine-scale, diffusion-mediated information transfer among many small, dynamically active compartments embedded in a continuous medium. Current large-scale platforms do not yet provide this capability at sub-micrometer resolution with memory-dependent coupling, so our approach offers complementary insight at this smaller scale.

The framework invites a few natural extensions. First, coupling the local compartments to simplified neuronal dynamics would close the loop between firing, release, uptake, and feedback modulation, enabling predictions about state-dependent switching between transmission modes. Second, replacing periodic firing with stochastic point processes will address physiologically relevant variability; the Poisson formulation in Appendix C provides a direct path. Third, our framework can be extended further to include heterogeneity in both varicosity sizes and positions, in order to investigate the impact this heterogeneity may have on serotonin concentration in the extracellular space. Lastly, while we formulated the model in two spatial dimensions, which matches common imaging projections and allows us to connect to recently developed mathematical theory, true brain tissue is three-dimensional, and diffusion in it is inherently three dimensional. Extending the strong localized perturbation approach to 3-D can be done in a similar fashion as the development of the 2-D asymptotic theory. However, the currently available 3-D theory tells us that the permeabilities need to scale differently in 3-D~\cite{pelz2024symmetry}, which will yield an asymptotic integro-ODE system of a slightly different form. Specifically, a 3-D extension would replace the 2-D heat kernel $G_{2D}(t, \vx, \vx_j) = \exp(-\sigma t)/t \cdot\exp(-|\vx - \vx_j|^2/(4Dt))$ and its logarithmic local behavior $B(t)\log|\vy| + C(t)$ with the corresponding 3-D Green function structure $G_{3D}(t, \vx, \vx_j) = \exp(-\sigma t)/(4\pi D t)^{3/2} \cdot\exp(-|\vx - \vx_j|^2/(4Dt))$ with local behavior $A(t)/|\vy| + B(t)$ and faster decay for $|\vx- \vx_j|\to\infty$. We expect that the faster spatial decay of concentration with distance in three dimensions will weaken coupling between varicosities; however, this effect must be evaluated in conjunction with the additional varicosities present in a 3-D volume, which may compensate for the reduced coupling. In parallel with the theoretical development, a corresponding numerical marching scheme must also be constructed and validated. Despite these hurdles, this extension is a natural next step that will enable direct comparison with volumetric experiments.

Lastly, the model can be imported with minor modifications to other neurotransmitter systems, such as dopamine and glutamate, where synaptic ``spillover" has been hypothesized under certain conditions~\cite{Shen2014,Walters2020a}. Importantly, it can also be used to model the effects of antidepressants, such as selective serotonin-reuptake inhibitors (SSRIs). While these pharmacological agents immediately increase extracellular serotonin levels, these effects do not produce direct therapeutic effects. Instead, relief appears to be caused by a slower and more complex rebalancing of neurotransmitter systems, leading to neuroplastic changes~\cite{Andrews2015,Witt2023}. Advanced computational approaches can support a deeper understanding of these clinically-relevant processes across several timescales.

\appendix
\appendixformat
\section{Numerical marching scheme}  \label{app:numerics}
    To numerically integrate \eqref{eq:reduced_infty} efficiently and accurately, the marching scheme developed in \cite{pelzward2025synchronized} is applied, along with its careful incorporation of the initial short-time behavior of the coupling functions $B_j$. For convenience, system \eqref{eq:reduced_infty} is reprinted here: 
    \begin{align*} 
    \frac{du_j}{dt} &= f(t, u_j) + B_j(t) \,, 
 \\
    \int_{0}^{t} B_j^{\prime}(s) E_1(\sigma(t-s))\, ds &= \eta_\infty B_{j}(t) + \gamma_\infty u_{j}(t) \nonumber \\
          & \qquad + \sum_{\stackrel{k=1}{k\neq j}}^{N} \int_{0}^{t} \frac{B_{k}(s) e^{-\sigma(t-s)}}{t-s} e^{-|\vx_j-\vx_k|^2/(4D(t-s))} \, ds\,.
  \end{align*}
  The memory-dependent integrals
  \begin{equation*}
      C_{jk}(t) := \int_{0}^{t} \frac{B_{k}(s) e^{-\sigma(t-s)}}{t-s} e^{-|\vx_j-\vx_k|^2/(4D(t-s))} \, ds \quad \text{and} \quad D_j(t) := \int_{0}^{t} B_j^{\prime}(s) E_1(\sigma(t-s))\, ds
  \end{equation*}
  pose the main difficulty for a numerical scheme. Not only is $E_1(z)$ singular at $z=0$ (though still integrable as a kernel) but the denominator in $C_{jk}$ additionally has a removable singularity. On top of that, at each time step, one needs to integrate over the full time interval until the current time point. Inspired by \cite{jiang2015efficient}, all these issues can be resolved by rewriting the kernels in the form
  \begin{equation*}
      E_1(\sigma t) = \sum_{l=-\infty}^\infty e_l\, e^{s_l t} \quad \text{and} \quad G_{jk}(t):= \frac{e^{-\sigma t}}{t} e^{-|\vx_j-\vx_k|^2/(4Dt)} = \sum_{l=-\infty}^\infty \omega_l\, e^{s_l t}
  \end{equation*}
  and by truncating the sums appropriately.
  
  Any well-behaved time- and space-dependent function $f$ with a Laplace transform $F$ and Bromwich contour $\Gamma_B$ for the inverse-Laplace transform can be written as the inverse-Laplace transform of its Laplace transform in the way
  \begin{eqnarray*}
      f(t, x) &=& \frac{1}{2\pi i} \int_{\Gamma_B} e^{st} F(s, x)\, ds \\
		&=& \frac{\chi}{2\pi} \int_{-\infty}^\infty e^{\chi P(z)t} \, F(\chi P(z), x) \,\cos(\alpha + i z)\, dz.
  \end{eqnarray*}
  In the last equality, the Bromwich contour $\Gamma_B$ has been deformed to a contour bending toward $\Re(s) \to - \infty$ away from $s = 0$ in the complex plane, defined by $s = \chi P(z) := \chi(1-\sin(\alpha+z))$ (see Figure \ref{fig:s_contour}) with parameters $\chi$ and $\alpha$ that can be optimally chosen. Sampling along this deformed contour, to obtain a quadrature for the integral, one only needs finitely many summands to approximate the integral to highest accuracy, due to the other summands corresponding to points along the deformed contour further toward $\Re(s) \to - \infty$ with fast exponential decay of the summands. 
  
  Hence, the approximation 
  \begin{equation*}
      f(t, x) \approx \sum_{l=-n}^n w_l e^{s_l t} =: f_n(t, x),
  \end{equation*}
  can be made with
  \begin{equation*}
    s_l = \chi (1 - \sin(\alpha + ilh))
    \quad \text{and} \quad  w_l = \frac{\chi h}{2 \pi} \cos(\alpha + ilh) \, F(s_l, x)
\end{equation*}
  and some discretization step $h$.
  The approximation error can then be obtained through
  \begin{equation*}
    \| f(t, x) - f_n(t, x)\| \leq \frac{\epsilon_f}{\sqrt{t}}, \quad t \in [\delta, T],
\end{equation*}
  as derived in \cite{jiang2015efficient}. 
  
    \begin{figure}
      \centering
      \includegraphics[width=.95\textwidth]{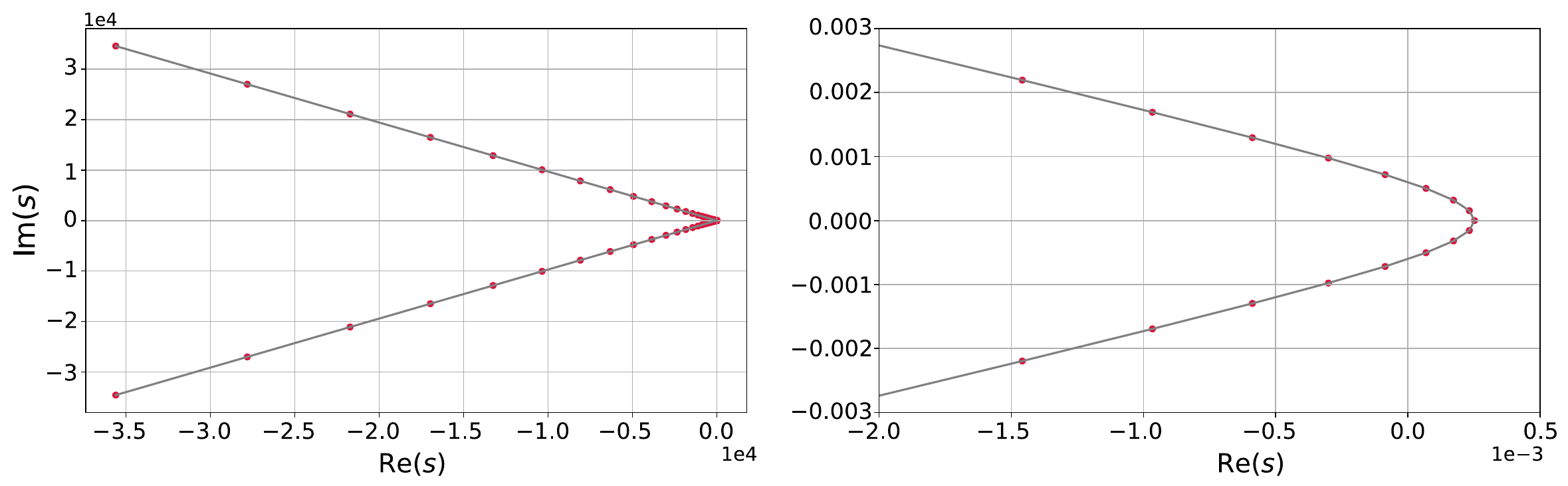}
      \caption{As in \cite{jiang2015efficient}, the deformed Bromwich contour $s = \chi(1-\sin(\alpha+z))$ with discretization points placed at $s_l = \chi (1 - \sin(\alpha+ilh))$ and equidistant step $h$ in the complex $s$-plane, zoomed out (left) and zoomed closer to $s=0$ (right). In this way, only a few $s_l$ are needed for a quadrature for the exponentially decaying summands as $\Re(s)\to-\infty$.  Parameters: $\chi = 0.00089$, $\alpha = 0.8$, and $h = 0.24706$.}
      \label{fig:s_contour}
  \end{figure}
  
  In this way, one can choose the accuracy $\epsilon_f$ and compute $n$ that leads to this accuracy at the start of the program before the numerical time stepping is initiated, in a matter of seconds. The parameters $\alpha$, $\chi$, $h$, $\delta$, and $T$ respectively correspond to the geometry of the deformed contour, its discretization, and the approximation time interval and can be optimized in order to minimize $n$.
  
  Further, we can also obtain approximations of the kernels $E_1(\sigma t)$ and $G_{jk}(t)$, each as a ``sum-of-exponentials". In addition, those sums enable us to develop a marching scheme without the need to integrate over all memory at each time-step, because a function of the form $\cF = \int_0^t g(\tilde{t})\,e^{s_l(t-\tilde{t})}\;d\tilde{t}$ with arbitrary but well-behaved $g$ allows for the local updating 
  \begin{equation*}
      \cF(t+\Delta t) = \cF(t) \,e^{s_l \Delta t} + e^{s_l\Delta t} \int_0^{\Delta t} g(t+z) \, e^{-s_l z} \; dz,
  \end{equation*}
  where $\Delta t$ is the time step of the equidistant discretization of the time domain.
  After incorporating the behavior 
  \begin{equation*}
   B_j(t) \sim -\frac{(u_{j}(0) + t u_j'(0)) \gamma_j}{\log\left({t/(\kappa^\infty_je^{-\gamma_e})}\right)} \, \quad \mbox{as} \quad t \to 0,
  \end{equation*}
  at time $t=\Delta t$, we use a simple Forward-Euler time-stepping for the $B_j$-part on the right-hand side of the ODE part of the integro-ODE system \eqref{eq:reduced_infty} and the Runge-Kutta 4 method for the remaining reaction kinetic terms of the ODE for $u_j$. The approximation of $u_j$ for the next time point is then used in the integro-ODE for $B_j$ using the mentioned marching scheme to obtain the numerical approximation for $B_j$ at the next time point. For more details, see \cite{pelzward2025synchronized}.  

\section{Derivation of the spike maxima estimate} \label{spikemaxDeriv}
The derivation of the long-term spike maxima estimate of Lemma \ref{lem:longterm-spikemax} goes as follows. Let $T_1 \gg 1$ and $T_2 \gg 1$ be large times with approximate release duration of a single spike $T_2 - T_1 = \Delta T := 4\theta$ in the interval $[T_1, T_2]$. To find an approximation for the spiking maxima, denoted by $\vv_{\max}$, we focus on the system for the lower-bound averaged dynamics $\sfE_t^\freq[v_j]$ because it can readily be solved for its steady-state $\sfE_\infty^\freq[v_j]$, $j\in\{1, ..., N\}$:
    \begin{subequations} 
    \begin{align}
        \frac{d\sfE^\freq_t[v_j]}{dt} &= -\frac{V\sfE^\freq_t[v_j]}{K + \sfE^\freq_t[v_j]} + r\freq(\mathbbm{1}_\freq)_j + \sfE^\freq_t[\tilde{B}_j] \,, \label{eq:v_1}
        \\
        \int_{0}^{t} \sfE^\freq_{s}[\tilde{B}_j'] \;E_1(\sigma(t-s))\, ds &= \eta_\infty \sfE^\freq_t[\tilde{B}_{j}] + \gamma_\infty \sfE^\freq_t[u_{j}] \nonumber
        \\ 
          & \qquad + \sum_{\stackrel{k=1}{k\neq j}}^{N} \int_{0}^{t} \sfE^\freq_{s}[\tilde{B}_{k}] \;\frac{e^{-\sigma (t-s)}}{t-s} e^{-|\vx_j-\vx_k|^2/(4D(t-s))} \, ds\,. \label{eq:v_2}
    \end{align}
    \end{subequations}
    In the middle of the interval $[T_1, T_2]$ we let one spike fire with distribution $r\delta_\theta$ in addition to the period-averaged firing $r\freq \mathbbm{1}_\freq$, where we choose the notation $v_{\max,j}$ and $\tilde{B}_{\max,j}$ for the dependent variables after this spike at $T_2$, instead of the abuse of notation $\sfE_{T_2}^\freq[v_j]$ and $\sfE_{T_2}^\freq[\tilde{B}_j]$. Integrating the first equation $\eqref{eq:v_1}$ from $T_1$ to $T_2$ over the four standard deviations $\Delta T$ of $r\delta_\theta$ gives 
    \begin{align*}
        v_{\max,j} - \sfE_{T_1}^\freq[v_j] &= -\int_{T_1}^{T_2}\frac{V\sfE^\freq_s[v_j]}{K + \sfE^\freq_s[v_j]}\; ds + r(\mathbbm{1}_\freq)_j\int_{T_1}^{T_2} \delta_\theta(s - \frac{T_2+T_1}{2})\;ds + r\freq \Delta T (\mathbbm{1}_\freq)_j \\
        &\;\;\;\; + \int_{T_1}^{T_2} \sfE_s^\freq[\tilde{B}_j]\;ds.
    \end{align*}
    We note that the second term on the right-hand side can be rewritten as 
    \begin{align*}
         \int_{T_1}^{T_2} \delta_\theta\left(s - \frac{T_2+T_1}{2}\right)\;ds &= \frac{1}{2\pi}\left[\int_{-\infty}^{\Delta T/(2\theta)}
         \exp\left(-\frac{u^2}{2}\right)\;du-\int_{-\infty}^{-\Delta T/(2\theta)}
         \exp\left(-\frac{u^2}{2}\right)\;du\right] \\
         &= \Phi\left(\frac{\Delta T}{2\theta}\right) - \Phi\left(-\frac{\Delta T}{2\theta}\right) \\
          &= 2\Phi\left(\frac{\Delta T}{2\theta}\right) -1 \\
         &= 2\Phi(2) -1,
    \end{align*}
    where $\Phi$ denotes the cumulative distribution function of the standard Normal Distribution. Further, since $T_1 \gg 1$, we can now approximate 
    \begin{align*}
        -\sfE_{T_1}^\freq[v_j] &\sim -\sfE_\infty^\freq[v_j] \quad\text{ as } \quad T_1 \to \infty,
    \end{align*}    
    and
    \begin{align*}
        -\int_{T_1}^{T_2}\frac{V\sfE^\freq_s[v_j]}{K + \sfE^\freq_s[v_j]}\; ds &\sim -\frac{V\sfE^\freq_\infty[v_j]}{K + \sfE^\freq_\infty[v_j]} \Delta T \quad\text{ as } \quad T_1 \to \infty. 
    \end{align*}
    Lastly, we can approximate the remaining integral by
    \begin{align*}
        \int_{T_1}^{T_2} \sfE_s^\freq[\tilde{B}_j]\;ds &\approx \frac{\Delta T}{n}\left(\frac{1}{2}\sfE_{s_0}^\freq[\tilde{B}_j]+\frac{1}{2}\sfE_{s_n}^\freq[\tilde{B}_j] + \sum_{k=1}^{n-1}\sfE_{s_k}^\freq[\tilde{B}_j]\right)  
        \\ &\approx \Delta T \left(\frac{1}{n}\sfE_{\infty}^\freq[\tilde{B}_j]+\frac{n-1}{n}\tilde{B}_{\max,j}\right) \\
        &= \left(\tilde{B}_{\max,j} - \sfE_{\infty}^\freq[\tilde{B}_j]\right) \frac{n-1}{n}\Delta T + \sfE_{\infty}^\freq[\tilde{B}_j]\Delta T.
    \end{align*}
   To see this, we first split the interval $[T_1, T_2]$ into $n$ equally spaced sub-intervals with endpoints $s_k$, $k\in\{0, ..., n\}$, and apply the trapezoidal rule. We then make the rough approximation of taking the interior points to be $\tilde{B}_{\max,j}$ and the two endpoints to be $\sfE_{\infty}^\freq[\tilde{B}_j]$. In this work, we choose $n=5$.
   \begin{remark}
       This choice of $n$ was found to give a close to optimal approximation error of our developed expression for the spike maxima for our firing frequency range $[2 \,\mathrm{Hz}, 32 \,\mathrm{Hz}]$, though other choices of $n$, such as $n=3$, which corresponds to Simpson's Rule, resulted in similar estimates. 
   \end{remark}
   
   In total, this leads to the following estimate for the spike maxima
    \begin{align} \label{eq:v_spike_todo}
        v_{\max,j} - \sfE_{\infty}^\freq[v_j] &= -\frac{V\sfE^\freq_\infty[v_j]}{K + \sfE^\freq_\infty[v_j]} \Delta T + (2\Phi(2) -1 + \freq \Delta T)\,r(\mathbbm{1}_\freq)_j + (\tilde{B}_{\max,j} - \sfE_{\infty}^\freq[\tilde{B}_j])\frac{4\Delta T}{5} \nonumber \\[0.5em]
        &\;\;\;\; + \sfE_{\infty}^\freq[\tilde{B}_j]\Delta T. 
    \end{align}
    Next we focus on obtaining an approximation for the difference $\tilde{B}_{\max,j} - \sfE_{\infty}^\freq[\tilde{B}_j]$ from the second equation $\eqref{eq:v_2}$. We note that without the additional single spike in $[T_1, T_2]$
    \begin{equation*}
        \int_0^{T_1} \sfE_s^\freq[B_j'] E_1(\sigma(T_2-s))\;ds \sim \int_0^{T_1} \sfE_s^\freq[B_j'] E_1(\sigma(T_1-s))\;ds - \int_{T_1}^{T_2} \sfE_s^\freq[B_j'] E_1(\sigma(T_2-s))\;ds,
    \end{equation*}
    since all integrals converge and $T_1, T_2\gg 1$. The latter integral can be approximated by zero because $\sfE_s^\freq[B_j'] = 0$ as $T_1\to\infty$. Also note, upon naming the kernel
    \begin{equation*}
        G_{jk}(t) := \frac{e^{-\sigma t}}{t} e^{-|\vx_j-\vx_k|^2/(4Dt)},
    \end{equation*}
    that, without the additional single spike in $[T_1, T_2]$,
    \begin{align*}
        \int_{0}^{T_1} \sfE^\freq_{s}[\tilde{B}_{k}] G_{jk}(T_2-s) \, ds 
        \sim \int_{0}^{T_1} \sfE^\freq_{s}[\tilde{B}_{k}] G_{jk}(T_1-s) \, ds - \int_{T_1}^{T_2} \sfE^\freq_{s}[\tilde{B}_{k}] G_{jk}(T_2-s) \, ds,
    \end{align*}
    since again all integrals converge and $T_1, T_2\gg 1$. The latter integral can be approximated by $\sfE_\infty^\freq[\tilde{B}_k]\int_{T_1}^{T_2} G_{jk}(T_2-s)\; ds$.
    
    All in all, subtracting \eqref{eq:v_2} at $T_1$ from $\eqref{eq:v_2}$ at $T_2$ including the additional single spike approximately leads to 
    \begin{align*}
            \int_{T_1}^{T_2} \sfE^\freq_{s}[\tilde{B}_j'] \;E_1(\sigma(T_2-s))\, ds = \eta_\infty (\tilde{B}_{\max,j}-\sfE_\infty^\freq[\tilde{B}_j]) + \gamma_\infty (v_{\max,j} - \sfE_\infty^\freq[v_j]) \nonumber \\
           \qquad + \sum_{\stackrel{k=1}{k\neq j}}^{N} \int_{T_1}^{T_2} \sfE^\freq_{s}[\tilde{B}_{k}] G_{jk}(T_2-s) \, ds - \sfE_\infty^\freq[\tilde{B}_k]\int_{T_1}^{T_2} G_{jk}(T_2-s)\;ds \,.
    \end{align*}
    All remaining integrals need to be approximated for us to arrive at an approximation for $\tilde{B}_{\max,j}-\sfE_\infty^\freq[\tilde{B}_j]$. We start with 
    \begin{eqnarray*}
        \int_{T_1}^{T_2} \sfE^\freq_{s}[\tilde{B}_j'] \;E_1(\sigma(T_2-s))\, ds &\approx& \frac{\tilde{B}_{\max,j} - \sfE_\infty^\freq[\tilde{B}_j]}{\Delta T} \int_{T_1}^{T_2} E_1(\sigma(T_2-s))\, ds \\
        &=& \frac{\tilde{B}_{\max,j} - \sfE_\infty^\freq[\tilde{B}_j]}{\Delta T}\big[s\,E_1(s)-e^{-s}\big]_0^{\sigma\Delta T} \\
        &=& (\tilde{B}_{\max,j} - \sfE_\infty^\freq[\tilde{B}_j])\left(E_1(\sigma\Delta T) + \frac{1 - e^{-\sigma\Delta T}}{\sigma\Delta T}\right),
    \end{eqnarray*}
    and then also obtain 
    \begin{eqnarray*}
        &&\int_{T_1}^{T_2} \sfE^\freq_{s}[\tilde{B}_{k}] G_{jk}(T_2-s) \, ds \\
        &\approx& \frac{\tilde{B}_{\max,k} + \sfE_\infty^\freq[\tilde{B}_k]}{2} \int_{T_1}^{T_2} G_{jk}(T_2-s)\;ds \\
        &=& \left(\frac{\tilde{B}_{\max,k} - \sfE_\infty^\freq[\tilde{B}_k]}{2} + \sfE_\infty^\freq[\tilde{B}_k]\right) \int_{T_1}^{T_2} G_{jk}(T_2-s)\;ds \\
        &=& \left(\frac{\tilde{B}_{\max,k} - \sfE_\infty^\freq[\tilde{B}_k]}{2} + \sfE_\infty^\freq[\tilde{B}_k]\right) \int_0^{\sigma\Delta T} \frac{e^{-s}}{s} e^{-|\vx_j-\vx_k|^2/(4Ds/\sigma)}\; ds \\
        &\approx& \left(\frac{\tilde{B}_{\max,k} - \sfE_\infty^\freq[\tilde{B}_k]}{2} + \sfE_\infty^\freq[\tilde{B}_k]\right) \int_0^{\sigma\Delta T} \frac{1-s}{s} e^{-|\vx_j-\vx_k|^2/(4Ds/\sigma)}\; ds \\
        &=& \left(\frac{\tilde{B}_{\max,k} - \sfE_\infty^\freq[\tilde{B}_k]}{2} + \sfE_\infty^\freq[\tilde{B}_k]\right) \left[E_1\left(\frac{|\vx_j-\vx_k|^2}{4Ds/\sigma}\right) + \frac{|\vx_j-\vx_k|^2}{4D/\sigma} E_1\left(\frac{|\vx_j-\vx_k|^2}{4Ds/\sigma}\right) - e^{-|\vx_j-\vx_k|^2/(4Ds/\sigma}) s\right]_0^{\sigma\Delta T} \\
        &=& \left(\frac{\tilde{B}_{\max,k} - \sfE_\infty^\freq[\tilde{B}_k]}{2} + \sfE_\infty^\freq[\tilde{B}_k]\right) \left(\left(1+\frac{|\vx_j-\vx_k|^2}{4D/\sigma} \right) E_1\left(\frac{|\vx_j-\vx_k|^2}{4D\Delta T}\right) - e^{-|\vx_j-\vx_k|^2/(4D\Delta T)}\sigma\Delta T\right) \\
        &=:& \left(\frac{\tilde{B}_{\max,k} - \sfE_\infty^\freq[\tilde{B}_k]}{2} + \sfE_\infty^\freq[\tilde{B}_k]\right) \mathcal{I}_{jk}(\Delta T).
    \end{eqnarray*}
    Hence, for $\tilde{B}_{\max,j} - \sfE_{\infty}^\freq[\tilde{B}_j]$, we arrive at
    \begin{align*}
            (\tilde{B}_{\max,j} - \sfE_\infty^\freq[\tilde{B}_j])\left(E_1(\sigma\Delta T) + \frac{1 - e^{-\sigma\Delta T}}{\sigma\Delta T}\right) &= \eta_\infty (\tilde{B}_{\max,j}-\sfE_\infty^\freq[\tilde{B}_j]) + \gamma_\infty (v_{\max,j} - \sfE_\infty^\freq[v_j]) \nonumber \\
           &\qquad + \sum_{\stackrel{k=1}{k\neq j}}^{N} \frac{1}{2} \mathcal{I}_{jk}(\Delta T)(\tilde{B}_{\max,k} - \sfE_\infty^\freq[\tilde{B}_k]) \,,
    \end{align*}    
    which can be written in vector-notation as
    {\begin{equation*}
        \tilde{M} (\tilde{B}_{\max} - \sfE_\infty^\freq[\tilde{B}]) = \gamma_\infty (\vv_{\max} - \sfE_\infty^\freq[\vv]),
    \end{equation*}
    where the matrix $\tilde{M}$ has entries
    \begin{equation*}
        \tilde{M}_{jj} = E_1(\sigma\Delta T) + \frac{1 - e^{-\sigma\Delta T}}{\sigma\Delta T} - \eta_\infty \quad \text{and} \quad \tilde{M}_{jk} = -\frac{1}{2} \mathcal{I}_{jk}(\Delta T), \quad j\neq k.
    \end{equation*}
    Using this estimate for the difference $\tilde{B}_{\max} - \sfE_{\infty}^\freq[\tilde{B}]$, \eqref{eq:v_spike_todo} becomes, in vector-notation,
    \begin{align}
        \vv_{\max} - \sfE_{\infty}^\freq[\vv] &= \left(I - \frac{4\Delta T}{5} \gamma_\infty \tilde{M}^{-1}\right)^{-1}\left(-V\sfE^\freq_\infty[\vv]\oslash\left(K\mathbbm{1} + \sfE^\freq_\infty[\vv]\right) \Delta T \right.\\
        &\hspace{0.5cm} \left.+ (2\Phi(2) -1 + \freq \Delta T)\,r\mathbbm{1}_\freq + \sfE_\infty^\freq[\tilde{\vB}]\Delta T\right).\nonumber
    \end{align}
    with all matrices generically being invertible. 
\section{Poisson-firing} \label{poissonSec}
Throughout this work, we made the simplifying assumption that the fibers fire periodically. However, the results easily extend to a stochastic setting where the firing times of the fibers are independent (or even dependent) and exponentially distributed, $T_{l-1}^{(k)}-T_l^{(k)} \sim \text{Exp}(1/\freq)$. Given a realization of firing times, $T_l^{(k)}(\omega)$, Corollary \ref{cor:2ode} can be used to solve for a sample path $u_j$, which yields the following system
\begin{subequations} \label{poissSystem}
\begin{align} 
    \frac{du_j}{dt} &= -\frac{Vu_j}{K+u_j} + r \sum_l \delta_\theta(t-k_RT_l^{(k)}(\omega)) + B_j(t) \,, \\
    \int_{0}^{t} B_j^{\prime}(s) E_1(\sigma(t-s))\, ds &= \eta_\infty B_{j}(t) + \gamma_\infty u_{j}(t) \nonumber \\
          & \qquad + \sum_{\stackrel{k=1}{k\neq j}}^{N} \int_{0}^{t} \frac{B_{k}(s) e^{-\sigma(t-s)}}{t-s} e^{-|\vx_j-\vx_k|^2/(4D(t-s))} \, ds\,.
\end{align}
\end{subequations}
In this framework, both $u_j$ and $B_j$ are random variables, and we study the evolution of their expectations across realizations, which are governed by the equations
\begin{subequations} \label{eq:expectation}
\begin{align} 
    \frac{d\bE[u_j]}{dt} &= -V\bE\left[\frac{u_j}{K+u_j}\right] + r \freq + \bE[B_j(t)] \,, \label{eq:exA}\\
    \int_{0}^{t} \bE[B_j(s)]^{\prime} E_1(\sigma(t-s))\, ds &= \eta_\infty \bE[B_j(t)] + \gamma_\infty \bE[u_{j}(t)] \nonumber \\
          & \qquad + \sum_{\stackrel{k=1}{k\neq j}}^{N} \int_{0}^{t} \frac{\bE[B_{k}(s)] e^{-\sigma(t-s)}}{t-s} e^{-|\vx_j-\vx_k|^2/(4D(t-s))} \, ds\,,
\end{align}
\end{subequations}
where we used Campbell's theorem~\cite{Kingman1993} to compute
the expectation of the random sum driven by these Poisson processes, i.e., $\bE\left[\sum_l \delta_\theta(t-k_RT_l^{(k)}(\omega))\right]$, obtaining simply $r\freq$. Similar to before, this system is analytically intractable due to the nonlinearity in \eqref{eq:exA}, but again by applying Jensen's inequality we can derive the following system for $v_j$ and $\tilde{B}_j$,
\begin{subequations} \label{eq:exApprx}
\begin{align} 
    \frac{dv_j}{dt} &= -V\frac{v_j}{K+v_j} + r \freq + \tilde{B}_j(t) \,, \\
    \int_{0}^{t} \tilde{B}_j^{\prime}(s) E_1(\sigma(t-s))\, ds &= \eta_\infty \tilde{B}_j(t) + \gamma_\infty v_{j}(t) \nonumber \\
          & \qquad + \sum_{\stackrel{k=1}{k\neq j}}^{N} \int_{0}^{t} \frac{\tilde{B}_{k}(s) e^{-\sigma(t-s)}}{t-s} e^{-|\vx_j-\vx_k|^2/(4D(t-s))} \, ds\,,
\end{align}
\end{subequations}
where $v_j(t) \le \bE[u_j(t)]$. Assuming that $v_j(t) \rightarrow v_{j,\infty}$ and $\tilde{B}_j(t)\rightarrow \tilde{B}_{j,\infty}$ as $t\rightarrow \infty$ for each $j \in \{1,...,N\}$, we can apply the results of \cite{pelzward2025synchronized} directly to find that the steady-state of $v_j(t)$ is provided by the following nonlinear algebraic system 
\begin{subequations} \label{poissSys}
\begin{align}
    0 &= -V\frac{v_{j,\infty}}{K+v_{j,\infty}} + r \freq + \tilde{B}_{j,\infty} \,, \\
    0 &= \eta_\infty \tilde{B}_{j,\infty} + \gamma_\infty v_{j,\infty} + 2\sum_{\stackrel{k=1}{k\neq j}}^{N} \tilde{B}_{k,\infty}K_0\left(\sqrt{\frac{\sigma}{D}}|x_j-x_k|\right),
\end{align}
\end{subequations}
where $K_0$, again, denotes the modified Bessel function of the second kind. \ref{poissonSM} provides an example of System~\eqref{eq:PDEsystem} with inputs driven by a Poisson process (Figures~\ref{fig:Pois25Hz} and \ref{fig:Euj}).

\section*{Author Contributions}
All authors have made substantial intellectual contributions to the study conception, 
execution, and design of the work. All authors have read and approved the final manuscript.  
In addition, the following contributions occurred:  Conceptualization: M.P., S.J., and G.H.; 
Methodology: M.P. and G.H.; Formal analysis and investigation: M.P. and G.H.; 
Writing - original draft preparation, review and editing: M.P., S.J., and G.H.; Funding acquisition: S.J. and G.H.

\section*{Access to Code} The MATLAB code to reproduce the main figures of the paper, along with Julia code for running the numerical algorithm and corresponding simulations, is available in a publicly accessible Zenodo repository (\url{https://doi.org/10.5281/zenodo.20159127}).

\bibliographystyle{siamplain}
{\footnotesize \bibliography{references}}

\pagebreak

\renewcommand{\thefootnote}{\fnsymbol{footnote}}

\renewcommand{\bfseries}{\fontseries{bx}\selectfont}
\begin{center}
	{\Large \textbf{Supplemental Materials: Compartmental-reaction diffusion framework for microscale dynamics of extracellular serotonin in brain tissue} \par}
	\vspace{1em}
	
	{Merlin Pelz\footnotemark[1],
		Skirmantas Janu\v{s}onis\footnotemark[2], and
		Gregory Handy\footnotemark[1]
		\par}
\end{center}

\footnotetext[1]{Department of Mathematics, University of Minnesota, Minneapolis, MN, USA
	(mpelz@umn.edu, ghandy@umn.edu).}
\footnotetext[2]{Department of Psychological \& Brain Sciences, University of California Santa Barbara, Santa Barbara, CA, USA (janusonis@ucsb.edu).}

\setcounter{equation}{0}
\setcounter{figure}{0}
\setcounter{table}{0}
\setcounter{page}{1}
\setcounter{section}{0}
\makeatletter
\renewcommand{\thepage}{SM\arabic{page}}
\renewcommand{\theequation}{S\arabic{equation}}
\renewcommand{\thefigure}{SM\arabic{figure}}
\renewcommand{\thesection}{SM\arabic{section}}


\renewcommand{\appendixformat}{%
	\titleformat{\section}
	{\normalfont\Large\bfseries}
	{\thesection}
	{1em}{}
}
\appendixformat

\section{Supplemental figures}
This section contains supplemental figures for the main text.

\begin{figure}[!htbp] 
	\centering
	\includegraphics[width=0.85\textwidth]{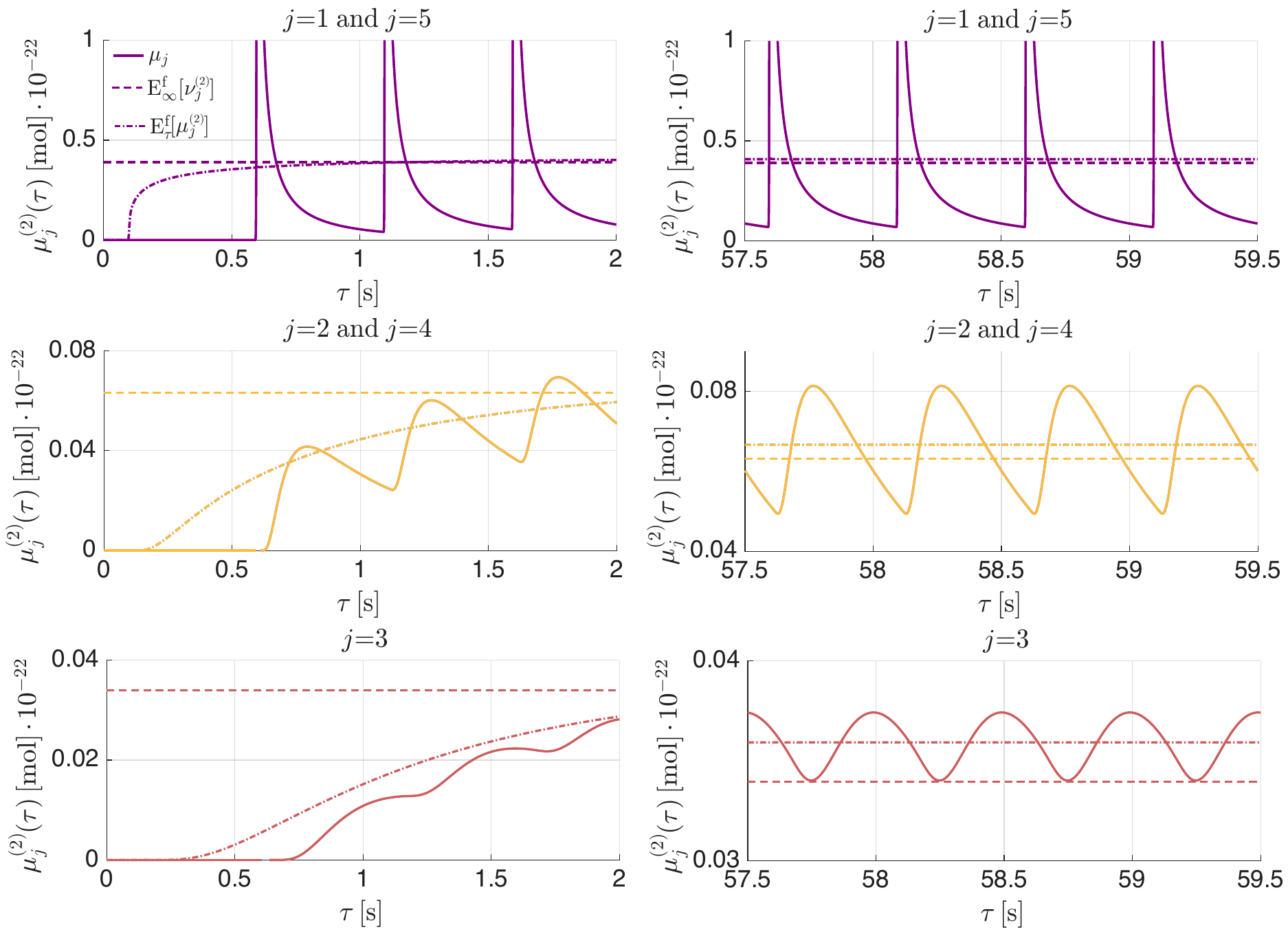}
	\caption{The dimensional serotonin amount $\mu_j^{(2)} (\tau)$ (solid line) along with its numerically estimated period-averaged dynamics $\sfE_\tau^\freq[\mu_j^{(2)}]$ as a dashed-dotted line and with the steady-state of the lower-bound period-averaged dynamics, $\sfE_\infty^\freq[\nu_j^{(2)}]$, as a dashed line for early (left) and late (right) time points. Here, we consider the same geometry as Figure~\ref{fig:2Hztwolines} (i.e., the system consists of five fibers, each with two varicosities, with the leftmost and rightmost fibers firing periodically at \SI{2}{\Hz}). All other parameters are set to their default values (see Table~\ref{tab:parameters}).}
	\label{fig:twolines_Compare}
\end{figure}

\begin{figure}[!htbp] 
	\centering
	\includegraphics[width=0.9\textwidth]{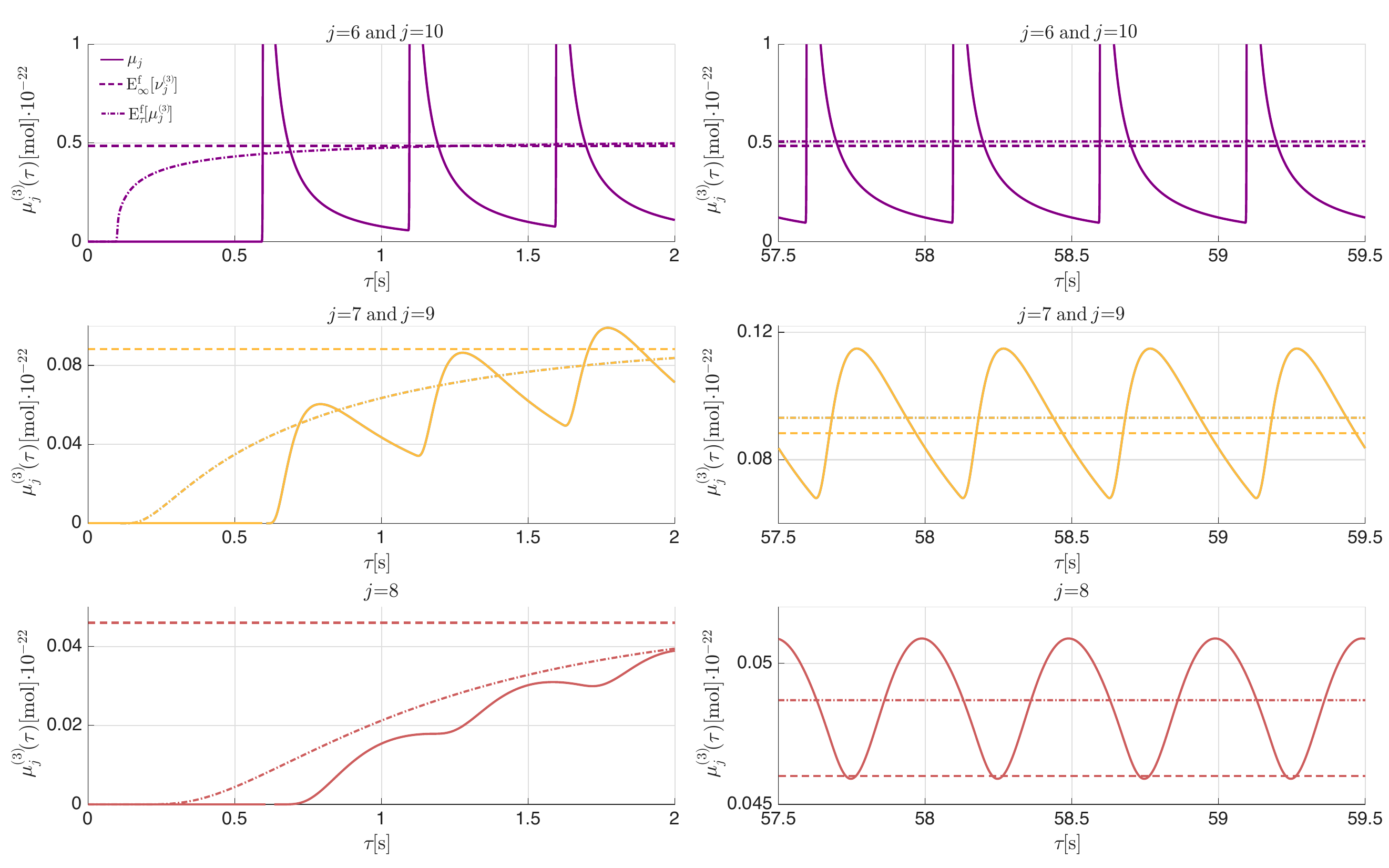}
	\caption{Same caption as Figure~\ref{fig:twolines_Compare}, but with the same geometry of Figure~\ref{fig:2Hzthreelines} (i.e., the system consists of five fibers, each with three varicosities, with the leftmost and rightmost fibers firing periodically at \SI{2}{\Hz}).}
	\label{fig:threelines_Compare}
\end{figure}

\section{Poisson firing example} \label{poissonSM}
Here we provide an example of System~\eqref{eq:PDEsystem} with inputs driven by a Poisson process, as described in Appendix \ref{poissonSec}. Figure \ref{fig:Pois25Hz} shows a realization of such dynamics with a Poisson process with an average firing rate of 25Hz.

\begin{figure}[!htbp] 
	\centering
	\includegraphics[width=0.95\textwidth]{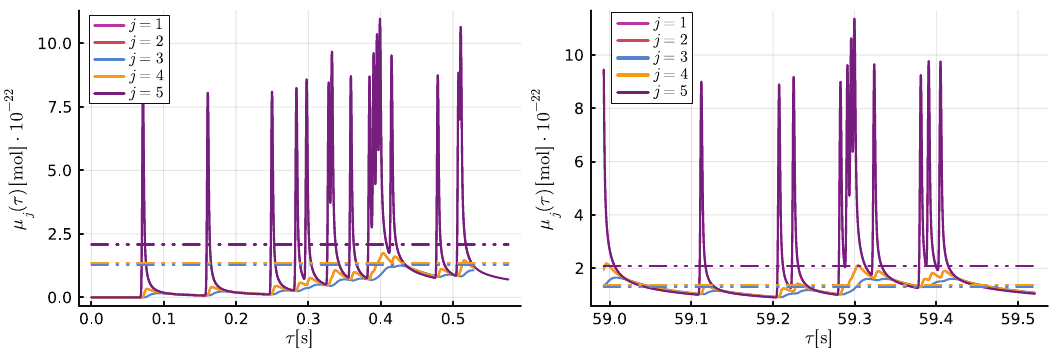}
	\caption{The dynamics of total serotonin amounts in the five varicosity neighborhoods along the 1-D section in 2-D space of Figure \ref{fig:example_fibers} when the left varicosity ($\mu_1$) and the right varicosity ($\mu_5$) fire in-phase distributed according the Poisson distribution with \SI{25}{\Hz} frequency in expectation. The dashed lines denote $\nu_{j,\infty}$, the lower bound on the expected steady-state across realizations derived from the nonlinear algebraic system \eqref{poissSys}. Due to the symmetry of the arrangement, the curves for $j=1$ and $j=2$ are precisely on the ones for $j=5$ and $j=4$, respectively. All other parameters are set to their default values (see Table~\ref{tab:parameters}).}
	\label{fig:Pois25Hz}
\end{figure}

The empirical approximation to the means of the $\mu_j$ with this Poisson firing (25 Hz) are depicted in Figure \ref{fig:Euj}. The sample means approach the expected steady-states of the lower-bound time-averaged kinetics $\nu_{j,\infty}$, $j\in\{1, 2, 3, 4, 5\}$, which are included in the plot as grey dashed lines.  

\begin{figure}[!htbp]
	\centering
	\includegraphics[width=0.95\linewidth]{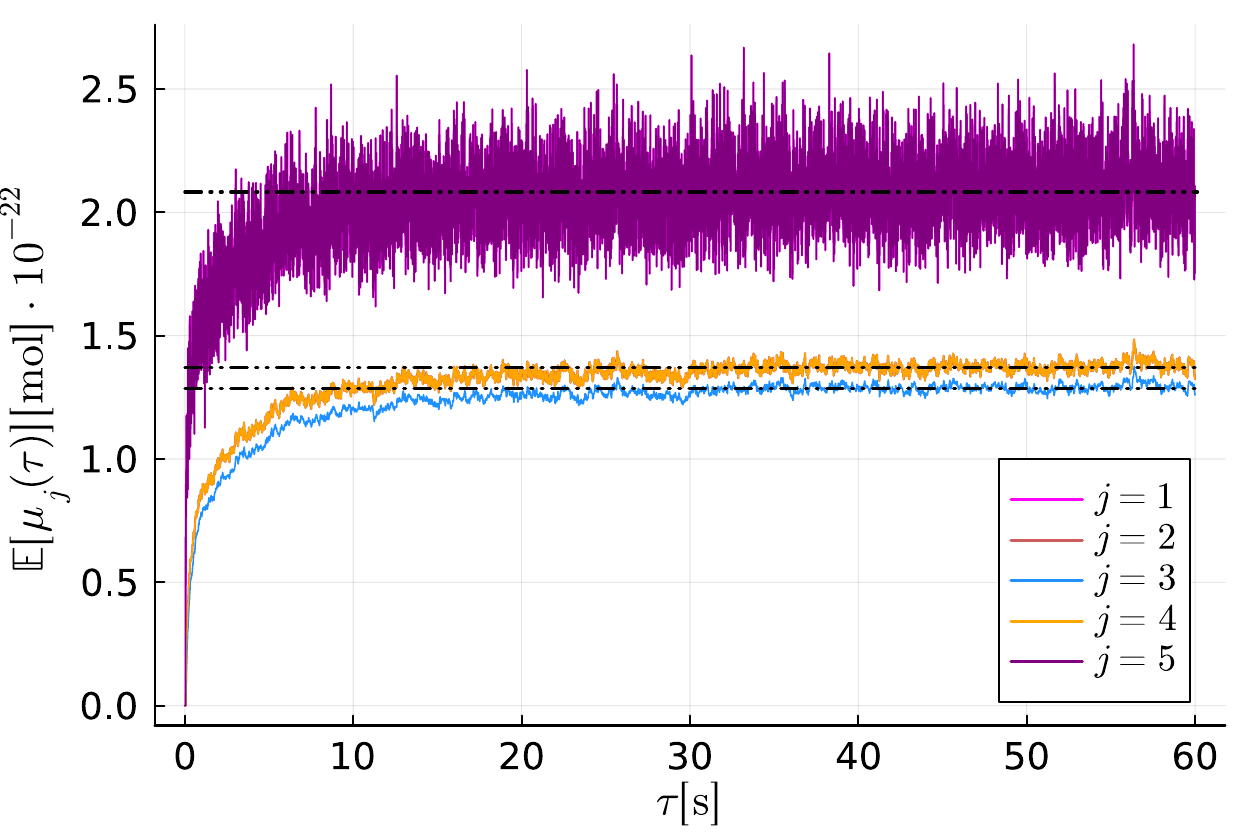}
	\caption{Approximation to $\bE[\mu_j]$ using 200 realizations computed from Equation \eqref{poissSystem} for the five-cell system along the 1-D section in $\bR^2$ shown in Figure \ref{fig:example_fibers} on the right. The dashed lines denote $\nu_{j,\infty}$, the lower bound on the expected steady-state across realizations derived from the nonlinear algebraic system \eqref{poissSys}. Note that the approximations at sites $j = 4$ and $5$ are equal to the ones of $2$ and $1$, respectively, and so only the values of $j \in \{3, 4, 5\}$ are visible. All other parameters are set to their default values (see Table~\ref{tab:parameters}).}
	\label{fig:Euj}
\end{figure}

\end{document}